\documentclass[twocolumn]{autart} 
\usepackage{tikz}
\usepackage{tikz-network}
\usepackage{amsmath}
\usepackage{amssymb}
\usepackage{amsfonts}
\usepackage{graphicx}
\usepackage{enumitem}
\usepackage{mathrsfs}
\usepackage{float}
\usepackage{cite}
\usepackage{soul}
\usepackage{tikz}
\usetikzlibrary{matrix,arrows,decorations.pathmorphing}
\usetikzlibrary{decorations.markings}
\usepackage{verbatim}
\usepackage{mathtools}
\usepackage{caption}
\usepackage{subcaption}
\usetikzlibrary{arrows}
\usepackage{tabularx}
\usepackage{xcolor}
\usepackage{inputenc}
\newtheorem{proposition}{Proposition}
\newtheorem{corollary}{Corollary}
\newtheorem{example}{Example}
\newtheorem{theorem}{Theorem}
\newtheorem{lemma}{Lemma}
\newtheorem{remark}{Remark}
\newtheorem{definition}{Definition}
\usepackage{array}
\usepackage{textcomp}
\usepackage{stfloats}
\usepackage{url}
\usepackage{verbatim}
\usepackage{graphicx}
\usepackage{cite}
\usepackage{blkarray}

\providecommand{\qedsymbol}{$\square$}
\newenvironment{proof}{\par\noindent\textbf{Proof. }\ignorespaces}{\hfill\qedsymbol\par}

\begin{document}
\begin{frontmatter}

\title{Invariance of Competition Outcomes in Hypergraph Competitive Dynamics}

\author[qksj]{Qi Zhao}\ead{zhaoqi\_1995@163.com}, 
\author[rugjbi]{Shaoxuan Cui}\ead{s.cui@rug.nl},    
\author[qkzd]{Baolin Zhang}\ead{zhangbl2006@163.com},
\author[qksj]{Junwei Du}\ead{djwqd@163.com},
\author[xidian]{Yuanshi Zheng}\ead{zhengyuanshi2005@163.com},  

\thanks[footnoteinfo]{The corresponding author is S. Cui. This work is partially financially supported by the Qingdao Natural Science Foundation (Grant No. 25-1-1-121-zyyd-jch), the opening project of Shandong Province Engineering Research Centre (Qingdao University of Science and Technology, Grant No.KF2024SD010), and the NSFC (Grant No. 62401310).}

\address[qksj]{School of Data Science, Qingdao University of Science and Technology, China} 
\address[rugjbi]{Bernoulli Institute, Faculty of Science and Engineering, University of Groningen, The Netherlands}  
\address[qkzd]{College of Automation and Electronic Engineering, Qingdao University of Science and Technology, China}             
\address[xidian]{Shaanxi Key Laboratory of Space Solar Power Station System, School of Mechano-electronic Engineering, Xidian University, China}        

\begin{keyword}                           
	Higher-order interactions, Hypergraphs, Lotka--Volterra competition, Winner-take-all, Stability analysis        
\end{keyword}                             

\begin{abstract}                          
	Winner-take-all (WTA)--type selection is a fundamental mechanism in networked competition, yet its dependence on higher-order interactions remains insufficiently understood. We study a Lotka--Volterra competitive dynamics on higher-order networks, where classical pairwise inhibition is augmented by multi-way interaction terms induced by hyperedges of uniform hypergraphs. The proposed model shows multiple competitive outcomes, including WTA, winner-share-all (WSA), and variant winner-take-all (VWTA). The existence, uniqueness and stability of equilibria are rigorously proved through mathematical analysis, which relies on classical stability theory and recent advances in tensor algebra. We show that the eventual selection outcome is relatively insensitive to the hyperedge order and the specific higher-order coupling structure, and is instead determined by a small set of interpretable scalar parameters, such as the ratio between self-inhibition and lateral-inhibition and the external inputs. Numerical experiments support the theory by showing that higher-order interactions affect convergence and steady states, yet yield the similar outcome taxonomy (WTA/WSA/VWTA) as in standard graphs. These results provide a network-scientific explanation of the robustness of WTA-type outcomes under complex group interactions and offer principled guidance for designing selection mechanisms on higher-order networks.
\end{abstract}

\end{frontmatter}

\section{Introduction}
WTA--type selection is a fundamental mechanism in networked competition, widely underlying resource allocation and selective activation processes in complex systems. In many real settings, interactions are not purely pairwise. Group-level non-additive couplings arise from collective contexts such as group communication, co-activity, and shared constraints. This motivates competitive dynamics on higher-order networks such as hypergraphs. Neural systems provide a representative and well-studied instance of such networked competition: inhibitory interactions among neurons are known to shape selective activation \cite{Han-2007}, decision-making \cite{Tang-2024}, and cognitive behavior \cite{Clark-2004}. Unlike generic dynamical systems, competitive neural dynamics often exhibits strong selection effects such as WTA and WSA, making it a natural testbed for developing rigorous theory on competitive outcomes in higher-order networked systems.

To model these competitive processes, researchers have drawn inspiration from ecological systems, in which species compete and interact in dynamic environments. The Lotka-Volterra (LV) model, initially proposed to depict predator--prey interactions \cite{Lotka-1920} and interspecies competition \cite{Volterra-1928}, has been introduced into neural networks to describe neuronal inhibition \cite{Xiao-2017} and selection \cite{Pavel-2011}. The paper \cite{Asai-1999} constructed LV-based neural circuits that exhibit WTA and WSA behaviors. Zhang and Yi \cite{ZhangYi-2002} investigated the stability and multi-stability of LV neural networks with time delays. The paper \cite{Es-saiydy-2022} further extended the model by incorporating time-varying and distributed delays, and explored the existence and stability of system solutions. In addition, the above model has been used to simulate metastable transitions in neural systems, providing theoretical insights into cognitive processes \cite{Bick-2017}. Fukai and Tanaka \cite{Fukai-97} demonstrated that LV-type equations derived from membrane dynamics can effectively model neuronal competition, revealing distinct behaviors such as WTA and WSA. Building on this, \cite{Yi-2013} applied LV neural networks to fMRI-based brain region detection. Zheng proposed a WTA LV recurrent neural network with $N\times N$ neurons and established sufficient conditions for stable equilibria with a unique winner per row/column in \cite{Bochuan-Zheng}. More recently, data-driven approaches using neural differential equations \cite{Devgupta-2024} and spatial models with non-local interactions \cite{Li-2021} have further expanded the applicability of LV model. The LV model combines simplicity and computational efficiency, which makes it suitable for simulating the dynamic evolution of neural populations. It has become a key tool for studying neuronal competition and is widely applied in computational neuroscience.

Despite these advances, most existing LV-based neural models are restricted to pairwise interaction\cite{Lotka-1920}-\cite{Li-2021}, typically represented by simple graphs. However, increasing evidence shows that real neural circuits involve complex dependencies among groups of neurons \cite{Abrams-1983}. To more accurately capture these higher-order interactions, researchers have adopted hypergraph structures. In a hypergraph, a single edge can connect more than two nodes, better reflecting the collective behavior of neural groups. 
Higher-order structures have been shown to drastically alter system dynamics in various fields. In \cite{Cui-SIS}, the authors demonstrated that hypergraph structures fundamentally alter the diffusion dynamics of SIS processes, inducing multi-stability phenomena absent in pairwise interaction models. For instance, synchronization models on hypergraphs can exhibit abrupt transitions, multistability, and explosive behaviors not seen in pairwise interaction systems. For example, \cite{Mayfield2017} proposed a simple framework to incorporate non-additive higher-order interactions into biodiversity models and confirms their significant impact on species performance in natural plant communities. Carletti and Fanelli proposed a general framework for modeling dynamics on hypergraphs in \cite{Carletti2020}, and showed that higher-order interactions can alter stability, conservation properties, and the structure of equilibria. Battiston et al. \cite{Battiston2021} showed that such structures can change system dynamics like synchronization and diffusion. From a computational perspective, Cui et al. \cite{Cui-DT2024, Cui2025Metzler} employed tensor-based Perron--Frobenius methods to analyze both discrete- and continuous-time systems on hypergraphs, providing rigorous frameworks for attractor characterization and stability analysis of high-order interaction dynamics. Notably, although higher-order interactions commonly impact system behavior, this paper presents a competition model wherein such interactions enhance modeling fidelity without altering the ultimate outcome or influencing factors of the WTA mechanism, which indicates that the outcome pattern maybe one of the natural characteristics of a homogeneous network competition.

Building on the modeling advances in higher-order networks and hypergraphs, researchers have extended the LV model to incorporate higher-order interactions. These extensions aim to capture non-additive effects and explain complex dynamics observed in empirical systems. Higher-order interactions that capture collective effects among three or more species have been shown to enhance predictive accuracy in a wide range of ecological conditions. For example, \cite{Letten-2019} extended the LV competition model with higher-order interactions and demonstrated through data and simulations their prevalence and contribution to predictive accuracy. In \cite{Sidhom-2024}, Sidhom and Galla introduced a stochastic version with random interaction tensors. Their results showed better system stability and diversity. At the same time, Sales-Pardo et al. \cite{Sales-Pardo-2023} proposed a generative model for predicting hyperedges. They found statistical links between higher-order and lower-order interactions. However, the inherent nonlinearity of these models makes their analytical treatment challenging. The influence of higher-order interactions on species coexistence has been examined through simulations \cite{Singh-2021} and through numerical and statistical physics approaches \cite{Gibbs-2022}. At present, rigorous mathematical proofs regarding the properties of equilibria remain limited. Recently, Cui et al. \cite{cui2024lotka} applied S-tensor theory to high-order LV systems, establishing equilibrium conditions via a polynomial complementarity formulation. While providing a tensor-based framework for nonlinear dynamics, their analysis does not offer detailed and comprehensive results for competitive scenarios.

Most existing models still fail to adequately explore higher-order dependencies in neural competition. In particular, much of the literature focuses on pairwise couplings, and the role of higher-order interactions in shaping competition dynamics remains underexplored. Moreover, the nonlinear and high-dimensional structure of such systems complicates rigorous analysis of equilibrium existence, uniqueness, and stability. To address these gaps, this paper proposes a higher-order LV competition model on hypergraphs. The main contributions of this work are summarized as follows:
\begin{enumerate}
	\item[(i)] We propose a high-order LV competition model based on uniform hypergraphs, which extends traditional pairwise interaction models to capture multi-neuron interactions in higher-order and even arbitrary-order network structures.
	\item[(ii)] We rigorously analyze the existence, uniqueness, and stability of equilibria under several competitive outcomes, including all-neurons coexistence, WSA, WTA, and VWTA.
	\item[(iii)] We obtain explicit and verifiable conditions for these results by leveraging properties of structured tensors.
	\item[(iv)] Numerical simulations corroborate the theory and demonstrate that higher-order couplings can affect convergence behavior and steady-state values, while preserving the invariant outcome under the same inputs and inhibition ratio, indicating that higher-order interactions enable a more realistic representation of neural coordination beyond pairwise couplings.
	\item[(v)] Importantly, we identify a structurally invariant competition outcome pattern in our model: while higher-order interactions modulate the dynamics, the fundamental competition outcome (WTA/WSA/VWTA) and its determining factors remain relatively robust. This indicates that the ratio between self-inhibition and lateral inhibition plays a more decisive role than the hyperedge order and the detailed interaction structure.
\end{enumerate}

The rest of this paper is structured as follows. The preliminaries about hypergraph, structured tensor theory and effective lemmas are introduced in Section \ref{preliminaries}. Section \ref{existence}-\ref{Application} contains the problem models and main theoretical results. In Section \ref{Simulations}, some numerical simulations are proposed to verify the validity of the designed algorithms. In the end, section \ref{conclusion} summarizes the main contents of the paper.

\section{PRELIMINARIES}\label{preliminaries}

This section introduces the foundational concepts and mathematical tools necessary for modeling and analyzing High-order competition mechanisms based on the LV model. We begin by presenting the hypergraph framework used to characterize higher-order interactions, followed by the tensor representations that support such modeling. Finally, we summarize a collection of relevant theoretical results related to stability analysis and tensor equations.

\subsection{Hypergraphs for High-Order Interaction Modeling}\label{Hypergraphs}
Hypergraphs generalize conventional graphs by allowing edges to connect more than two nodes simultaneously, making them a powerful tool for modeling higher-order interactions. We adopt the hypergraph formulation introduced in \cite{gallo1993directed}.

In this paper, we focus on uniform hypergraphs, where each hyperedge connects exactly $r$ distinct nodes. An $r$-uniform hypergraph is defined as a pair $\mathscr{H} = (\mathcal{V}, \mathcal{E})$, where $\mathcal{V}$ is the set of nodes and $\mathcal{E} \subseteq \binom{\mathcal{V}}{r}$ is the set of hyperedges, each containing exactly $r$ elements.

To incorporate interaction weights, we associate a symmetric order-$r$ tensor $A = [A_{i_1 i_2 \dots i_r}]$ with the hypergraph, where each entry represents the strength of the interaction among the corresponding $r$ nodes. When $r = 2$, this formulation reduces to a standard graph with an adjacency matrix $A_{ij}$.




\subsection{Tensors and Their Spectral Properties}\label{Tensors}

Tensors provide a natural algebraic framework for representing uniform hypergraph interactions. A tensor $A \in \mathbb{R}^{n \times n \times \cdots \times n} = \mathbb{R}^{[m, n]}$ is a multidimensional array of order $m$ and dimension $n$. In this paper, we consider only cubical and supersymmetric tensors, where all modes have the same dimension and the entries are invariant under any permutation of indices.

The identity tensor $\mathcal{I} = (\delta_{i_1 \cdots i_m})$ is defined elementwise as
\begin{equation*}
	\delta_{i_1 \cdots i_m} =
	\begin{cases}
		1, & \text{if } i_1 = i_2 = \cdots = i_m, \\
		0, & \text{otherwise}.
	\end{cases}
\end{equation*}

For vectors $A x^{m-1}$ and $x^{[m-1]}$, whose $i$-th components are
\begin{equation*}
	\begin{aligned}
		(A x^{m-1})_i & = \sum_{i_2, \ldots, i_m=1}^n A_{i, i_2, \ldots, i_m} x_{i_2} \cdots x_{i_m}, \\
		(x^{[m-1]})_i & = x_i^{m-1}.
	\end{aligned}
\end{equation*}
where $m$ denote the order of the tensor $A$.

We now turn our attention to the eigenvalue-eigenvector problem for tensors:
\begin{equation}\label{eq:eigenproblem}
	A x^{m-1} = \lambda x^{[m-1]},
\end{equation}
If there exists a real number \( \lambda \) and a nonzero real vector \( x \) that satisfy \eqref{eq:eigenproblem}, then \( \lambda \) is termed an H-eigenvalue of \( A \), and \( x \) is the corresponding H-eigenvector of \( A \), where \( \lambda \in \mathbb{R} \) and \( x \in \mathbb{R}^n \setminus \{0\} \). Throughout this paper, the terms eigenvalue and eigenvector are used synonymously with H-eigenvalue and H-eigenvector, respectively. The spectral radius of $A$ is
\[
\rho(A) = \max \{|\lambda| : \lambda \text{ is an eigenvalue of } A\}.
\]

\subsection{Structured Tensors and Their Properties}

In this subsection, we present several special tensor classes that are fundamental to the analysis of nonlinear tensor equations, particularly those arising from high-order interaction models. We also summarize key results ensuring the existence and uniqueness of positive solutions.

Let $A \in \mathbb{R}^{[m,n]}$ be an $m$-order and $n$-dimension real tensor. The identity tensor $\mathcal{I} \in \mathbb{R}^{[m,n]}$ is defined by
\[
\mathcal{I}_{i_1 i_2 \dots i_m} =
\begin{cases}
	1, & \text{if } i_1 = i_2 = \dots = i_m, \\
	0, & \text{otherwise}.
\end{cases}
\]

\begin{definition}[\cite{Ding2013}]
	A tensor $A \in \mathbb{R}^{[m,n]}$ is referred to as an \emph{$\mathcal{M}$-tensor} if it admits a decomposition of the form $A = s\mathcal{I} - B$, where $B \in \mathbb{R}^{[m,n]}$ is a nonnegative tensor and $s \in \mathbb{R}$ satisfies $s \geq \rho(B)$. If the inequality is strict, i.e., $s > \rho(B)$, then $A$ is called a \emph{nonsingular $\mathcal{M}$-tensor}.
\end{definition}

\begin{definition}[\cite{Cui2024}]
	A tensor $A \in \mathbb{R}^{[m,n]}$ is called a \emph{Metzler tensor} if it can be expressed as $A = s\mathcal{I} + B$, where $B$ is a nonnegative tensor and $s \in \mathbb{R}$. This implies that all off-diagonal entries of $A$ are nonnegative.
\end{definition}

\begin{definition}[\cite{wangxuezhou2019}]
	Given a tensor $A = (A_{i_1 i_2 \dots i_m}) \in \mathbb{R}^{[m,n]}$, its \emph{comparison tensor} $\langle A \rangle = (m_{i_1 i_2 \dots i_m})$ is defined componentwise by
	\[
	m_{i_1 i_2 \dots i_m} =
	\begin{cases}
		|A_{i_1 i_2 \dots i_m}|, & \text{if } (i_2, \dots, i_m) = (i_1, \dots, i_1), \\
		-|A_{i_1 i_2 \dots i_m}|, & \text{otherwise}.
	\end{cases}
	\]
	Tensor $A$ is called an \emph{$\mathcal{H}$-tensor} if its comparison tensor $\langle A \rangle$ is an $\mathcal{M}$-tensor. It is called a \emph{nonsingular $\mathcal{H}$-tensor} if $\langle A \rangle$ is a nonsingular $\mathcal{M}$-tensor. In addition, if all diagonal entries $A_{i i \dots i}$ are positive, then $A$ is called an \emph{$\mathcal{H}^{+}$-tensor}.
\end{definition}

\begin{definition}[\cite{Qi2005}]
	A tensor $A \in \mathbb{R}^{[m,n]}$ is said to be \emph{diagonally dominant} if for all $i = 1,2,\dots,n$, it holds that
	\[
	|A_{i i \dots i}| \geq \sum_{(i_2, \dots, i_m) \neq (i, \dots, i)} |A_{i i_2 \dots i_m}|,
	\]
	if the above inequality is strict, then $A$ is referred to as a strictly diagonally dominant tensor.
\end{definition}

\begin{remark}\label{Diagonal-dominance}
	Diagonal dominance implies that the eigenvalues of a tensor are mainly determined by its diagonal entries, as shown in \cite{Qi2005} through a Gershgorin-type theorem. This structural property supports the positive definiteness of the polynomial form, which is crucial for ensuring the stability of nonlinear systems. In applications such as neural dynamics or biological competition modeling, diagonal dominance reflects strong self-regulation that suppresses the influence of external interactions. 
\end{remark}

It is known that both nonsingular $\mathcal{M}$-tensors and strictly diagonally dominant tensors with positive diagonal elements are special cases of $\mathcal{H}^{+}$-tensors. These tensors enjoy favorable properties concerning the solvability of certain tensor equations. Next, we present several lemmas that establish the existence and uniqueness of positive solutions to nonlinear tensor equations under structural assumptions on the involved tensors.

\begin{lemma}[\cite{wangxuezhou2019}]\label{unique-positive-solution-1}
	Let $\mathcal{A} \in \mathbb{R}^{[m,n]}$ be a strictly diagonally dominant tensor with all diagonal entries $\mathcal{A}_{i i \dots i} > 0$. Then, for any positive vector $b \in \mathbb{R}^n$, the equation $\mathcal{A} x^{m-1} = b$ admits a unique positive solution $x \in \mathbb{R}^n$.
\end{lemma}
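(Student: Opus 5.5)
Since this lemma is cited from \cite{wangxuezhou2019}, the natural route is a self-contained argument that splits into existence and uniqueness, both driven by strict diagonal dominance. Write the equation componentwise as
\[
a_i x_i^{m-1} + \sum_{(i_2,\dots,i_m)\neq(i,\dots,i)} \mathcal{A}_{i i_2\dots i_m} x_{i_2}\cdots x_{i_m} = b_i, \qquad i=1,\dots,n,
\]
where $a_i=\mathcal{A}_{ii\dots i}>0$. Set $r_i=\sum_{(i_2,\dots,i_m)\neq(i,\dots,i)}|\mathcal{A}_{ii_2\dots i_m}|<a_i$. The plan is to obtain existence by a fixed-point or degree argument on a box that is invariant under a suitable map, and uniqueness by a comparison argument at the index of extremal ratio.

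\textbf{Step 1 (a priori bounds).} Let $x>0$ be any solution and $k=\arg\max_j x_j$. The $k$-th equation gives
\[
b_k \ge a_k x_k^{m-1}-r_k x_k^{m-1},
\]
so $\max_j x_j\le U:=\max_i \bigl(b_i/(a_i-r_i)\bigr)^{1/(m-1)}$. For a lower bound, take $\ell=\arg\min_j x_j$. Then $a_\ell x_\ell^{m-1}\ge b_\ell-r_\ell U^{m-1}$, but this can be negative, so it gives no positive lower bound. I would therefore keep positivity from the structure of the map instead.

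\textbf{Step 2 (existence).} Consider the homotopy $H(x,t)=t\mathcal{A}x^{m-1}+(1-t)D x^{[m-1]}-b$ with $D=\mathrm{diag}(a_i)$. Every tensor $t\mathcal{A}+(1-t)D\mathcal{I}$ is again strictly diagonally dominant with positive diagonal, so the bound of Step 1 applies uniformly in $t$. To keep solutions away from the boundary of the positive orthant, suppose some component tends to $0$, and take the smallest one, $x_\ell\to0$. Then the $\ell$-th equation has left side $\le a_\ell x_\ell^{m-1}+\sum|\mathcal{A}|\,x_{i_2}\cdots x_{i_m}$. This bound does not by itself force a contradiction. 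The \textbf{main obstacle} is therefore excluding solutions with zero or negative components.

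I would try first to show directly that any real solution must be positive. Assume $x_\ell=\min_j x_j\le0$ and compare $|x_\ell|$ with $\max_j|x_j|$. If $|x_\ell|$ is the maximal modulus, dominance gives the sign of $(\mathcal{A}x^{m-1})_\ell$ as that of $x_\ell^{m-1}$, contradicting $b_\ell>0$ when $m$ is even. If instead the maximal modulus is attained at a positive component, that component's equation stays consistent, and I would run an induction or ordering argument on the components. If this fails, the fallback is the cited result itself. That result uses the $\mathcal{H}^+$ structure: $\mathcal{A}$ strictly diagonally dominant with positive diagonal is a nonsingular $\mathcal{H}^+$-tensor, and \cite{wangxuezhou2019} proves solvability for such tensors via the comparison tensor $\langle\mathcal{A}\rangle$, which is a nonsingular $\mathcal{M}$-tensor. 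For $\mathcal{M}$-tensors, positive solvability is classical (Ding--Wei): the iteration $x^{(k+1)}=\bigl(D^{-1}(b+Bx^{(k)m-1})\bigr)^{[1/(m-1)]}$ is monotone and bounded.

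\textbf{Step 3 (uniqueness).} Given two positive solutions $x$ and $y$, let $t=\max_i x_i/y_i$, attained at $k$, and suppose $t>1$. In the $\mathcal{M}$-tensor case, $B\ge0$ yields
\[
b_k=a_kx_k^{m-1}-(Bx^{m-1})_k\le a_kt^{m-1}y_k^{m-1}-(Bx^{m-1})_k .
\]
Combined with $x\le ty$ this leads to $(t^{m-1}-1)b_k\le0$, a contradiction; symmetrically $t<1$ is impossible, so $x=y$. For general signs, I would carry out this comparison on the $\mathcal{H}$-structure through the positive diagonal scaling that realizes dominance, as in \cite{wangxuezhou2019}.
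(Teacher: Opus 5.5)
The paper gives no proof of this lemma (it only cites \cite{wangxuezhou2019}), so your argument has to stand on its own, and it does not. The gap is in Step~2, and it cannot be closed. The case you leave open really occurs: either the maximal modulus is attained at a positive component, or the homotopy path reaches the boundary of the orthant. This happens because the lemma as stated, with no sign restriction on the off-diagonal entries, is false.

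Already for $m=2$ it fails. The matrix $\mathcal{A}=\begin{pmatrix}1&1/2\\0&1\end{pmatrix}$ is strictly diagonally dominant with positive diagonal. Yet for $b=(1,4)^\top$ the only solution is $(-1,4)^\top$. Your homotopy solution is $x(t)=(1-2t,4)^\top$, which leaves the orthant at $t=\tfrac12$.

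For $m=3$, $n=2$, take the supersymmetric tensor $\mathcal{A}=0.3\,\mathcal{E}+0.7\,\mathcal{I}$, whose diagonal entries are $1$ and whose off-diagonal row sums are $0.9$. Take $b=(1,100)^\top$. Then
\[
(\mathcal{A}x^{2})_i=0.3\,(x_1+x_2)^2+0.7\,x_i^2,\qquad i=1,2 .
\]
For $x\ge0$, the first equation gives $x_2\le x_1+x_2\le\sqrt{10/3}$, while the second gives $0.7\,x_2^2\ge 99$. So there is no nonnegative solution at all, and the paper's symmetry assumption does not help. This agrees with Proposition~\ref{prop:ones_spos_noJ}(iii), since here $H(b_{\min}^-)=\tfrac{3}{7}\cdot 99>b_{\min}$. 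No ordering, induction or degree argument can keep solutions inside the open orthant in general.

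Your fallback does not rescue this. That $\langle\mathcal{A}\rangle$ is a nonsingular $\mathcal{M}$-tensor gives, via the Ding--Wei iteration, a positive solution of $\langle\mathcal{A}\rangle x^{m-1}=b$. It says nothing about $\mathcal{A}x^{m-1}=b$. The positive off-diagonal entries of $\mathcal{A}$, which are exactly what destroys positivity above, become negative in $\langle\mathcal{A}\rangle$. Likewise, Step~3 uses $Bx^{m-1}\le t^{m-1}By^{m-1}$ (from $x\le ty$), which needs $B\ge0$. A positive diagonal scaling changes magnitudes, not signs.

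What your argument does establish is the corrected statement in which all off-diagonal entries of $\mathcal{A}$ are nonpositive, so that $\mathcal{A}$ is a nonsingular $\mathcal{M}$-tensor. In that case:
\begin{enumerate}
\item[(a)] The iteration $x^{(j+1)}=\bigl(D^{-1}(b+B(x^{(j)})^{m-1})\bigr)^{[1/(m-1)]}$ started at $x^{(0)}=0$ is nondecreasing.
\item[(b)] It stays below $U\mathbf 1$ by your Step~1 bound, because $b_i+r_iU^{m-1}\le a_iU^{m-1}$.
\item[(c)] It therefore converges to a solution $x\ge x^{(1)}>0$, and your Step~3 then gives uniqueness.
\end{enumerate}
Note that the paper invokes the lemma in Theorem~\ref{theorem-1-equilibrium-points} for $-A^{(\mathcal D)}$, whose off-diagonal entries equal $k>0$. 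That lies outside the corrected version, and there existence genuinely requires an extra condition on $b$, such as the one in Proposition~\ref{prop:ones_spos_noJ}.
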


\begin{lemma}[\cite{wangxuezhou2019}]\label{unique-positive-solution-2}
	If $\mathcal{A} \in \mathbb{R}^{[m,n]}$ is an $\mathcal{H}^{+}$-tensor, then the equation $\mathcal{A} x^{m-1} = b$ has a unique positive solution for every $b > 0$.
\end{lemma}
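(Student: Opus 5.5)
The plan is to reduce Lemma~\ref{unique-positive-solution-2} to Lemma~\ref{unique-positive-solution-1} by a positive diagonal rescaling. I interpret $\mathcal{H}^{+}$ in the sense of the source \cite{wangxuezhou2019}: $\langle \mathcal{A}\rangle$ is a \emph{nonsingular} $\mathcal{M}$-tensor and $\mathcal{A}_{ii\dots i}>0$. Nonsingularity is needed. For example, $\mathcal{A}=\mathcal{I}-\mathcal{J}$ with $\mathcal{J}$ the all-ones tensor of dimension $n\ge2$ has $\langle\mathcal{A}\rangle$ a singular $\mathcal{M}$-tensor, but $(\mathcal{A}x^{m-1})_i = x_i^{m-1}-(\sum_j x_j)^{m-1}<0$ for every $x>0$. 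So no positive solution exists for $b>0$.

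\textbf{Step 1: a positive vector certifying nonsingularity.} Write $\langle \mathcal{A}\rangle = s\mathcal{I}-\mathcal{B}$ with $\mathcal{B}\ge 0$ and $s>\rho(\mathcal{B})$. The goal is a vector $d>0$ with $\langle\mathcal{A}\rangle d^{m-1}>0$.
\begin{itemize}
\item For small $\varepsilon>0$, set $\mathcal{B}_\varepsilon=\mathcal{B}+\varepsilon\mathcal{J}$, which is a positive tensor.
\item By the Perron--Frobenius theorem for positive tensors, $\mathcal{B}_\varepsilon$ has a positive eigenvector $d$ with eigenvalue $\rho(\mathcal{B}_\varepsilon)$.
\item The spectral radius of nonnegative tensors is continuous, so $\rho(\mathcal{B}_\varepsilon)<s$ once $\varepsilon$ is small enough.
\item Since $\mathcal{B}\le\mathcal{B}_\varepsilon$ entrywise and $d>0$,
\[
\langle\mathcal{A}\rangle d^{m-1} = s d^{[m-1]}-\mathcal{B}d^{m-1}\ \ge\ \bigl(s-\rho(\mathcal{B}_\varepsilon)\bigr)d^{[m-1]}>0 .
\]
\end{itemize}
I expect this to be the main obstacle. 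It relies on the Perron--Frobenius theory for tensors and on continuity of $\rho$. Alternatively, one can quote the known characterization of nonsingular $\mathcal{M}$-tensors from \cite{Ding2013}: there exists $d>0$ with $\langle\mathcal{A}\rangle d^{m-1}>0$.

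\textbf{Step 2: rescaling to strict diagonal dominance.} Let $D=\mathrm{diag}(d)$ and define
\[
\mathcal{C}_{i i_2\dots i_m}=\mathcal{A}_{i i_2\dots i_m}\,d_{i_2}\cdots d_{i_m}.
\]
Writing out the $i$-th component of $\langle\mathcal{A}\rangle d^{m-1}>0$ gives
\[
|\mathcal{A}_{i\dots i}|\,d_i^{m-1}>\sum_{(i_2,\dots,i_m)\neq(i,\dots,i)}|\mathcal{A}_{i i_2\dots i_m}|\,d_{i_2}\cdots d_{i_m}.
\]
This says exactly that $\mathcal{C}$ is strictly diagonally dominant. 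Its diagonal entries are $\mathcal{C}_{i\dots i}=\mathcal{A}_{i\dots i}d_i^{m-1}>0$.

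\textbf{Step 3: transfer of solutions.} For $x=Dy$ we have $\mathcal{A}x^{m-1}=\mathcal{C}y^{m-1}$, because each monomial $x_{i_2}\cdots x_{i_m}$ equals $d_{i_2}\cdots d_{i_m}\,y_{i_2}\cdots y_{i_m}$. Since $D$ is a positive diagonal matrix, $x\mapsto D^{-1}x$ is a bijection of the positive orthant onto itself. Hence the positive solutions of $\mathcal{A}x^{m-1}=b$ correspond one-to-one with the positive solutions of $\mathcal{C}y^{m-1}=b$. By Lemma~\ref{unique-positive-solution-1}, the latter equation has exactly one positive solution for each $b>0$, and therefore so does the former.
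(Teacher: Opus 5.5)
The paper gives no proof of this lemma. It is quoted from \cite{wangxuezhou2019} and used as a black box, so there is no in-paper argument to compare against.

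Your diagonal-scaling reduction to Lemma~\ref{unique-positive-solution-1} is correct and self-contained. Step~1 is sound. The continuity of $\rho$ on nonnegative tensors holds because the eigenvalues are the roots of the characteristic polynomial, whose coefficients are polynomial in the entries. Alternatively, you can quote the semipositivity characterization from \cite{Ding2013} directly. In Step~2 you correctly use $\mathcal{A}_{i\dots i}>0$ itself, not just $|\mathcal{A}_{i\dots i}|>0$, to give $\mathcal{C}$ a positive diagonal. Step~3's substitution $x=Dy$ is an exact bijection of positive solutions, so Lemma~\ref{unique-positive-solution-1} closes the argument.

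One correction to the side remark you use to justify the nonsingular reading: $\mathcal{I}-\mathcal{J}$ does not do what you claim.
\begin{itemize}
\item Its diagonal entries are $1-1=0$, so it is not $\mathcal{H}^{+}$ under any reading.
\item Its comparison tensor $\langle\mathcal{I}-\mathcal{J}\rangle=\mathcal{I}-\mathcal{J}$ is not an $\mathcal{M}$-tensor at all. Any splitting $s\mathcal{I}-\mathcal{B}$ forces $\mathcal{B}=(s-1)\mathcal{I}+\mathcal{J}$, and then $\rho(\mathcal{B})=s-1+n^{m-1}>s$ for $n\ge 2$.
\end{itemize}

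The example you want is $\mathcal{A}=n^{m-1}\mathcal{I}-\mathcal{J}$.
\begin{itemize}
\item Its diagonal entries $n^{m-1}-1$ are positive.
\item $\langle\mathcal{A}\rangle=\mathcal{A}$ is a singular $\mathcal{M}$-tensor, because $\rho(\mathcal{J})=n^{m-1}$ with Perron vector $\mathbf{1}$.
\item At an index $i$ minimizing $x_i$ one has $(\mathcal{A}x^{m-1})_i=(nx_i)^{m-1}-(\sum_j x_j)^{m-1}\le 0$. Hence $\mathcal{A}x^{m-1}=b$ has no positive solution for any $b>0$.
\end{itemize}

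This does not affect your proof, which only uses the nonsingular reading. That reading is forced by this example. It also matches the paper's wording and its remark that nonsingular $\mathcal{M}$-tensors are special cases of $\mathcal{H}^{+}$-tensors.
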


It is worth mentioning that a strictly diagonally dominant tensor with all positive diagonal entries is an $\mathcal{H}^{+}$-tensor \cite{wangxuezhou2019}.

Furthermore, we develop the following results for the later system analysis.
\begin{proposition}\label{prop:all_ones_tensor_unique_pos}
	Let $\mathcal A\in\mathbb R^{[k,n]}$ be the order-$k$ ($k\ge 2$) and dimension-$n$ tensor whose entries satisfy
	$a_{i_1\cdots i_k}=1$ for all $i_1,\dots,i_k\in\{1,\dots,n\}$.
	Consider the tensor equation
	\begin{equation}\label{eq:all_ones_tensor_eq}
		\mathcal A x^{k-1}=b,
	\end{equation}
	where $b\in\mathbb R^n$ and $x\in\mathbb R^n$.
	Then the following statements hold.
	\begin{enumerate}
		\item[(i)] \eqref{eq:all_ones_tensor_eq} has a positive solution $x\in\mathbb R_{++}^n$ if and only if
		\begin{equation}\label{eq:b_const_vector}
			b=c\mathbf 1,\qquad c>0,
		\end{equation}
		where $\mathbf 1=(1,\dots,1)^\top\in\mathbb R^n$.
		In this case, every positive solution satisfies
		\begin{equation}\label{eq:sum_constraint}
			\sum_{j=1}^n x_j = c^{\frac{1}{k-1}}.
		\end{equation}
		
		\item[(ii)] \eqref{eq:all_ones_tensor_eq} has a \emph{unique} positive solution if and only if $n=1$ and
		$b=c$ with scalar $c>0$. In that case, the unique positive solution is
		\[
		x=c^{\frac{1}{k-1}}.
		\]
		Moreover, if $n\ge 2$ and $b=c\mathbf{1}$ with $c>0$, then \eqref{eq:all_ones_tensor_eq} admits infinitely many
		positive solutions, namely all $x\in\mathbb R_{++}^n$ satisfying \eqref{eq:sum_constraint}.
	\end{enumerate}
\end{proposition}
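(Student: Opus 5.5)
The plan is to compute $\mathcal A x^{k-1}$ explicitly and then read off both parts from the resulting scalar equation. Since every entry of $\mathcal A$ equals $1$, for each $i$ we have
\[
(\mathcal A x^{k-1})_i=\sum_{i_2,\dots,i_k=1}^n x_{i_2}\cdots x_{i_k}=\Big(\sum_{j=1}^n x_j\Big)^{k-1},
\]
by expanding the $(k-1)$-fold product of the sum. Write $s(x)=\sum_j x_j$. Then $\mathcal A x^{k-1}=s(x)^{k-1}\mathbf 1$ for every $x\in\mathbb R^n$, so \eqref{eq:all_ones_tensor_eq} is equivalent to the system $s(x)^{k-1}=b_i$ for all $i$. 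Everything else follows from this identity.

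For (i), necessity: if $x\in\mathbb R^n_{++}$ solves \eqref{eq:all_ones_tensor_eq}, then $s(x)>0$. Hence every component $b_i$ equals the same number $c:=s(x)^{k-1}>0$, which gives \eqref{eq:b_const_vector}. Taking the positive $(k-1)$-th root, which is legitimate because $s(x)>0$ even when $k-1$ is even, yields \eqref{eq:sum_constraint}.

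Sufficiency: given $b=c\mathbf 1$ with $c>0$, the vector $x=\frac{c^{1/(k-1)}}{n}\mathbf 1$ is positive and satisfies $s(x)^{k-1}=c$, so it solves the equation. The same computation shows that any positive $x$ with $s(x)=c^{1/(k-1)}$ is a solution. Combined with necessity, the positive solution set is therefore exactly $\{x\in\mathbb R^n_{++}: s(x)=c^{1/(k-1)}\}$.

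For (ii), first let $n\ge2$ and $b=c\mathbf 1$. The solution set just described is a relatively open piece of a hyperplane intersected with the open orthant, of dimension $n-1\ge1$. Concretely, the family $x_t=\big(t,\;\frac{c^{1/(k-1)}-t}{n-1},\dots\big)$ for $t\in(0,c^{1/(k-1)})$ gives infinitely many distinct positive solutions, which proves the ``moreover'' claim.

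For the uniqueness characterization: if a unique positive solution exists, then by (i) $b=c\mathbf 1$ with $c>0$, and the previous argument forces $n=1$. Conversely, when $n=1$ the equation reads $x^{k-1}=c$, and on $(0,\infty)$ the map $x\mapsto x^{k-1}$ is strictly increasing and bijective. So $x=c^{1/(k-1)}$ is the unique positive solution.

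The only point needing care is the parity of $k-1$ when extracting roots. Because we restrict to positive $x$ and $c>0$, the positive root is always well defined and unique, so this step causes no real obstruction. The multinomial expansion of $s(x)^{k-1}$ is routine.
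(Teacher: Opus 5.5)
Your proposal is correct and takes the same route as the paper's proof: both reduce the equation via the identity $\mathcal A x^{k-1}=\big(\sum_{j} x_j\big)^{k-1}\mathbf 1$, then read off existence, the sum constraint, uniqueness for $n=1$, and infinitely many solutions for $n\ge 2$. Your explicit witness $x=\frac{c^{1/(k-1)}}{n}\mathbf 1$ and the one-parameter family $x_t$ simply make concrete what the paper phrases as ``pick any $x\in\mathbb R_{++}^n$ with $\sum_j x_j=s$''.
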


\begin{proof}
	For each $i\in\{1,\dots,n\}$, using $a_{i_1\cdots i_k}=1$ and the standard tensor-vector product,
	$(\mathcal A x^{k-1})_i
	=
	\sum_{i_2,\dots,i_k=1}^n a_{i\,i_2\cdots i_k}\,x_{i_2}\cdots x_{i_k}
	=
	\sum_{i_2,\dots,i_k=1}^n x_{i_2}\cdots x_{i_k}
	=
	\Big(\sum_{j=1}^n x_j\Big)^{k-1}.$
	Hence $\mathcal A x^{k-1}=s^{k-1}\mathbf 1$ with $s=\sum_{j=1}^n x_j$.
	Therefore \eqref{eq:all_ones_tensor_eq} is equivalent to $b=s^{k-1}\mathbf 1$.
    
	(i)  If there exists $x\in\mathbb R_{++}^n$, then $s>0$ and thus $b=c\mathbf 1$ with $c=s^{k-1}>0$.
	Conversely, if $b=c\mathbf 1$ with $c>0$, let $s=c^{1/(k-1)}$ and pick any $x\in\mathbb R_{++}^n$ such that
	$\sum_{j=1}^n x_j=s$; then $\mathcal A x^{k-1}=s^{k-1}\mathbf 1=c\mathbf 1=b$.
	
	  (ii)  If $n=1$, then $s=x$ and the equation reduces to $x^{k-1}=c$, which has the unique positive solution
	$x=c^{1/(k-1)}$. If $n\ge 2$ and $b=c\mathbf 1$ with $c>0$, then the positive solution set is exactly
	$\{x\in\mathbb R_{++}^n:\sum_{j=1}^n x_j=c^{1/(k-1)}\}$, which is infinite, so uniqueness fails.
    \end{proof}

\begin{proposition}\label{prop:ones_spos_noJ}
	Let $m\ge 2$ and $p=m-1$. Let $\mathcal E$ be the all-ones $m$th-order $n$-dimensional tensor and let $I$ be the
	identity tensor. Fix $a>0$, $s>0$, and $b\in\mathbb{R}_{++}^n$. Let
	\[
	\tau=a\Big(\sum_{j=1}^n x_j\Big)^{p}.
	\]
    Consider
	\begin{equation}\label{eq:ones_spos_noJ}
		a\,\mathcal E\,x^{p}+s\,I x^{p}=b,\qquad x\in\mathbb{R}_{++}^n.
	\end{equation}
	Define $b_{\min}=\min_i b_i$ and $b_{\max}=\max_i b_i$, and
	\[
	H(\tau)=\frac{a}{s}\Bigg(\sum_{i=1}^n(b_i-\tau)^{1/p}\Bigg)^{p},\qquad \tau\in(0,b_{\min}).
	\]
	Then:
	\begin{enumerate}
	\item[(i)] Equation~\eqref{eq:ones_spos_noJ} has a solution in $\mathbb{R}_{++}^n$ if and only if there exists
	$\tau\in(0,b_{\min})$ such that
	\begin{equation}\label{eq:tau_fp_spos_noJ}
		\tau=H(\tau).
	\end{equation}
	In that case the (positive) solution is given by
	\begin{equation}\label{eq:x_from_tau_spos_noJ}
		x_i=\Big(\frac{b_i-\tau}{s}\Big)^{1/p},\qquad i=1,\dots,n.
	\end{equation}
    \item[(ii)] 	If a positive solution exists, it is unique.
    \item[(iii)]   Equation~\eqref{eq:ones_spos_noJ} has a positive solution if and only if
	\begin{equation}\label{eq:exist_iff_spos_noJ}
		H(b_{\min}^-) \le b_{\min}.
	\end{equation}
	A convenient sufficient condition for \eqref{eq:exist_iff_spos_noJ} is
	\begin{equation}\label{eq:exist_suff_bminbmax_noJ}
		\frac{a\,n^{p}}{s}\,(b_{\max}-b_{\min})<b_{\min}.
	\end{equation}
    \item[(iv)] For the (unique) $\tau$ solving \eqref{eq:tau_fp_spos_noJ}, one has
	\begin{equation}\label{eq:tau_bounds_spos_noJ}
		\frac{a\,n^{p}\,b_{\min}}{s+a\,n^{p}}
		\;\le\;
		\tau
		\;\le\;
		\frac{a\,n^{p}\,b_{\max}}{s+a\,n^{p}}.
	\end{equation}
	\end{enumerate}
\end{proposition}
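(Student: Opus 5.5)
The plan is to first collapse the tensor equation to a scalar fixed-point problem, using the computation from the proof of Proposition~\ref{prop:all_ones_tensor_unique_pos}. There it was shown that $(\mathcal E x^{p})_i=\big(\sum_j x_j\big)^{p}$ for every $i$. Also, $(I x^{p})_i=x_i^{p}$. So for $x\in\mathbb R^n_{++}$, equation \eqref{eq:ones_spos_noJ} reads componentwise
\[
\tau+s\,x_i^{p}=b_i,\qquad \tau=a\Big(\sum_j x_j\Big)^{p}.
\]

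For (i), I will argue both directions.
\begin{itemize}
\item \emph{Only if.} Suppose $x$ is a positive solution. Then $\tau>0$ and $s x_i^p=b_i-\tau>0$, so $\tau<b_{\min}$ and $x_i=((b_i-\tau)/s)^{1/p}$. Substituting this back into the definition of $\tau$ gives exactly $\tau=H(\tau)$.
\item \emph{If.} Suppose $\tau\in(0,b_{\min})$ solves \eqref{eq:tau_fp_spos_noJ}. Define $x$ by \eqref{eq:x_from_tau_spos_noJ}. Each $x_i$ is positive, and $a(\sum x_j)^p=H(\tau)=\tau$, so the componentwise identities hold.
\end{itemize}
This bijection between positive solutions and fixed points in $(0,b_{\min})$ is the backbone of the whole argument.

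For (ii) and (iii), I will study $g(\tau)=\tau-H(\tau)$ on $(0,b_{\min}]$.
\begin{itemize}
\item Each map $\tau\mapsto(b_i-\tau)^{1/p}$ is continuous and strictly decreasing, and $u\mapsto u^p$ is increasing on $[0,\infty)$. Hence $H$ is continuous and strictly decreasing, and it extends continuously to $\tau=b_{\min}$. Therefore $g$ is strictly increasing, so there is at most one fixed point. By (i), this gives uniqueness of the positive solution.
\item As $\tau\to 0^+$, $H(\tau)\to\frac{a}{s}\big(\sum_i b_i^{1/p}\big)^p>0$, so $g(0^+)<0$.
\item By the intermediate value theorem and monotonicity, a root exists in $(0,b_{\min})$ exactly when $g(b_{\min}^-)>0$, i.e.\ when $H(b_{\min}^-)<b_{\min}$.
\end{itemize}

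The main subtlety is the boundary case $H(b_{\min}^-)=b_{\min}$. Then the only root is $\tau=b_{\min}$, which forces $x_i=0$ for a minimizing index, so the solution is not positive. I will state the criterion with strict inequality and flag that the ``$\le$'' in \eqref{eq:exist_iff_spos_noJ} must be read as strict (or the equality case treated as the degenerate boundary solution).

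The sufficient condition follows from bounding $b_i-b_{\min}\le b_{\max}-b_{\min}$:
\[
H(b_{\min})\le \tfrac{a}{s}\big(n(b_{\max}-b_{\min})^{1/p}\big)^p=\tfrac{a n^p}{s}(b_{\max}-b_{\min}).
\]
Condition \eqref{eq:exist_suff_bminbmax_noJ} then gives $H(b_{\min})<b_{\min}$, which is the strict existence criterion above.

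For (iv), I will work directly with the fixed-point equation. Since $b_{\min}-\tau\le b_i-\tau\le b_{\max}-\tau$, we get
\[
\tfrac{a n^p}{s}(b_{\min}-\tau)\le\tau\le\tfrac{a n^p}{s}(b_{\max}-\tau).
\]
Solving each linear inequality for $\tau$ yields \eqref{eq:tau_bounds_spos_noJ}.
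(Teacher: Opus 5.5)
Your proposal is correct and follows the paper's proof essentially step for step: the componentwise reduction to $\tau+s x_i^{p}=b_i$, strict monotonicity of $H(\tau)-\tau$ for uniqueness and existence, the bound $b_i-b_{\min}\le b_{\max}-b_{\min}$ for the sufficient condition, and the two-sided estimate on the fixed-point equation for (iv). Your caveat on the boundary case is also right, and the paper's proof glosses over it when it asserts existence iff $F(b_{\min}^-)\le 0$. If $H(b_{\min}^-)=b_{\min}$, strict monotonicity gives $H(\tau)>\tau$ on all of $(0,b_{\min})$, so no positive solution exists; for example, $p=1$, $a=s=1$, $b=(1,2)^\top$ has the unique solution $x=(0,1)^\top$. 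The criterion in (iii) should therefore read $H(b_{\min}^-)<b_{\min}$.
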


\begin{proof}
	Since $\mathcal E$ is the all-ones tensor, for every $i$ one has
	\[
	(\mathcal E x^{p})_i=\sum_{i_2,\dots,i_m=1}^n x_{i_2}\cdots x_{i_m}
	=\Big(\sum_{j=1}^n x_j\Big)^{p}.
	\]
	Then \eqref{eq:ones_spos_noJ} is equivalent to
	\[
	\tau+s x_i^{p}=b_i,\qquad i=1,\dots,n,
	\]
	so necessarily $0<\tau<b_{\min}$ and
	\[
	x_i=\Big(\frac{b_i-\tau}{s}\Big)^{1/p},\qquad i=1,\dots,n.
	\]
	Summing over $i$ and using the definition of $\tau$ yields the fixed-point equation \eqref{eq:tau_fp_spos_noJ}.
	
	For $\tau\in(0,b_{\min})$, define
	$S(\tau)=\sum_{i=1}^n(b_i-\tau)^{1/p},
	H(\tau)=\frac{a}{s}\,S(\tau)^{p},
	F(\tau)=H(\tau)-\tau.$
	Each term $(b_i-\tau)^{1/p}$ is strictly decreasing in $\tau$, hence $S(\tau)$ is strictly decreasing, and so is $H(\tau)$.
	Therefore, for $0<\tau_1<\tau_2<b_{\min}$,
	\[
	F(\tau_2)-F(\tau_1)=(H(\tau_2)-H(\tau_1))-(\tau_2-\tau_1)<0,
	\]
	so $F$ is strictly decreasing and has at most one zero, proving uniqueness.
	
	Moreover, $F(0)=H(0)>0$. Hence a zero exists in $(0,b_{\min})$ if and only if $F(b_{\min}^-)\le 0$, i.e.,
	$H(b_{\min}^-)\le b_{\min}$, which is \eqref{eq:exist_iff_spos_noJ}. Using $b_i-b_{\min}\le b_{\max}-b_{\min}$ gives
	$H(b_{\min}^-)
	=\frac{a}{s}\Bigg(\sum_{i=1}^n(b_i-b_{\min})^{1/p}\Bigg)^{p}
	\le \frac{a}{s}\big(n(b_{\max}-b_{\min})^{1/p}\big)^{p}
	=\frac{a\,n^{p}}{s}(b_{\max}-b_{\min}),$
	so \eqref{eq:exist_suff_bminbmax_noJ} implies \eqref{eq:exist_iff_spos_noJ}.
	
	Finally, let $\tau$ be the (unique) fixed point. From \eqref{eq:tau_fp_spos_noJ} and $b_i\ge b_{\min}$,
	$\tau=\frac{a}{s}\Bigg(\sum_{i=1}^n(b_i-\tau)^{1/p}\Bigg)^{p}
	\ge \frac{a}{s}\big(n(b_{\min}-\tau)^{1/p}\big)^{p}
	=\frac{a\,n^{p}}{s}(b_{\min}-\tau),$
	which gives $\tau\ge \frac{a\,n^{p}\,b_{\min}}{s+a\,n^{p}}$. Similarly, using $b_i\le b_{\max}$ gives
	\[
	\tau\le \frac{a}{s}\big(n(b_{\max}-\tau)^{1/p}\big)^{p}
	=\frac{a\,n^{p}}{s}(b_{\max}-\tau),
	\]
	hence $\tau\le \frac{a\,n^{p}\,b_{\max}}{s+a\,n^{p}}$.
	This proves \eqref{eq:tau_bounds_spos_noJ}.
\end{proof}

	\section{The existence of equilibrium points in High-order LV model.}\label{existence}
	
	To enable precise analysis of network competition, scholars derive a LV equation for firing activity from the membrane dynamics equation. In \cite{Cowan68}-\cite{Cowan70}, Cowan discussed a LV dynamics. Within the canonical framework of neural dynamics, the membrane potential and firing rate of the $i$-th neuron are mathematically represented as follows:
	\begin{equation}\label{model'} 
		\begin{aligned}\frac{du_i}{dt}&=-\lambda u_i+(\text{input current}),\\
			z_i&= f(u_i-\theta) ,\end{aligned}
	\end{equation}
	where $u_{i}$ and $z_{i}$ represent membrane potential and firing rate of neuron $i$, respectively. $f(u_i -\theta)$ denotes a nonlinear output function.
	
	On the basis of the model (\ref{model'}), we consider an $n$-neuron system under uniform 3-body interactions networks, the firing rate dynamics can be formulated as:
	\begin{equation}\label{model-1} 
		\frac{dz_{i}}{dt}=z_{i}(1-z_{i}^{2}-k\sum_{(p,q)\neq(i,i)}z_{p}z_{q}+w_{i}),~~i=1,2,\cdots,n,
	\end{equation}
	where $z_{i}$ represents firing rate of neuron $i$. Neurons $p$ and $q$ are neighbors connected to $i$ through hyperedges. $k$ denotes the ratio of self-inhibition to lateral-inhibition. $1$ represents the input current induced by the afferent inputs nonspecific to each neuron, and $w_i>0$ represents the input current generated by the specific afferent inputs. 
	
	Rewrite equation (\ref{model-1}) into tensor form as follows:
	
	\begin{equation}\label{model-1-tensor} 
		\dot{z}=diag(z)(b+Az^{2}),
	\end{equation}
	where $z=(z_{1}, z_{2}, \cdots, z_{n})^{T}$, $z^2=z\otimes z$, $b=\begin{pmatrix}1+w_{1}\\1+w_{2}\\\vdots\\1+w_{n}\end{pmatrix}$,  \\
	$\begin{aligned}
		&A_{(i,:,:)}=\begin{array}{@{}r@{}c@{}c@{}c@{}c@{}c@{}c@{}l@{}}
			& 1 & 2 & \cdots & i &\cdots & n \\
			\left.\begin{array}
				{c} 1 \\2 \\\vdots \\i \\\vdots\\n \end{array}\right(
			& \begin{array}{c} -k \\ -k             \\ \vdots \\ -k       \\ \vdots     \\ -k               \end{array}
			& \begin{array}{c} -k          \\ -k  \\ \vdots \\ -k      \\ \vdots     \\-k                 \end{array}
			& \begin{array}{c} \cdots      \\ \cdots       \\ \ddots \\ \cdots  \\    \ddots        \\ \cdots          \end{array}
			& \begin{array}{c} -k           \\ -k             \\ \vdots \\ -1 \\ \vdots     \\-k                \end{array}
			& \begin{array}{c} \cdots      \\ \cdots        \\  \ddots       \\  \cdots  \\ \ddots     \\ \cdots           \end{array}
			& \begin{array}{c} -k           \\  -k            \\ \vdots \\ -k       \\ \vdots     \\-k      \end{array}
			& \left).\begin{array}{c} \\ \\ \\ \\ \\ \\ \end{array}\right.
		\end{array}
	\end{aligned}$
	
	\begin{definition}\label{winner-d}
		Without loss of generality, we can assume that $\mathcal{D}$ is the set of winners produced by the system (\ref{model-1}), where $ d\in [0, n]$ represents the number of winners. 
	\end{definition}
	
	Next, we calculate equilibrium points for System (\ref{model-1}). By searching for the equilibrium points, we can roughly determine the relationship between the number of winners 
	$d$ and the ratio of self-inhibition to lateral-inhibition $k$ in System (\ref{model-1}).
	
	\begin{theorem}\label{theorem-1-equilibrium-points}
		For system (\ref{model-1}), when $k>1$, there can only be one winner, i.e. $d=1$. When $k<1$, multiple winners may exist, i.e. $d\geq 1$. When $k=1$, there can only be one winner, if and only if there exists no $i,j$ such that $w_i=w_j$. 
	\end{theorem}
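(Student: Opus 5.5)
At an equilibrium with winner set $\mathcal D$ ($|\mathcal D|=d$), every $z_i>0$ with $i\in\mathcal D$ makes the bracket in \eqref{model-1} vanish, and every loser has $z_j=0$. The plan is to rewrite the bracket using the all-ones structure and read off when several positive solutions can coexist. Put $S=\sum_{j\in\mathcal D} z_j$. Because the losers vanish, the lateral term satisfies
\[
\sum_{(p,q)\neq(i,i)} z_pz_q = S^2 - z_i^2 .
\]
Each winner $i\in\mathcal D$ therefore obeys
\[
1+w_i = (1-k)z_i^2 + kS^2 .
\]
This is exactly the equation $k\,\mathcal E z^{2}+(1-k)Iz^{2}=b_{\mathcal D}$ restricted to $\mathcal D$. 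It has the form of Proposition~\ref{prop:ones_spos_noJ} with $a=k$, $s=1-k$ and $p=2$, and it becomes Proposition~\ref{prop:all_ones_tensor_unique_pos} when $k=1$. I will split into three cases according to the sign of $1-k$.

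\textbf{Case $k=1$.} The equation reduces to $S^2=1+w_i$ for every $i\in\mathcal D$. By Proposition~\ref{prop:all_ones_tensor_unique_pos}(i), a positive solution with $d\ge2$ exists if and only if $b_{\mathcal D}=c\mathbf 1$, that is, $w_i=w_j$ for all $i,j\in\mathcal D$. In that situation part (ii) gives infinitely many such equilibria. Hence multiple winners are possible if and only if some $w_i=w_j$. Conversely, if all the $w_i$ are distinct, only $d=1$ survives, with $z_i=\sqrt{1+w_i}$.

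\textbf{Case $k<1$.} Here $s=1-k>0$, so Proposition~\ref{prop:ones_spos_noJ} applies directly. It gives a unique positive solution on $\mathcal D$ whenever condition \eqref{eq:exist_iff_spos_noJ} holds, and the sufficient condition \eqref{eq:exist_suff_bminbmax_noJ} holds for instance when all the $w_i$ on $\mathcal D$ are equal. So equilibria with $d\ge2$ genuinely exist, and $d=1$ always exists as well. This establishes that $d\ge1$ winners are possible.

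\textbf{Case $k>1$.} This is the main obstacle, because "only one winner" has to be read as a statement about \emph{admissible} (stable) winner sets, not merely about solvability. The difficulty is that for $k>1$ one still gets $z_i^2=(kS^2-1-w_i)/(k-1)$, and a fixed-point equation $\tau=H(\tau)$ with $s<0$ may well have positive roots. I would first attempt a direct contradiction for $d\ge2$. Summing the winner equations gives
\[
\sum_{i\in\mathcal D}(1+w_i)=(1-k)\sum_{i\in\mathcal D} z_i^2+dkS^2 .
\]
Combining this with $\sum_{i\in\mathcal D} z_i^2\ge S^2/d$ only bounds $S$ and does not exclude solutions. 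So the plan is to show that any such equilibrium is unstable, via the Jacobian restricted to $\mathcal D$. That Jacobian equals $-2\,\mathrm{diag}(z_{\mathcal D})\bigl[(1-k)\mathrm{diag}(z_{\mathcal D})+k\,z_{\mathcal D}\mathbf 1^\top\bigr]$. Testing it on a vector $v$ with $\mathbf 1^\top v=0$ kills the rank-one part and leaves $-2(1-k)\,\mathrm{diag}(z)^2$, which is positive definite on that subspace when $k>1$. This forces an eigenvalue with positive real part, so the equilibrium is a saddle. Consequently only single-winner equilibria, with $z_i=\sqrt{1+w_i}$, can be selected. I expect the careful version of this inertia argument, using the fact that $\mathrm{diag}(z)$ times a symmetrizable matrix has real spectrum, to be the key technical step.
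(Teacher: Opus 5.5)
For $k=1$ and $k>1$ you follow the paper. At $k=1$ you use Proposition~\ref{prop:all_ones_tensor_unique_pos}. At $k>1$ you show every multi-winner equilibrium is unstable by probing the winner block of the Jacobian on $\mathbf 1^\perp$.

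The genuinely different step is $k<1$. The paper invokes Lemma~\ref{unique-positive-solution-1} (strict diagonal dominance, $k(d^2-1)<1$). That would give $d$-winner equilibria for \emph{arbitrary} inputs, but only for $k<1/3$ once $d\ge 2$. You invoke Proposition~\ref{prop:ones_spos_noJ} instead, which covers every $k\in(0,1)$ at the price of nearly equal inputs on $\mathcal D$ (essentially Corollary~\ref{cor:exist_d_bound_klt1}).

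Your version is the sound one. In your reduced form each winner satisfies $kS^2<b_i\le S^2$, so a $d$-winner equilibrium forces $k\max_{\mathcal D}b_i<\min_{\mathcal D}b_i$. For instance, $d=2$, $k=0.2$, $b_{\mathcal D}=(2,100)$ meets the hypothesis of Lemma~\ref{unique-positive-solution-1}, yet has no such equilibrium. Since the statement only says multiple winners \emph{may} exist, your equal-input example is all that is needed.

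The $k>1$ step needs two repairs. First, the rank-one factor is misstated. Differentiating $z_i\bigl(b_i-(1-k)z_i^2-kS^2\bigr)$ at the equilibrium gives
\[
J_{\mathcal D}=-2\,\mathrm{diag}(z_{\mathcal D})\,M,\qquad M=(1-k)\,\mathrm{diag}(z_{\mathcal D})+kS\,\mathbf 1\mathbf 1^\top ,
\]
and the symmetry of $M$ is exactly what your inertia argument relies on. Run the test on $M$, not on $J_{\mathcal D}$. For $k>1$ and $0\ne v\perp\mathbf 1$ you have $v^\top Mv=(1-k)\sum_{i\in\mathcal D}z_iv_i^2<0$, so $M$ has at least $d-1$ negative eigenvalues. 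Since $J_{\mathcal D}$ is similar to $-2\,\mathrm{diag}(z_{\mathcal D})^{1/2}M\,\mathrm{diag}(z_{\mathcal D})^{1/2}$, Sylvester's law of inertia then gives at least $d-1$ positive real eigenvalues. Positivity of $v^\top J_{\mathcal D}v$ alone, as in the paper's Rayleigh-quotient step, does not imply an eigenvalue with positive real part for a nonsymmetric matrix.

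To pass to the full Jacobian, note that loser rows vanish off the diagonal. Hence $J$ is block upper triangular with $J_{\mathcal D}$ as a diagonal block; being a principal submatrix would not suffice by itself.

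Second, you never rule out $d=0$. The origin is always an equilibrium, and you need its Jacobian $\mathrm{diag}(1+w_i)$, whose eigenvalues are all positive, to dismiss it, as the paper does at the outset. Without that you only get $d\le 1$, both for $k>1$ and for $k=1$ with distinct inputs.
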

	
	\begin{proof}
		The equilibrium points of the system (\ref{model-1}) can be obtained according to the following equation:
		\[\dot{z}=diag(z)(b+Az^{2})=0,\]
		
		We can obtain the equilibrium point $z_{e_1}^*=(0, 0, \cdots, 0)^{T}$, which corresponds to the outcome where all neurons are losers and die out ($d=0$), or the equilibrium point satisfying the equation $-Az^{2}=b$, corresponding to the outcome where winners exist ($d\geq 1$). 
		
		When $z=z_{e_1}^*=(0, 0, \cdots, 0)^{T}$, it is easy to get:
		\[
		\begin{aligned}
			\frac{\partial\dot{z}_i}{\partial z_i}\Big|_{z=z_{e_1}^*}
			&=(1-z_i^2 - k\!\!\sum_{p,q\neq(i,i)} z_p z_q + w_i) \\
			&\quad + z_i\!\left(
			-k \!\!\sum_{p=i,\,q\neq i} z_q
			-k \!\!\sum_{q=i,\,p\neq i} z_p
			- 2z_i
			\right)\Big|_{z=z_{e_1}^*} \\
			&= 1+w_i > 0, \\
			\frac{\partial\dot{z}_i}{\partial z_j}\Big|_{z=z_{e_1}^*}
			&= z_i\!\left(
			-k \!\!\sum_{q=1}^{n} z_q
			-k \!\!\sum_{p=1}^{n} z_p
			\right)\Big|_{z=z_{e_1}^*} = 0.
		\end{aligned}
		\] 
		It is known that the diagonal elements of a diagonal matrix are its eigenvalues. Clearly, the eigenvalues of the Jacobian matrix for system (\ref{model-1}) are positive. 
        It is easy to conclude that the equilibrium point $z_{e_1}^*=(0, 0, \cdots, 0)^{T}$ is unstable. Therefore, we know that the neurons in the system (\ref{model-1}) will inevitably produce at least one winner through competition. That is to say $d \neq 0$.
		
		When $d\geq 1$, the set of equilibrium points $\{z_i^{(\mathcal{D})}\}$ satisfies the following:
		\begin{equation}\label{model-d}    
			\left\{
			\begin{aligned}
				&\sum_{p,q \in \mathcal{D}}A_{ipq}^{(\mathcal{D})}z_{p}z_{q}=1+w_{i},&&~~i\in\mathcal{D},\\
				&z_{i}^{(\mathcal{D})}=0,&&~~i\notin \mathcal{D},
			\end{aligned}
			\right.
		\end{equation}
		where 
\[
A_{(i,:,:)}^{(\mathcal D)}=
\begin{blockarray}{c@{\hspace{0.9em}}cccccc}
	& 1 & 2 & \cdots & i & \cdots & d \\
	\begin{block}{c@{\hspace{0.9em}}(cccccc)}
		1      & -k & -k & \cdots & -k & \cdots & -k \\
		2      & -k & -k & \cdots & -k & \cdots & -k \\
		\vdots & \vdots & \vdots & \ddots & \vdots & \ddots & \vdots \\
		i      & -k & -k & \cdots & -1 & \cdots & -k \\
		\vdots & \vdots & \vdots & \ddots & \vdots & \ddots & \vdots \\
		d      & -k & -k & \cdots & -k & \cdots & -k \\
	\end{block}
\end{blockarray}.
\]
		For $i\in\mathcal{D}$, we have \[-A^{(\mathcal{D})}z^{(\mathcal{D})2}=b^{(\mathcal{D})}.\] 
		Here, $z^{(\mathcal{D})}$ and $b^{(\mathcal{D})}$ represent the vectors $z$ and $b$ with  with the elements $z_i (i\notin \mathcal{D})$ and $b_i$ $(i\notin \mathcal{D})$ removed, respectively. 
		
		We already know that the diagonal elements of $z^{(\mathcal{D})}$ are positive. According to Lemma \ref{unique-positive-solution-1}, when $|A_{iii}|=1 \textgreater \sum_{(j,k)\neq(i,i)}|A_{ijk}|=k(d^2-1)$, the equation $-A^{(\mathcal{D})}z^{(\mathcal{D})2}=b^{(\mathcal{D})}$ has a unique positive solution.  Through simple calculations, we find that if $d<\sqrt{\frac{k+1}k}$ holds, the equation (\ref{model-d})  has unique solution $z_{e_2}^*$, where $z_{e_2i}^*>0$ (for $i \in \mathcal{D}$) and $z_{e_2i}^*=0$ (for $i \notin \mathcal{D}$). Additionally, when $k\geq1$, there must exists at least one winner, i.e. $d=1$. When $k<1$, multiple winners may exist, i.e. $d\geq 1$. 
		
		Next, we show that when $k>1$, any equilibrium point with more than one winner ($d>1$) is unstable, if exists.
		Fix an equilibrium point associated with a winner set $\mathcal D$ with $|\mathcal D|=d>1$.
		For convenience, denote $b_i=1+w_i$ and write the dynamics of~(\ref{model-1}) componentwise as
		\[
		\dot z_i
		=
		z_i\!\left(
		b_i - z_i^2 - k\!\!\sum_{p,q\neq(i,i)} z_p z_q
		\right),\qquad i=1,\dots,n .
		\]
		Note that
		\[
		\sum_{p,q=1}^n z_p z_q=\Big(\sum_{\ell=1}^n z_\ell\Big)^2=:S(z),
		\quad
		\sum_{p,q\neq(i,i)} z_p z_q = S(z)-z_i^2,
		\]
		so that
		\begin{equation}\label{eq:rewrite}
			\dot z_i
			=
			z_i\Big(b_i - kS(z) + (k-1)z_i^2\Big).
		\end{equation}
		Let $z^*$ be an equilibrium point with winners $\mathcal D$.
		Then $z_i^*>0$ for $i\in\mathcal D$ and $z_i^*=0$ for $i\notin\mathcal D$.
		Define
		\[
		r=\sum_{i\in\mathcal D} z_i^*>0,
		\qquad
		S^*=S(z^*)=r^2.
		\]
		Moreover, for each $i\in\mathcal D$, the equilibrium condition $\dot z_i=0$ together with~\eqref{eq:rewrite} yields
		\begin{equation}\label{eq:equil_winner}
			b_i - kS^* + (k-1)(z_i^*)^2 = 0, \qquad i\in\mathcal D.
		\end{equation}
		
		We now compute the Jacobian matrix $J(z)=\left[\frac{\partial \dot z_i}{\partial z_j}\right]$ at $z=z^*$.
		For $i\in\mathcal D$, differentiating~\eqref{eq:rewrite} gives
		\[
		\frac{\partial \dot z_i}{\partial z_j}(z^*)
		=
		z_i^*\frac{\partial}{\partial z_j}\Big(b_i-kS(z)+(k-1)z_i^2\Big)\Big|_{z=z^*},
		\]
		where we used $b_i-kS^*+(k-1)(z_i^*)^2=0$ from~\eqref{eq:equil_winner}.
		Since $S(z)=(\sum_\ell z_\ell)^2$, we have
		\[
		\frac{\partial S}{\partial z_j}(z^*) = 2\sum_{\ell=1}^n z_\ell^* = 2r.
		\]
		Therefore, for $i\in\mathcal D$,
		\[
		\frac{\partial \dot z_i}{\partial z_j}(z^*)
		=
		\begin{cases}
			-2kr\,z_i^*, & j\in\mathcal D,\ j\neq i,\\[2mm]
			-2kr\,z_i^*+2(k-1)(z_i^*)^2, & j=i\in\mathcal D.
		\end{cases}
		\]
		Hence, the winner--winner block $J_{\mathcal D}(z^*)\in\mathbb R^{d\times d}$ can be written as
		\begin{equation}\label{eq:Jblock}
			J_{\mathcal D}(z^*)
			=
			u\mathbf 1^\top + 2(k-1)\operatorname{diag}\big((z_i^*)^2\big),
			\qquad
			u_i=-2kr\,z_i^*,
		\end{equation}
		where $\mathbf 1\in\mathbb R^d$ is the all-ones vector.
		
		Since $d>1$, there exists a nonzero vector $v\in\mathbb R^d$ such that $\mathbf 1^\top v=0$.
		Multiplying~\eqref{eq:Jblock} by such a $v$ yields
		$v^\top J_{\mathcal D}(z^*) v
		=
		v^\top\!\left(u\mathbf 1^\top\right)\!v
		+
		2(k-1)\,v^\top \operatorname{diag}\big((z_i^*)^2\big)v
		=
		0 + 2(k-1)\sum_{i\in\mathcal D}(z_i^*)^2 v_i^2 .$
		
		When $k>1$, we have $k-1>0$ and $z_i^*>0$ for all $i\in\mathcal D$, hence
		$v^\top J_{\mathcal D}(z^*) v=v^\top \frac{J_{\mathcal D}(z^*)+J_{\mathcal D}(z^*)^\top}{2} v > 0.$
		By the Rayleigh quotient, this implies that $J_{\mathcal D}(z^*)$ has an eigenvalue with a positive real part.
		Since $J_{\mathcal D}(z^*)$ is a principal submatrix of the full Jacobian $J(z^*)$, the Jacobian $J(z^*)$ also has an eigenvalue with a positive real part.
		Therefore, 
        the equilibrium point $z^*$ is unstable.
		
		When $k=1$, note that $A^{(\mathcal{D})}$ is an all-one tensor, thus it directly follows from the proposition \ref{prop:all_ones_tensor_unique_pos}.
		
		Consequently, when $k>1$, any equilibrium point with more than one winner ($d>1$) is unstable.
		This shows that for $k>1$, the only asymptotically stable competitive outcome must satisfy $d=1$.
	\end{proof}
	
	\begin{remark}\label{remark-1}
		It is evident that the smaller the value of $k$ is, the greater the number of winners will be. This indicates that the coexistence of multiple neurons could be more promising at the small $k$.
	\end{remark}

	\section{The stability of all-neurons-coexistence equilibrium point.}\label{stability-all-neurons-coexistence}
	
	In this section, we analyze the stability of the all-neurons-coexistence equilibrium point, where there is no loser. We further show both its sufficient and necessary condition for the local asymptotic stability and its sufficient condition for the global asymptotic stability.
	
	\begin{theorem}\label{theorem-2-all-neurons-coexistence-equilibrium-points}
		Assuming that system (\ref{model-1}) has an all-neurons-coexistence equilibrium $z_{e_3}^*=(z_{1}^*, z_{2}^*, \cdots, z_{n}^*)^T$, where $z_{i}^*>0$ for all neuron $i$. The equilibrium point  $z_{e_3}^*$ is locally asymptotically stable if and only if $k<1$.
	\end{theorem}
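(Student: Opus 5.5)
The plan is to linearize system~(\ref{model-1}) at $z_{e_3}^*$ and exploit the structure already found in the proof of Theorem~\ref{theorem-1-equilibrium-points}, now with $\mathcal D=\{1,\dots,n\}$ (we assume $n\ge 2$, as the case $n=1$ has no competition). Using the rewriting~\eqref{eq:rewrite}, $\dot z_i=z_i\big(b_i-kS(z)+(k-1)z_i^2\big)$ with $S(z)=(\sum_\ell z_\ell)^2$. Since every bracket vanishes at $z_{e_3}^*$, the Jacobian is
\[
J(z_{e_3}^*)=-2\,Z\big(kr\,\mathbf 1\mathbf 1^\top+(1-k)Z\big),\qquad Z=\operatorname{diag}(z_1^*,\dots,z_n^*),\quad r=\sum_{i=1}^n z_i^*>0 .
\]
This agrees with~\eqref{eq:Jblock}.

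The key step is a similarity transformation that turns $J$ into a symmetric matrix. Set $M=kr\,\mathbf 1\mathbf 1^\top+(1-k)Z$, which is symmetric. Since $Z$ is positive diagonal, $J=-2ZM$ is similar to
\[
-2Z^{1/2}MZ^{1/2}=Z^{-1/2}JZ^{1/2}.
\]
Hence $J$ has only real eigenvalues, and by Sylvester's law of inertia their signs are exactly the signs of the eigenvalues of $-M$. The stability question therefore reduces to the definiteness of $M$.

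\textbf{Case $k<1$.} Here $kr\,\mathbf 1\mathbf 1^\top$ is positive semidefinite and $(1-k)Z$ is positive definite. So $M$ is positive definite, all eigenvalues of $J$ are strictly negative, and $z_{e_3}^*$ is locally asymptotically stable by Lyapunov's indirect method.

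\textbf{Case $k>1$.} Take $v\neq 0$ with $\mathbf 1^\top v=0$, which exists since $n\ge2$. Then $v^\top M v=(1-k)\sum_i z_i^*v_i^2<0$. So $M$ has a negative eigenvalue, $J$ has a positive real eigenvalue, and the equilibrium is unstable. This mirrors the argument already used in Theorem~\ref{theorem-1-equilibrium-points}.

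\textbf{Case $k=1$.} This is the main obstacle, because linearization is inconclusive: $J=-2rZ\mathbf 1\mathbf 1^\top$ has rank one, so $0$ is an eigenvalue of multiplicity $n-1$. I would instead argue directly from Proposition~\ref{prop:all_ones_tensor_unique_pos}. When $k=1$, the interior equilibrium condition $-Az^2=b$ forces $b=c\mathbf 1$. Every positive $z$ with $\sum_j z_j=c^{1/2}$ is then also an equilibrium, so $z_{e_3}^*$ lies on a continuum of equilibria, and arbitrarily close to it there are equilibria that do not converge to it. Hence $z_{e_3}^*$ cannot be asymptotically stable. If $b$ is not a multiple of $\mathbf 1$, no interior equilibrium exists at $k=1$, and the hypothesis of the theorem is void.

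Combining the three cases gives local asymptotic stability if and only if $k<1$.
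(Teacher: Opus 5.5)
Your proof is correct. Its skeleton is the same as the paper's: linearize at $z_{e_3}^*$, split on $k$, and use a test vector with $\mathbf 1^\top v=0$.

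For $k<1$ the two arguments coincide in substance. The paper takes $P=\operatorname{diag}(1/z_i^*)=Z^{-1}$ and checks $PJ+J^\top P=-4kr\,\mathbf 1\mathbf 1^\top+4(k-1)Z\prec 0$. Since $PJ=-2M$, this is exactly your observation that $J$ is symmetrizable by $Z$.

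The routes diverge in the other two cases, and there your argument is the more rigorous one.
\begin{itemize}
\item For $k>1$, the paper shows $v^\top Jv>0$ and concludes from $\lambda_{\max}\big(\tfrac12(J+J^\top)\big)>0$ that $J$ has an eigenvalue with positive real part. For a general nonsymmetric matrix this inference fails: $\begin{pmatrix}-1&4\\0&-1\end{pmatrix}$ is Hurwitz although its symmetric part has eigenvalue $1$. Your similarity $Z^{-1/2}JZ^{1/2}=-2Z^{1/2}MZ^{1/2}$ is what makes the step valid, since it gives a real spectrum with the inertia of $-M$.
\item For $k=1$, the paper only notes that $J=t\mathbf 1^\top$ has eigenvalues $-2r^2$ and $0$ (with multiplicity $n-1$). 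It then infers non-asymptotic stability from $J$ not being Hurwitz, which Lyapunov's indirect method cannot deliver when the critical eigenvalues are zero. Your use of Proposition~\ref{prop:all_ones_tensor_unique_pos} closes this case rigorously: an interior equilibrium at $k=1$ forces $b=c\mathbf 1$, so it lies in the continuum $\{z\gg 0:\sum_j z_j=\sqrt c\}$ of equilibria and cannot be attractive.
\end{itemize}

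What the paper's formulation buys is an explicit quadratic Lyapunov certificate $x^\top Z^{-1}x$ for the linearization. Yours buys real eigenvalues, exact inertia, and a single argument covering both $k<1$ and $k>1$.

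Two side remarks. Your restriction to $n\ge 2$ is genuinely needed, since for $n=1$ the equilibrium $\sqrt{1+w_1}$ is stable for every $k$. Like the paper, you also implicitly use $k\ge 0$ when calling $kr\,\mathbf 1\mathbf 1^\top$ positive semidefinite.
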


	\begin{proof}
		For system (\ref{model-1}), it is easy to obtain that: 
		\[
		\begin{aligned}
			\frac{\partial\dot{z}_i}{\partial z_i}\Big|_{z=z_{e_3}^*}
			&= (1 - z_i^2 - k\!\!\sum_{p,q\neq(i,i)} z_p z_q + w_i) \\
			&\quad + z_i\!\left(
			-k\!\!\sum_{p=i,\,q\neq i} z_q
			-k\!\!\sum_{q=i,\,p\neq i} z_p
			- 2z_i
			\right)\Big|_{z=z_{e_3}^*} \\
			&= z_{i}^{*}\!\left(
			-k\!\!\sum_{q=1}^{n} z_{q}^{*}
			-k\!\!\sum_{p=1}^{n} z_{p}^{*}
			+ 2k z_{i}^{*}
			- 2 z_{i}^{*}
			\right), 
		\end{aligned}\]
			\[\begin{aligned}
			\frac{\partial\dot{z}_i}{\partial z_j}\Big|_{z=z_{e_3}^*}
			&= z_{i}\!\left(
			-k\!\!\sum_{q=1}^{n} z_{q}
			-k\!\!\sum_{p=1}^{n} z_{p}
			\right)\Big|_{z=z_{e_3}^*} \\
			&= z_i^*\!\left(
			-k\!\!\sum_{q=1}^n z_q^*
			-k\!\!\sum_{p=1}^n z_p^*
			\right) < 0\qquad (i\neq j).
		\end{aligned}
		\]
		The Jacobian matrix $J$ of the system (\ref{model-1}) is given by:
		\begin{equation*}
			J=\hat{T}+\hat{Z},
		\end{equation*}
		where {\tiny $\hat{T}=\left(\begin{matrix}T_1&T_1&\cdots&T_1\\T_2&T_2&\cdots&T_2\\\vdots&\vdots&\ddots&\vdots\\T_n&T_n&\cdots&T_n\end{matrix}\right)_n$, $\hat{Z}=2(k-1)\left(\begin{matrix}z_{1}^{*^{2}}& 0 & \cdots & 0\\ 0 &z_{2}^{*^{2}}& \cdots & 0\\\vdots&\vdots &\ddots& \vdots\\ 0& 0& \cdots &z_{n}^{*^{2}}\end{matrix}\right)$,}
		$T_i=-z_i^*\big(k\sum_{q=1}^nz_q^*+k\sum_{p=1}^nz_p^*\big)$.
		Denote $r=\sum_{\ell=1}^{n}z_{\ell}^{*}>0$. Then $T_i=-2kr\,z_i^*$ and $\hat{T}=t\mathbf 1^{\top}$ with $t=(T_1,\dots,T_n)^{\top}$.
		
		(i) When $k=1$, we have $J=\hat{T}=t\mathbf 1^{\top}$, which is a rank-one matrix. Hence, $J$ has one nonzero eigenvalue
		$\lambda=-2r^2<0,$
		and the remaining $n-1$ eigenvalues are $0$. Therefore, $J$ is not Hurwitz, and the equilibrium point $z_{e_3}^*$ is not locally asymptotically stable.
		
		(ii) When $k>1$, we have $k-1>0$ and thus $\hat{Z}$ is a positive diagonal matrix. Since $n>1$, there exists a nonzero vector $v\in\mathbb R^{n}$ such that $\mathbf 1^{\top}v=0$.
		For such a $v$, we have $\hat{T}v=t(\mathbf 1^{\top}v)=0$ and thus
		\[
		v^{\top}Jv=v^{\top}\hat{Z}v=2(k-1)\sum_{i=1}^{n}z_{i}^{*^{2}}v_i^2>0.
		\]
		Consider the symmetric part $J_s=\frac{1}{2}(J+J^{\top})$. Since $v^{\top}J_sv=v^{\top}Jv>0$, we have $\lambda_{\max}(J_s)>0$, which implies that $J$ has an eigenvalue with positive real part. Therefore, 
         the equilibrium point $z_{e_3}^*$ is unstable when $k>1$.
		
		(iii) When $k<1$, we have $k-1<0$ and thus $\hat{Z}$ is a negative diagonal matrix. Define a positive diagonal matrix
		\[
		P=\mathrm{diag}\Big(\frac{1}{z_1^*},\frac{1}{z_2^*},\dots,\frac{1}{z_n^*}\Big)\succ 0.
		\]
		Using $\hat{T}=t\mathbf 1^{\top}$ and $t_i=-2kr\,z_i^*$, we obtain
		\[
		P\hat{T}=-2kr\,\mathbf 1\mathbf 1^{\top},\qquad
		P\hat{Z}=2(k-1)\,\mathrm{diag}(z_1^*,\dots,z_n^*).
		\]
		Consequently,
		$	PJ+J^{\top}P
		=
		(P\hat{T}+\hat{T}^{\top}P)+(P\hat{Z}+\hat{Z}^{\top}P)
		=
		-4kr\,\mathbf 1\mathbf 1^{\top}+4(k-1)\,\mathrm{diag}(z_1^*,\dots,z_n^*).$
		
		Since $k<1$ and $z_i^*>0$ for all $i$, the second term is negative definite and the first term is negative semidefinite, hence
		$PJ+J^{\top}P\prec 0.$
		Therefore, $J$ is Hurwitz and the all-neurons-coexistence equilibrium point $z_{e_3}^*$ is locally asymptotically stable.
		The necessary condition is a direct consequence of Theorem \ref{theorem-1-equilibrium-points}.
	\end{proof}
	
	\begin{remark}\label{remark-existence-coexistence}
		The local asymptotic stability result for the all-neurons-coexistence equilibrium point is conditional on its
		\emph{existence}. Namely, the condition $k<1$ guarantees that \emph{if} a coexistence equilibrium
		$z^*\gg 0$ exists, then it is locally asymptotically stable.
		However, $k<1$ alone does not automatically ensure the existence of such a strictly positive equilibrium;
		additional structural assumptions 
		are required to guarantee existence.
	\end{remark}

	\begin{theorem}\label{theorem-3-all-neurons-coexistence-equilibrium-points}
		Let that the unique positive equilibrium point of system (\ref{model-1-tensor}) is $z_{e_3}^*$. If $-A$ is an irreducible nonnegative $\mathcal{H}^+$-tensor, the equilibrium point $z_{e_3}^*$ exists and is globally asymptotically stable.
	\end{theorem}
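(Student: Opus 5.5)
Existence of $z_{e_3}^*$ follows directly from Lemma \ref{unique-positive-solution-2}. The interior equilibria of (\ref{model-1-tensor}) are exactly the positive solutions of $-Az^{2}=b$. Since $b=\mathbf 1+w>0$ componentwise and $-A$ is assumed to be an $\mathcal{H}^+$-tensor, Lemma \ref{unique-positive-solution-2} gives a unique positive solution $z_{e_3}^*$. By the hypothesis on $-A$, the diagonal entries are $1$ and the off-diagonal entries of $A$ are $-k\le 0$. For the all-$-k$ off-diagonal pattern, the $\mathcal{H}^+$ property essentially forces $k(n^2-1)<1$, in the spirit of Lemma \ref{unique-positive-solution-1}. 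In particular $k<1$, so by Theorem \ref{theorem-2-all-neurons-coexistence-equilibrium-points} the equilibrium is at least locally asymptotically stable. The remaining work is to upgrade this to global stability on the open positive orthant $\mathbb R^n_{++}$. That is the only sensible domain, since the coordinate faces are invariant.

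For the global part I would use the classical Goh/Volterra-type Lyapunov function
\[
V(z)=\sum_{i=1}^n c_i\Big(z_i-z_i^*-z_i^*\ln\frac{z_i}{z_i^*}\Big),\qquad c_i=\frac{1}{z_i^*}\ \text{or}\ c_i=1,
\]
which is positive definite and radially unbounded on $\mathbb R^n_{++}$. Along trajectories,
\[
\dot V=\sum_i c_i(z_i-z_i^*)\big(b_i+(Az^2)_i\big)=\sum_i c_i(z_i-z_i^*)\big((Az^2)_i-(Az^{*2})_i\big).
\]
Using the rewriting (\ref{eq:rewrite}), $(Az^2)_i=-kS(z)+(k-1)z_i^2$ with $S=(\sum_\ell z_\ell)^2$. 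Writing $\sigma=\sum z_\ell$ and $\sigma^*=\sum z_\ell^*$ and taking $c_i=1$, this becomes
\[
\dot V=-k(\sigma-\sigma^*)^2(\sigma+\sigma^*)-(1-k)\sum_i (z_i-z_i^*)^2(z_i+z_i^*).
\]
Both terms are nonpositive when $0\le k<1$ and $z\in\mathbb R^n_{++}$. The second term vanishes only at $z=z^*$, so $\dot V<0$ away from $z^*$. LaSalle's principle, or directly Lyapunov's theorem together with the radial unboundedness of $V$ on the orthant, then yields global asymptotic stability.

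The main obstacle is making the identity for $\dot V$ rigorous and tying it to the hypothesis. The difference $(Az^2)_i-(Az^{*2})_i$ is quadratic rather than linear, so the usual $(z-z^*)^\top A(z-z^*)$ argument does not apply. What saves the argument is the factorizations $\sigma^2-\sigma^{*2}=(\sigma-\sigma^*)(\sigma+\sigma^*)$ and $z_i^2-z_i^{*2}=(z_i-z_i^*)(z_i+z_i^*)$, each of which has a positive second factor on the orthant.

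I would first verify that the $\mathcal H^+$ and irreducibility hypothesis on $-A$ indeed implies $0<k<1$. Irreducibility gives $k>0$. For the other bound, I would test the comparison tensor $\langle -A\rangle$ against the all-ones vector: its row sums are $1-k(n^2-1)$, and a nonsingular $\mathcal M$-tensor must have a positive vector $x$ with $\langle -A\rangle x^{2}>0$. For this symmetric structure, such an $x$ can be taken proportional to $\mathbf 1$ by averaging, which forces $k(n^2-1)<1$.

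If that averaging step fails, the fallback is to state the proof under the explicit consequence $k<1$, which is all the Lyapunov computation uses.
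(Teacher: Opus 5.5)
Your proof is correct, and its stability part takes a genuinely different route from the paper's. Existence is handled the same way in both (Lemma~\ref{unique-positive-solution-2} with $b=\mathbf 1+w>0$).

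For global stability, the paper uses the ratio-type function $V=\min_i\big((z_i-z_i^*)/z_i^*\big)^2$ with a case split over $U_1=\{z\ge z^*\}$, $U_2=\{\min_i z_i/z_i^*<1\}$ and $U_3=\{z<z^*\}$. The only facts it uses about $A$ are the sign pattern $a_{mjk}\le 0$ and the equilibrium identity $\sum_{j,k}a_{mjk}z_j^*z_k^*=-(1+w_m)$. In spirit it is therefore aimed at arbitrary nonnegative $-A$, with the $\mathcal H^+$ hypothesis entering only through existence.

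You instead use the Volterra function with $c_i=1$ and the explicit form $(Az^2)_i=-k\sigma^2+(k-1)z_i^2$. This turns $\dot V$ into the exact identity
\[
\dot V=-k(\sigma-\sigma^*)^2(\sigma+\sigma^*)-(1-k)\sum_{i}(z_i-z_i^*)^2(z_i+z_i^*).
\]
What your route buys:
\begin{itemize}
\item a smooth, proper, strictly decreasing Lyapunov function on $\mathbb R^n_{++}$, which is the correct domain (the paper's $U_2$ contains the invariant coordinate faces);
\item a proof that nothing beyond existence and $0\le k<1$ is needed, so it actually upgrades Theorem~\ref{theorem-2-all-neurons-coexistence-equilibrium-points} from local to global stability.
\end{itemize}
What it gives up is generality: the factorization needs every off-diagonal weight to equal $k$. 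A min/max-ratio argument is the kind that could extend to heterogeneous hyperedge weights. As written, though, the paper's version is delicate. The identification $m=t$ holds on $U_1$ but not in general on $U_2$. Also, $V$ vanishes as soon as a single coordinate equals its equilibrium value, so it is not positive definite. Your route avoids both issues.

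Two small tightenings:
\begin{itemize}
\item Commit to $c_i=1$. With $c_i=1/z_i^*$ the cross term $-k(\sigma^2-\sigma^{*2})\sum_i(z_i-z_i^*)/z_i^*$ has no definite sign.
\item The step to $k(n^2-1)<1$ needs no fallback. Your averaging is legitimate, because for $x>0$ the condition $(1+k)x_i^2>k\sigma^2$ is the linear inequality $\sqrt{1+k}\,x_i>\sqrt{k}\,\sigma$. It is quicker, though, to evaluate it at the smallest component: $(1+k)x_{\min}^2>k\sigma^2\ge kn^2x_{\min}^2$. Equivalently, $\rho(k\mathcal E)=kn^2<1+k$ for the all-ones tensor $\mathcal E$. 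For $n\ge 2$ this gives $k<1/3$.
\end{itemize}
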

	\begin{proof}
		In the equation (\ref{model-1-tensor}), if $-A$ is irreducible nonnegative $\mathcal{H}^+$-tensor,  according to Lemma \ref{unique-positive-solution-2}, we know that the system (\ref{model-1-tensor}) has unique positive equilibrium point $z_{e_3}^*$. Define set $U_1=\{z|z\geq z_{e_3}^*\}$ and $U_2=\{z|\min \frac{z_i}{z_{e_3i}^*}<1\}$. Then we let $V_m= \min_i (\frac{z_i}{z^*_{e_3i}})^2$ and $V= \min_i (\frac{z_i-z_{e_3i}^*}{z^*_{e_3i}})^2$, where $m=\arg\min_i (\frac{z_i}{z^*_{e_3i}})^2$ and assuming $t=\arg\min_i (\frac{z_i-z^*_{e_3i}}{z^*_{e_3i}})^2$. Because $\frac{z_i-z^*_{e_3i}}{z^*_{e_3i}}=\frac{z_i}{z^*_{e_3i}}-1$, $m=t$ must hold. In the set $U_1$, we have:
		\begin{equation*}
			\begin{aligned}
				\dot{V}& =\frac{2z_{m}}{(z_{m}^{*})^{2}} (z_{m}-z_{m}^{*})(b+AZ^{2})_{m}   \\
				& =\frac{2z_{m}}{(z_{m}^{*})^{2}} (z_{m}-z_{m}^{*})(1+w_m+\sum_{jk}a_{mjk} \frac{z_jz_kz_j^*z_k^*}{z_j^*z_k^*})\\
				&\leq\frac{2z_m}{(z_m^*)^2}(z_m-z_m^*)\big(1+w_m+\sum_{jk}a_{mjk} z_j^*z_k^*V_m\big)\\
				&=\frac{2z_m}{(z_m^*)^2}(z_m-z_m^*)(1+w_m-(1+w_m)V_m) \\
				&=\frac{2z_{m}}{(z_{m}^{*})^{2}}(z_{m}-z_{m}^{*})(1-V_{m})(1+w_{m})\leq0.\\ 
			\end{aligned}
		\end{equation*}
		For any $z\in U_2$, we can obtain that:
		\begin{equation*}
			\begin{aligned}
				\dot{V}& =\frac{2z_{m}}{(z_{m}^{*})^{2}} (z_{m}-z_{m}^{*})(b+AZ^{2})_{m}            \\
				&\leq\frac{2z_m}{(z_m^*)^2}(z_m-z_m^*)\left(1+w_m+\sum _{jk}a_{mjk}z_j^*z_k^*V_m^\frac{1}{2}\right) \\ 
				&=\frac{2z_m}{(z_m^*)^2}(z_m-z_m^*)(1+w_m)(1-V_m^\frac{1}{2})\leq0.
			\end{aligned}
		\end{equation*}
		
		Notice that $V$ is locally positive definite in $U_1$ and $U_2$, and $\dot{V}$ is negative semidefinite. Therefore, the solution converges to $\{x|\dot{V}=0\}$. In order to have $\dot{V}=0$, for every $i$, it must hold that $i=m$ and $z_m=z_m^*$ which indicates that $\{z|\dot{V}=0\}$ only contains $z_{e_3}^*$. Thus, $z_{e_3}^*$ is asymptotically stable with domain of attraction $U_1, U_2$.
		
		Next, we further let $L_s=\max_i(\frac{z_i}{z^*_{e_3i}})^2$ and $s=\arg \max_i(\frac{z_i}{z^*_{e_3i}})^2$.
		Similar to previous arguments we have that $-z_i=-\frac{z_iz_{e_3i}^*}{z_{e_3i}^*}\geq -L_s^{\frac{1}{2}}z_{e_3i}^*$, which holds as a equality when $i=s$ and otherwise holds as an strict inequality. Then, consider the set $U_3=\{z| z<z_{e_3}^*\}$, $\forall z\in U_{3}$, we get:
		\begin{equation*}
			\begin{aligned}
				\dot{V}&\geq\frac{2z_{s}^{2}}{(z_{s}^{*})^{2}}\big(1+w_{s}+\sum _{jk}a_{sjk} z_{j}^{*}z_{k}^{*} Ls\big)\quad\\
				&=\frac{2z_{s}^{2}}{(z_{s}^{*})^{2}}(1+w_{s})(1-Ls)>0,\quad
			\end{aligned}
		\end{equation*}
		This guarantees that all trajectories in $U_3$ finally enter $U_1\cup U_2$. Thus, the equilibrium point $z_{e_3}^*$ is globally asymptotically stable.	
	\end{proof}
	
	\begin{remark}
		Recall that a strictly diagonally dominant tensor with all diagonal entries is an $\mathcal{H}^{+}$-tensor. Therefore, the all-neuron coexistence is preferred when $k$ is small.
	\end{remark}

	\section{The stability of equilibrium points where losers exist.}\label{stability-winner-share-all}
	This section primarily discusses the competitive evolution results of System (\ref{model-1}) for non-all-neurons-coexistence equilibrium points, where losers are present.
	
	\begin{definition}(\cite{Fukai-97})
		With respect to an initial condition, after competitive evolution, when only the neuron with the highest initial input becomes the winner, it is referred to as the WTA phenomenon. If only one winner is activated, but this winner is not necessarily the neuron with the highest input, this behavior is called the VWTA. When the system ultimately produces multiple winners, it is termed WSA. 
	\end{definition}

	\begin{theorem}\label{three-result}
		When $k=1$, system (\ref{model-1}) evolves through competition into a WTA phenomenon, with the sole winner having the highest initial input $W_{max}$. When $k>1$, system (\ref{model-1}) exhibits VWTA behavior after competition. When $k<1$, the system (\ref{model-1})'s competitive outcome is either WSA or WTA.
	\end{theorem}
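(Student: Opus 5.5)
The plan is to first recast system (\ref{model-1}) as a gradient-like (Shahshahani-type) system, so that every bounded trajectory converges to the equilibrium set. Then, for each regime of $k$, I would classify which boundary equilibria can attract trajectories, using the Jacobian computations already made in Theorems \ref{theorem-1-equilibrium-points} and \ref{theorem-2-all-neurons-coexistence-equilibrium-points}.

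\textbf{Step 1 (gradient structure and boundedness).} Starting from the rewritten form \eqref{eq:rewrite}, $\dot z_i=z_i\big(b_i-kS(z)+(k-1)z_i^2\big)$, define
\[
\Phi(z)=\sum_{i=1}^n b_i z_i-\frac{k}{3}\Big(\sum_{\ell=1}^n z_\ell\Big)^3+\frac{k-1}{3}\sum_{i=1}^n z_i^3 .
\]
Then $\partial_i\Phi=b_i-kS+(k-1)z_i^2$, so $\dot z=\operatorname{diag}(z)\nabla\Phi$. Along solutions in $\mathbb R^n_{+}$ this gives $\dot\Phi=\sum_i z_i(\partial_i\Phi)^2\ge 0$.

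Write $s=\sum_\ell z_\ell$. The cubic part of $\Phi$ tends to $-\infty$ as $s\to\infty$:
\begin{itemize}
\item for $k\ge 1$, use $\sum_i z_i^3\le s^3$;
\item for $0<k<1$, both cubic terms are already nonpositive.
\end{itemize}
Hence the superlevel sets of $\Phi$ in $\mathbb R^n_+$ are compact, so trajectories are bounded. LaSalle's principle then shows that every $\omega$-limit set consists of equilibria. Equilibria are obtained from (\ref{model-d}) for the finitely many winner sets $\mathcal D$, and I expect them to be isolated for generic $w$. Under that assumption, each trajectory converges to a single equilibrium. By Theorem \ref{theorem-1-equilibrium-points}, the origin is a repeller, so $d\ge1$.

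\textbf{Step 2 (case $k=1$).} Here $\dot z_i=z_i(b_i-S)$, so for positive initial data
\[
\frac{d}{dt}\ln\frac{z_i}{z_j}=b_i-b_j=w_i-w_j .
\]
If $w_{i^*}=W_{max}$ is the unique maximum, then $z_j/z_{i^*}\to0$ exponentially for every $j\ne i^*$. Combining this with boundedness and Step 1, the limit has $z_j=0$ for $j\neq i^*$. The remaining equation gives $z_{i^*}^2=b_{i^*}$. This is WTA with the highest-input neuron as winner, matching the $k=1$ part of Theorem \ref{theorem-1-equilibrium-points} and Proposition \ref{prop:all_ones_tensor_unique_pos}.

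\textbf{Step 3 (single-winner equilibria for $k\neq1$).} For $\mathcal D=\{i\}$, the equilibrium is $z_i^*=\sqrt{b_i}$, since $b_i-kz_i^2+(k-1)z_i^2=b_i-z_i^2$. The Jacobian at this point is block triangular:
\begin{itemize}
\item the diagonal entry in direction $i$ is $-2b_i<0$;
\item the transversal eigenvalues are $\partial\dot z_j/\partial z_j=b_j-kb_i$ for $j\ne i$.
\end{itemize}
So the single-winner equilibrium at $i$ is asymptotically stable iff $b_j<kb_i$ for all $j\neq i$, i.e. $b_i>b_{\max}/k$ (using $b_i>b_{\max}/k$ in the degenerate case $j$ attains the max).

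\textbf{Step 4 (case $k>1$).} By Theorem \ref{theorem-1-equilibrium-points}, all equilibria with $d>1$ are unstable, so attracting outcomes have exactly one winner. By Step 3, every $i$ with $1+w_i>(1+W_{max})/k$ gives a stable single-winner equilibrium. When $k>1$ there are typically several such $i$, including non-maximal ones. Which of them is selected then depends on the initial condition, which is exactly VWTA.

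\textbf{Step 5 (case $k<1$).} By Step 3, a single-winner equilibrium at $i$ requires $b_j<kb_i<b_i$ for all $j\ne i$. This forces $i$ to be the maximal-input neuron, so any single-winner outcome is WTA. Any other attracting equilibrium has $d\ge2$, which is WSA; its local stability within its face follows from the Lyapunov matrix argument in Theorem \ref{theorem-2-all-neurons-coexistence-equilibrium-points}(iii) applied to the subsystem on $\mathcal D$. Hence the outcome is either WSA or WTA.

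\textbf{Main obstacle.} The delicate step is passing from "trajectories converge to some equilibrium" to "the outcome is an asymptotically stable one." Unstable (saddle) equilibria can still attract trajectories from their stable manifolds. I would handle this by:
\begin{itemize}
\item arguing that those stable manifolds are lower-dimensional, hence of measure zero, for hyperbolic equilibria;
\item noting that $\Phi$ strictly increases off equilibria, which rules out recurrence.
\end{itemize}
The statement is therefore understood for generic initial conditions and generic $w$. This is where I expect most of the technical care to be needed, along with the isolation of equilibria assumed in Step 1.
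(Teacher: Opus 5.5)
Your proposal is correct, and it takes a genuinely different and stronger route than the paper. The paper's proof is purely local. At a boundary equilibrium $z^*_{e_4}$ it writes the Jacobian in block upper-triangular form, with $J_{winner}$ and a diagonal transversal block $Res$ whose entries are $1+w_i-kr^2$, and then reads off which equilibria are locally asymptotically stable in each regime of $k$. Your Step 3 is exactly this computation for $d=1$. The paper never shows that trajectories actually reach these equilibria: for $k=1$ it defers convergence to Theorem \ref{global-WTA}, and for $k\neq1$ the phrase ``evolves into'' rests only on local stability. Your potential $\Phi$ turns (\ref{model-1}) into the Shahshahani gradient system $\dot z=\operatorname{diag}(z)\nabla\Phi$. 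This supplies the missing global part: boundedness, no recurrent behaviour, and convergence of every orbit to an equilibrium. You therefore get a genuine selection statement. The cost is the qualifier ``for generic initial data'' and a measure-zero argument, neither of which the paper's local reading needs.

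A few refinements.
\begin{enumerate}
\item In Step 3, rewriting the condition as $b_i>b_{\max}/k$ is valid only for $k>1$. For $k<1$ and $i$ the maximiser, the condition is $\max_{j\neq i}b_j<kb_i$; you do use the correct form in Step 5.
\item Isolation of equilibria needs no genericity when $k\neq1$. On a face with $k<1$, the equation $s=\sum_{i\in\mathcal D}\sqrt{(b_i-ks^2)/(1-k)}$ has at most one root by monotonicity. For $k>1$ the analogous function is analytic and not identically zero, so it has finitely many roots.
\item The equilibria you discard need not be hyperbolic, so use the center-stable-manifold version of the measure-zero argument. What that argument needs is an eigenvalue with positive real part. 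Derive this from your own structure rather than citing Theorem \ref{theorem-1-equilibrium-points}: its step ``$v^\top Jv>0$ implies an eigenvalue with positive real part'' is false for general nonsymmetric $J$.
\end{enumerate}
For the third point, at an equilibrium interior to face $\mathcal D$ you have $\partial_i\Phi=0$ for $i\in\mathcal D$. Hence $J_{\mathcal D}=\operatorname{diag}(z^*_{\mathcal D})\nabla^2_{\mathcal D\mathcal D}\Phi$, which is similar to a symmetric matrix congruent to $\nabla^2_{\mathcal D\mathcal D}\Phi=-2kr\mathbf 1\mathbf 1^\top+2(k-1)\operatorname{diag}(z^*_{\mathcal D})$. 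By Sylvester's law of inertia it has at least $d-1$ positive real eigenvalues when $k>1$. For a non-maximal single winner with $k<1$, the transversal eigenvalue $b_{\max}-kb_i>0$ does the job.
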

	
	\begin{proof}
		For the case where the number of winners $1\leq d<n$, with a simple renumbering of the winners, we can set the equilibrium point as  $z_{e_4}^*=(z_{1}^*, z_{2}^*, \cdots, z_{d}^*, 0, \cdots, 0)^T$. At this point, the Jacobian matrix of system (\ref{model-1}) is denoted as $J'=\begin{pmatrix}J_{winner}&\Omega\\0_{(n-d) \times d}&Res\end{pmatrix}$. where
		$
		J_{\mathrm{winner}}=
		\begin{pmatrix}
			\alpha_1 & T_1 & \cdots & T_1\\
			T_2 & \alpha_2 & \cdots & T_2\\
			\vdots & \vdots & \ddots & \vdots\\
			T_d & T_d & \cdots & \alpha_d
		\end{pmatrix}$,  
		 $\Omega=\left(\begin{matrix}T_1&T_1&\cdots&T_1\\T_2&T_2&\cdots&T_2\\\vdots&\vdots&\ddots&\vdots\\T_d&T_d&\cdots&T_d\end{matrix}\right)$, and $\alpha_i= T_i + 2(k-1)z_i^{*2}~(i=1,\ldots,d)$. The eigenvalues of matrix $J'$ consist of the eigenvalues of $J_{winner}$ and the eigenvalues of 
		$Res$.
		
		Without loss of generality, we assume $i$ is the loser and $j$ is the winner. Then, for the diagonal elements of 
		$Res$, we have:
		\begin{equation}\label{RES}
			\frac{\partial\dot{z}_i}{\partial z_i}=1+w_i-k\sum_{ p,q \in \mathcal{D}}z_p^*z_q^*.
		\end{equation}
		For the off-diagonal elements of $Res$, one obtain:
		\begin{equation*}
			\frac{\partial\dot{z}_l}{\partial z_j}=0,\quad for~ j\neq i.
		\end{equation*}
		
		According to Equation (\ref{model-1}), it is straightforward to deduce that for the winner $j$, the  equality $1+w_j-\left(z_j^*\right)^2-k\sum_{p,q \in \mathcal{D}, p\neq q}z_p^*z_q^*=0$	holds, i.e. 
		$1+w_j-\left(z_j^*\right)^2-k\sum_{p,q \in \mathcal{D}}z_p^*z_q^*+k\left(z_j^*\right)^2=0$.
		Substituting $k\sum_{p,q\in\mathcal{D}}z_p^*z_q^*=1+w_j+(k-1)\Big(z_j^*\Big)^2$ into Equation (\ref{RES}), we obtain the following equality: $\frac{\partial\dot{z}_i}{\partial z_i}=w_i-w_j-(k-1)\left(z_j^*\right)^2$. 
		
		(i) When $k=1$, according to Theorem \ref{theorem-1-equilibrium-points}, we have $d=1$. Moreover, similar to the proof of Theorem \ref{theorem-2-all-neurons-coexistence-equilibrium-points}, we conclude that the eigenvalues of $J_{winner}$ are non-positive. Thus, the eigenvalues of the matrix $Res$ are negative if and only if $w_{j}=w_{max}$, indicating that the equilibrium point is locally lyapunov stable. This indicates that when $k=1$, the WTA equilibrium point of system (\ref{model-1}) is stable under small perturbations, where the neuron with the maximum initial input emerges as the winner. However, without further analysis, the system may not necessarily converge to this state. The detailed asymptotic stability analysis will be given in the next section.
		
		(ii) When $k>1$, Theorem \ref{theorem-1-equilibrium-points} also indicates that there is only one winner in this case. At this point, $J_{winner}=-2kz_{j}^{*2}<0$. It is worth noting that if $k$ is sufficiently large, $w_j=w_{max}$ is not required for the eigenvalues of the matrix $Res$ to be negative. In other words, when $k>1$, the system (\ref{model-1}) will achieve VWTA.
		
		(iii) When $k<1$, since the similar proof as in the Theorem \ref{theorem-2-all-neurons-coexistence-equilibrium-points} applies to the winner subsystem, we have $J_{\mathrm{winner}}$ is Hurwitz.
		
		Next, denote $r=\sum_{p\in D}z_p^*>0$. Recall that $\mathrm{Res}$ is diagonal and for any loser $i\notin D$,
		\[
		\frac{\partial \dot z_i}{\partial z_i}\Big|_{z=z_{e_4}^*}
		=1+w_i-k\sum_{p,q\in D}z_p^*z_q^*
		=1+w_i-k r^2.
		\]
		Thus, the equilibrium $z_{e_4}^*$ is locally asymptotically stable if and only if
		\[
		1+w_i-k r^2<0,\qquad \forall\, i\notin D.
		\]
		In particular, when $k<1$, there may exist locally asymptotically stable equilibria with $d\ge 2$ (WSA).
		On the other hand, for $d=1$ (a single winner $j$), the above condition becomes
		$1+w_i-k(1+w_j)<0$ for all $i\neq j$, which shows that a WTA equilibrium may also be locally asymptotically stable
		if the input of the winner is sufficiently larger than the others.
		Therefore, when $k<1$, the competitive outcome can be either WSA or WTA.
	\end{proof}
	
	\section{\textsc{The global stability of the WTA solution when $k=1$}.}\label{global-stability-WTA}
	In this section, we aim to discuss the global stability of System (\ref{model-1}) when $k=1$. In this case, the original system can be rewritten in the following form:
	\begin{equation}\label{k=1-model}
		\dot{z}_\iota=z_i(1+w_i-\sum_{p,q=1}^{n}z_pz_q),
	\end{equation}
	
	Without loss of generality, we may assume that the external inputs $w_i$ follow the inequality below with respect to their magnitudes: $w_1> w_2\geq w_3\geq\cdots\geq w_{n-1}\geq w_n\geq0$.
	
	\begin{theorem}\label{global-WTA}
		System (\ref{k=1-model}) will evolve into a WTA competition outcome under any initial strictly positive conditions, and the globally stable equilibrium point will be $\left(\sqrt{1+w_{max}},0,\cdots,0\right)^T$.
	\end{theorem}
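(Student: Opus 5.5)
Since $k=1$, write $S(t)=\sum_{\ell=1}^n z_\ell(t)$; then $\sum_{p,q}z_pz_q=S^2$ and system (\ref{k=1-model}) reads $\dot z_i=z_i(1+w_i-S^2)$. All agents therefore feel the same inhibition term $S^2$ and differ only through $w_i$. The plan exploits this with ratio arguments, in the spirit of the classical Lotka--Volterra exclusion principle.

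\textbf{Step 1 (positivity and boundedness).} The positive orthant is invariant, because each $\dot z_i$ has the factor $z_i$; hence $z_i(t)>0$ for all $t$ when $z_i(0)>0$. Next,
\[
\dot S=\sum_i z_i(1+w_i)-S^3\le (1+w_1)S-S^3 .
\]
Thus $\limsup_{t\to\infty} S(t)\le\sqrt{1+w_1}$, and trajectories are bounded.

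\textbf{Step 2 (extinction of the losers).} For $i\ge2$,
\[
\frac{d}{dt}\ln\frac{z_i}{z_1}=w_i-w_1<0 .
\]
Hence $z_i(t)/z_1(t)=\frac{z_i(0)}{z_1(0)}e^{-(w_1-w_i)t}\to0$ exponentially. Since $z_1$ is bounded by Step 1, $z_i(t)\to0$ for every $i\ge2$. This is exactly where the strict inequality $w_1>w_2$ is used.

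\textbf{Step 3 (convergence of the winner).} This is the main obstacle: we must exclude $z_1\to0$, and then show $z_1\to\sqrt{1+w_1}$. For the first part, suppose $S$ stays small. Then $\dot z_1=z_1(1+w_1-S^2)\ge z_1(1+w_1)/2$ whenever $S^2\le(1+w_1)/2$, which forces $z_1$ to grow. Thus $\liminf_{t\to\infty} S\ge c>0$. Combined with Step 2, which gives $S\le z_1\bigl(1+\epsilon(t)\bigr)$ with $\epsilon(t)\to0$, this yields $\liminf_{t\to\infty} z_1>0$.

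Now write $\dot z_1=z_1\bigl(1+w_1-z_1^2-\eta(t)\bigr)$ with
\[
\eta(t)=S^2-z_1^2=2z_1\sum_{i\ge2}z_i+\Bigl(\sum_{i\ge2}z_i\Bigr)^2\to0 .
\]
This is an asymptotically autonomous scalar logistic-type equation. Its limit equation $\dot y=y(1+w_1-y^2)$ has the globally attracting positive equilibrium $\sqrt{1+w_1}$. We conclude either by Markus' theorem on asymptotically autonomous systems, or directly with an $\epsilon$-comparison argument: for $t\ge T_\epsilon$ we have $|\eta|<\epsilon$, so $z_1$ is squeezed between the solutions of $\dot y=y(1+w_1\pm\epsilon-y^2)$, whose limits are $\sqrt{1+w_1\pm\epsilon}$. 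Letting $\epsilon\to0$ gives $z_1\to\sqrt{1+w_{\max}}$.

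\textbf{Step 4 (local stability).} Local (Lyapunov) stability of $\bigl(\sqrt{1+w_{\max}},0,\dots,0\bigr)^T$ follows from the Jacobian computed in Theorem \ref{three-result}(i). The winner block is $-2(1+w_1)<0$, and the loser block has diagonal entries $w_i-w_1<0$, so the Jacobian is Hurwitz. Together with the global attraction of Steps 1--3, this gives global asymptotic stability on $\mathbb{R}^n_{++}$, i.e., the WTA outcome.
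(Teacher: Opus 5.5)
Your argument is correct and is essentially the paper's: the same log-ratio identity $\frac{d}{dt}\ln(z_i/z_1)=w_i-w_1<0$ drives the losers to zero, and $z_1$ is then pinned to $\sqrt{1+w_1}$ by a scalar $\varepsilon$-comparison. The paper phrases this step as a trapping interval $(\sqrt{1+w_1}-\delta,\sqrt{1+w_1}+\delta)$ obtained from $(\sum_\ell z_\ell)^2\le(1+\varepsilon)z_1^2$, whereas you squeeze $z_1$ between two logistic-type comparison solutions. Your Step 4 Jacobian check is a genuine addition, since the paper's proof establishes only attractivity and leaves the Lyapunov-stability part of ``globally stable'' implicit; also, the first half of your Step 3 (excluding $z_1\to 0$, which is only loosely argued) is unnecessary on the comparison route, because the lower comparison solution started at $z_1(T_\varepsilon)>0$ already tends to $\sqrt{1+w_1-\varepsilon}$.
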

	
	\begin{proof}
		Consider strictly positive initial conditions $z(0)$.
		For system (\ref{k=1-model}) with $k=1$, we can write
		\[
		\dot z_i=z_i\Big(1+w_i-\sum_{p,q=1}^{n}z_pz_q\Big),\qquad i=1,\dots,n.
		\]
		Denote
		\[
		S(t)=\sum_{p,q=1}^{n}z_p(t)z_q(t)=\Big(\sum_{\ell=1}^{n}z_\ell(t)\Big)^2.
		\]
		
		For any $i\neq 1$, since $z_i(t)>0$ and $z_1(t)>0$ for all $t$,
		$\frac{d}{dt}\ln\frac{z_i(t)}{z_1(t)}
		=\frac{\dot z_i}{z_i}-\frac{\dot z_1}{z_1}
		=\big(1+w_i-S(t)\big)-\big(1+w_1-S(t)\big)=w_i-w_1<0.$
		
		Thus,
		\begin{equation}\label{eq:ratio_exp}
			\frac{z_i(t)}{z_1(t)}=\frac{z_i(0)}{z_1(0)}\,e^{(w_i-w_1)t},
			\qquad i=2,\dots,n,
		\end{equation}
		which implies
		\begin{equation}\label{eq:ratio_to_zero}
			\lim_{t\to\infty}\frac{z_i(t)}{z_1(t)}=0,\qquad i=2,\dots,n.
		\end{equation}
		
		Note that $S(t)\ge z_1(t)^2$ and hence
		\[
		\dot z_1(t)=z_1(t)\big(1+w_1-S(t)\big)\le z_1(t)\big(1+w_1-z_1(t)^2\big).
		\]
		It follows that $z_1(t)$ is bounded above by the solution $y(t)$ of
		\[
		\dot y(t)=y(t)\big(1+w_1-y(t)^2\big),\qquad y(0)=z_1(0).
		\]
		In particular,
		\[
		z_1(t)\le y(t)\le \max\Big\{z_1(0),\,\sqrt{1+w_1}\Big\},\qquad \forall\, t\ge 0.
		\]
		
		We next show that $\lim_{t\to\infty}z_1(t)$ indeed exists.
		From \eqref{eq:ratio_to_zero}, we have $\sum_{i=2}^n z_i(t)=o(z_1(t))$ as $t\to\infty$.
		Hence,
		$S(t)=\Big(z_1(t)+\sum_{i=2}^n z_i(t)\Big)^2
		= z_1(t)^2\Big(1+\frac{\sum_{i=2}^n z_i(t)}{z_1(t)}\Big)^2
		= z_1(t)^2\big(1+o(1)\big).$
		Fix an arbitrary $\delta>0$. Since $o(1)\to 0$, we can choose $\varepsilon\in(0,1)$ such that
		\[
		\sqrt{\frac{1+w_1}{1+\varepsilon}} \ge \sqrt{1+w_1}-\delta.
		\]
		Then there exists $T_\delta>0$ such that for all $t\ge T_\delta$,
		\[
		S(t)\le (1+\varepsilon)\,z_1(t)^2,
		\]
		and thus
		\[
		\dot z_1(t)=z_1(t)\big(1+w_1-S(t)\big)\ge z_1(t)\big(1+w_1-(1+\varepsilon)z_1(t)^2\big).
		\]
		In particular, whenever $0<z_1(t)\le \sqrt{1+w_1}-\delta$ and $t\ge T_\delta$, we have $\dot z_1(t)>0$.
		
		On the other hand, whenever $z_1(t)\ge \sqrt{1+w_1}+\delta$, we have
		\[
		S(t)=\Big(\sum_{i=1}^n z_i(t)\Big)^2\ge z_1(t)^2>(\sqrt{1+w_1}+\delta)^2>1+w_1,
		\]
		so $1+w_1-S(t)<0$ and hence $\dot z_1(t)<0$.

		Consequently, for all $t\ge T_\delta$, the vector field points strictly inside the interval
		$\big(\sqrt{1+w_1}-\delta,\sqrt{1+w_1}+\delta\big)$ on its boundary:
		$\dot z_1(t)>0 \ \text{whenever}\ z_1(t)\le \sqrt{1+w_1}-\delta,
		\qquad
		\dot z_1(t)<0 \ \text{whenever}\ z_1(t)\ge \sqrt{1+w_1}+\delta.$
		Since $z_1(t)$ is continuous and bounded, it cannot remain outside this interval for arbitrarily large times;
		otherwise it would be strictly increasing (resp. decreasing) on an unbounded time set while staying bounded, a contradiction.
		Hence, there exists $\tilde T_\delta\ge T_\delta$ such that
		\[
		\sqrt{1+w_1}-\delta \le z_1(t)\le \sqrt{1+w_1}+\delta,\qquad \forall\, t\ge \tilde T_\delta.
		\]
		Since $\delta>0$ is arbitrary, it follows that $\lim_{t\to\infty}z_1(t)$ exists.
		
		Let $L=\lim_{t\to\infty}z_1(t)$. By \eqref{eq:ratio_to_zero}, we have $z_i(t)\to 0$ for all $i\ge 2$.
		Hence $S(t)=\big(\sum_{\ell=1}^n z_\ell(t)\big)^2\to L^2$.
		Taking limits in $\dot z_1=z_1(1+w_1-S(t))$ yields
		\[
		0=\lim_{t\to\infty}\dot z_1(t)=L\big(1+w_1-L^2\big),
		\]
		so $L=\sqrt{1+w_1}$ (since $z_1(t)>0$).
		Therefore, $z_1(t)\to \sqrt{1+w_1}$ and \eqref{eq:ratio_to_zero} implies $z_i(t)\to 0$ for all $i=2,\dots,n$.
		
		Thus, system (\ref{k=1-model}) converges to $\left(\sqrt{1+w_{max}},0,\cdots,0\right)^T$.
	\end{proof}

	\section{Extension to the competition over uniform hypergraphs of arbitrary order}
	Now, we consider an $n$-neuron system with uniform arbitrary-body high-order interactions, the firing rate dynamics can be formulated as:
	\begin{equation}\label{model-3} 
		\begin{aligned}
			\frac{dz_{i}}{dt} ={} & z_{i}\big(1 - z_{i}^{t-1} 
			- k \!\!\sum_{(p_2,\cdots,p_t)\neq(i,\cdots,i)} z_{p_2} \cdots z_{p_t} \\
			& \quad + w_{i} \big), \quad i = 1,2,\cdots,n.
		\end{aligned}
	\end{equation}
	Neurons $p_2,\cdots, p_t$ are neighbors connected to $i$ through hyperedges. Similarly, the dynamics can be rewritten as:
	
	\begin{equation}\label{model-3-tensor} 
		\dot{z}=diag(z)(b+Cz^{t-1}),
	\end{equation}
	
	Firstly, we can derive the corresponding results for the number of the winners.
	
	\begin{theorem}\label{theorem-2u-equilibrium-points}
		For system (\ref{model-3}), when $k\geq1$, there can only be one winner, i.e. $d=1$. When $k<1$, multiple winners may exist, i.e. $d\geq 1$. 
	\end{theorem}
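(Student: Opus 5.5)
The plan is to mirror the proof of Theorem \ref{theorem-1-equilibrium-points}, with the quadratic structure replaced by the degree-$(t-1)$ one. Write $b_i=1+w_i$ and $S(z)=\big(\sum_{\ell=1}^n z_\ell\big)^{t-1}$. Expanding the sum over all index tuples $(p_2,\dots,p_t)$ gives
\[
\sum_{(p_2,\dots,p_t)\neq(i,\dots,i)} z_{p_2}\cdots z_{p_t}=S(z)-z_i^{t-1}.
\]
Hence system (\ref{model-3}) becomes
\[
\dot z_i=z_i\big(b_i-kS(z)+(k-1)z_i^{t-1}\big).
\]

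\textbf{Step 1: the origin and the winner set.} First, the Jacobian at $z=0$ is $\operatorname{diag}(b_i)$, whose eigenvalues are all positive, so the origin is unstable and $d\ge 1$. For a winner set $\mathcal D$ with $|\mathcal D|=d$, the restricted equation $-C^{(\mathcal D)}z^{t-1}=b^{(\mathcal D)}$ has diagonal entry $1$ and off-diagonal absolute sum $k(d^{t-1}-1)$. Lemma \ref{unique-positive-solution-1} therefore gives a unique positive solution whenever $d<\big((k+1)/k\big)^{1/(t-1)}$. When $k<1$ this bound exceeds $2^{1/(t-1)}>1$ and may permit $d\ge 2$, which gives the ``may exist'' part.

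\textbf{Step 2: instability of multi-winner equilibria for $k>1$.} At an equilibrium with $d>1$ winners, put $r=\sum_{\ell\in\mathcal D}z_\ell^*$ and use $b_i-kS^*+(k-1)(z_i^*)^{t-1}=0$ for $i\in\mathcal D$. The winner block of the Jacobian is then
\[
J_{\mathcal D}=u\mathbf 1^\top+(t-1)(k-1)\operatorname{diag}\big((z_i^*)^{t-1}\big),\qquad u_i=-k(t-1)r^{t-2}z_i^*.
\]
Choose $v\ne0$ with $\mathbf 1^\top v=0$. Then $v^\top J_{\mathcal D}v=(t-1)(k-1)\sum_i (z_i^*)^{t-1}v_i^2>0$, so the symmetric part of $J_{\mathcal D}$ has a positive eigenvalue. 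The plan is to conclude from this, exactly as in Theorem \ref{theorem-1-equilibrium-points}, that $J$ has an eigenvalue with positive real part, so only $d=1$ survives as a stable outcome.

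\textbf{Step 3: the case $k=1$.} Here $C^{(\mathcal D)}$ is minus the all-ones tensor of order $t$. By Proposition \ref{prop:all_ones_tensor_unique_pos}, a positive solution with $d\ge 2$ requires $b_i$ to be constant on $\mathcal D$, i.e. equal inputs. For $d\ge 2$ the Jacobian block is $u\mathbf 1^\top$, which has $d-1$ zero eigenvalues. The plan is to argue, by the ratio argument of Theorem \ref{global-WTA},
\[
\frac{d}{dt}\ln\frac{z_i}{z_j}=w_i-w_j,
\]
that multi-winner configurations are not robust outcomes. With distinct inputs this ratio argument gives $d=1$ directly.

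\textbf{Main obstacle.} The hard step is the inference in Step 2. Positivity of $v^\top J v$ only shows that the symmetric part is indefinite, not that $J$ itself is unstable. I would repair it with the change of variables used in Theorem \ref{theorem-2-all-neurons-coexistence-equilibrium-points}. Set $P=\operatorname{diag}(1/z_i^*)$; then
\[
PJ_{\mathcal D}=-k(t-1)r^{t-2}\mathbf 1\mathbf 1^\top+(t-1)(k-1)\operatorname{diag}\big((z_i^*)^{t-2}\big),
\]
which is symmetric. $J_{\mathcal D}$ is similar to $P^{1/2}J_{\mathcal D}P^{-1/2}=P^{-1/2}(PJ_{\mathcal D})P^{-1/2}$, a symmetric matrix congruent to $PJ_{\mathcal D}$. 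So $J_{\mathcal D}$ has real eigenvalues, and by Sylvester's law of inertia their signs match those of $PJ_{\mathcal D}$. Taking $v$ with $\mathbf 1^\top v=0$ gives $v^\top PJ_{\mathcal D}v>0$, so $PJ_{\mathcal D}$, and hence $J_{\mathcal D}$, has a positive eigenvalue. Finally, the full Jacobian is block upper triangular: the loser rows are $\operatorname{diag}\big(b_i-kS^*\big)$ and vanish in the winner columns. Its spectrum therefore contains that of $J_{\mathcal D}$, which yields instability rigorously.
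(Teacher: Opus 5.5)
Your proposal has the same skeleton as the paper, whose proof of this theorem is just ``the same as Theorem~\ref{theorem-1-equilibrium-points}''. That skeleton is: unstable origin, the diagonal-dominance bound from Lemma~\ref{unique-positive-solution-1}, the rank-one-plus-diagonal winner block tested on $v\perp\mathbf 1$, and Proposition~\ref{prop:all_ones_tensor_unique_pos} at $k=1$. The real difference is the instability step for $k>1$, and there your argument is the sound one.

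The paper passes from $v^\top J_{\mathcal D}v>0$ to an eigenvalue with positive real part ``by the Rayleigh quotient''. That only shows the symmetric part is indefinite: the Hurwitz matrix $\begin{pmatrix}-1&4\\0&-1\end{pmatrix}$ also has an indefinite symmetric part. The paper then transfers instability to the full Jacobian because $J_{\mathcal D}$ is a principal submatrix, which is likewise not valid in general. Your repair closes both holes:
\begin{itemize}
\item $PJ_{\mathcal D}$ is symmetric for $P=\operatorname{diag}(1/z_i^*)$, so $J_{\mathcal D}$ is similar to $P^{-1/2}(PJ_{\mathcal D})P^{-1/2}$.
\item By Sylvester's law, the positive direction $v\perp\mathbf 1$ of $PJ_{\mathcal D}$ becomes a genuine positive real eigenvalue of $J_{\mathcal D}$.
\item The explicit block upper-triangular form (loser rows vanish on winner columns) then carries that eigenvalue into the full Jacobian.
\end{itemize}
This is the device from part (iii) of Theorem~\ref{theorem-2-all-neurons-coexistence-equilibrium-points}, now used with the opposite sign of $k-1$.

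Two smaller points.

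At $k=1$, do not argue that multi-winner configurations are ``not robust''. If two neurons share the maximal input, $z_i/z_j$ is conserved and both survive, so $d=1$ is simply false there. The statement silently drops the distinct-input hypothesis that Theorem~\ref{theorem-1-equilibrium-points} carries. What your Step~3 actually establishes, via Proposition~\ref{prop:all_ones_tensor_unique_pos}, is that caveated version, and you should state it as such.

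In Step~1, $((k+1)/k)^{1/(t-1)}>2^{1/(t-1)}$ does not by itself admit $d=2$, since $2^{1/(t-1)}<2$ for $t\ge 3$. Lemma~\ref{unique-positive-solution-1} covers $d\ge 2$ only once $k<1/(2^{t-1}-1)$. That is enough for the ``may exist'' claim, but say it that way.
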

	
	\begin{proof}
		The proof is the same to the proof of Theorem \ref{theorem-1-equilibrium-points}.
	\end{proof}
	
	\begin{theorem}\label{theorem-7-equilibrium-points}
		System \eqref{model-3} has one unique positive equilibrium $z^*$ if $-C$ is an irreducible nonnegative $\mathcal{H}^{+}$-tensor. The equilibrium point $z^*$ is globally asymptotically stable.
	\end{theorem}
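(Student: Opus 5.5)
The proof generalizes the argument of Theorem \ref{theorem-3-all-neurons-coexistence-equilibrium-points} from order $3$ to order $t$. There are two parts: existence and uniqueness of the positive equilibrium, and then its global attractivity.

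\emph{Existence and uniqueness.} A positive equilibrium of (\ref{model-3-tensor}) must satisfy $b+Cz^{t-1}=0$, that is, $(-C)z^{t-1}=b$. Since every $w_i>0$, we have $b\gg 0$. Under the hypothesis that $-C$ is an $\mathcal H^{+}$-tensor, Lemma \ref{unique-positive-solution-2} gives a unique positive solution $z^*$. Irreducibility of $-C$ is not needed here; it is used later to guarantee that every $z_j^*$ enters every equation.

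\emph{Global attractivity.} I would use the min/max ratio Lyapunov construction. Set $y_i=z_i/z_i^*$ and note that $-C$ is nonnegative. For each $i$,
\[
\Big(\sum_{i_2,\dots,i_t}(-C)_{i i_2\cdots i_t}z^*_{i_2}\cdots z^*_{i_t}\Big)=b_i .
\]
Let $m$ be an index attaining $y_{\min}=\min_i y_i$. Monotonicity of the nonnegative form gives $(-Cz^{t-1})_m\ge y_{\min}^{t-1}b_m$, and hence
\[
\dot z_m/z_m\le b_m\big(1-y_{\min}^{t-1}\big).
\]
By the same argument, at an index $s$ attaining $y_{\max}=\max_i y_i$ we get
\[
\dot z_s/z_s\ge b_s\big(1-y_{\max}^{t-1}\big).
\]
Read correctly, so that each ratio is pushed toward $1$, these inequalities show two things. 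Using Dini derivatives of the max/min of finitely many $C^1$ functions, $y_{\min}$ is nondecreasing whenever $y_{\min}<1$ and $y_{\max}$ is nonincreasing whenever $y_{\max}>1$. I would therefore take
\[
V(z)=\max\Big\{\max_i \ln y_i,\;-\min_i\ln y_i\Big\}=\max_i|\ln y_i|,
\]
which is positive definite about $z^*$ on $\mathbb R^n_{++}$ and radially unbounded toward the boundary and toward infinity. This is cleaner than the separate sets $U_1,U_2,U_3$ used earlier. The Dini derivative of $V$ is $\le0$, and at an extremal index it is strictly negative unless $y_{\min}=y_{\max}=1$. Compactness of sublevel sets of $V$ in $\mathbb R^n_{++}$ then gives bounded trajectories, and a LaSalle-type argument for Dini derivatives gives convergence to $z^*$ from every positive initial condition.

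\emph{Main obstacle.} The hard step is the strict decrease. The problem is that $\max_i|\ln y_i|$ may be attained at an index whose neighbours already sit at the extremal value, so the bound is tight only if all the $z_{i_2},\dots,z_{i_t}$ appearing with nonzero coefficients share the extremal ratio. This is where irreducibility enters. Suppose $V$ stays constant and positive along an $\omega$-limit trajectory. Then the set of indices attaining the extremum must be closed under the hyperedge relation, and irreducibility forces it to be all of $\{1,\dots,n\}$. In that case $y_i\equiv c\ne1$ for all $i$, and then $\dot z_i/z_i=b_i(1-c^{t-1})\neq0$ has a definite sign, contradicting constancy of $V$. So the invariant set where $V$ fails to decrease is $\{z^*\}$, and global asymptotic stability on the positive orthant follows.
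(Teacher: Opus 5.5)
\textbf{There is a genuine gap, and it is the step you pass over with ``read correctly''.} Your two inequalities are true, but for this competitive system they point the wrong way. The interaction enters \eqref{model-3-tensor} with a minus sign. So $(-Cz^{t-1})_m\ge y_{\min}^{t-1}b_m$ is a \emph{lower} bound on the inhibition at the minimal-ratio index. It yields only the upper bound $\dot z_m/z_m\le b_m(1-y_{\min}^{t-1})$, which is positive, and hence uninformative, when $y_{\min}<1$. Symmetrically, the bound at the maximal index is a lower bound and says nothing when $y_{\max}>1$.

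What the inequalities actually give is the mirror image of your claim: $y_{\min}$ decreases while $y_{\min}>1$, and $y_{\max}$ increases while $y_{\max}<1$. In other words, $\{y_{\min}\le 1\}$ and $\{y_{\max}\ge 1\}$ are forward invariant. A coordinate with ratio below $1$ can still be pushed further down by the inhibition of coordinates whose ratios exceed $1$, so $V=\max_i|\ln y_i|$ can increase.

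Concretely, take $t=3$, $n=2$, $k=0.3$ and $w=(4.5,10.1)^T$.
\begin{itemize}
\item All entries of $-C$ are positive, $\langle -C\rangle=1.3\mathcal I-0.3\mathcal E$, and $\rho(0.3\mathcal E)=1.2<1.3$. So $-C$ is an irreducible nonnegative $\mathcal H^+$-tensor.
\item The positive equilibrium is $z^*=(1,3)^T$.
\item At $z=(0.9,3.3)^T$ we have $y=(0.9,1.1)$, and $V=-\ln 0.9$ is attained only at $i=1$.
\end{itemize}
There
\[
\frac{\dot z_1}{z_1}=b_1-z_1^2-k\,(2z_1z_2+z_2^2)=5.5-0.81-0.3\,(5.94+10.89)=-0.359<0,
\]
so $V$ strictly increases. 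The irreducibility/LaSalle step therefore never becomes relevant.

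The paper's proof does not help here. It defers to the construction of Theorem~\ref{theorem-3-all-neurons-coexistence-equilibrium-points}, whose $U_2$ step has the same defect. There an upper bound on $(b+Az^2)_m$ at the minimal-ratio index is multiplied by the negative factor $z_m-z_m^*$, which reverses the inequality. Following the paper more closely will not repair your argument.

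\textbf{A route that works.} Use the specific structure of $C$ in \eqref{model-3}: $(-Cz^{t-1})_i=(1-k)z_i^{t-1}+k\,s^{t-1}$ with $s=\sum_j z_j$, and $b_i=(1-k)(z_i^*)^{t-1}+k\,(s^*)^{t-1}$. For the Volterra function $V(z)=\sum_i\big(z_i-z_i^*-z_i^*\ln(z_i/z_i^*)\big)$ one gets
\[
\dot V=-(1-k)\sum_{i}(z_i-z_i^*)\big(z_i^{t-1}-(z_i^*)^{t-1}\big)-k\,(s-s^*)\big(s^{t-1}-(s^*)^{t-1}\big).
\]
Both terms are nonpositive because $x\mapsto x^{t-1}$ is increasing on $[0,\infty)$. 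The first term vanishes only at $z=z^*$ because $k<1$: the $\mathcal H^+$ hypothesis reads $1+k>kn^{t-1}$, which forces $k<1$ when $n\ge 2$. Since $V$ has compact sublevel sets in $\mathbb R^n_{++}$, every trajectory starting in $\mathbb R^n_{++}$ converges to $z^*$, and $z^*$ is Lyapunov stable.

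Your existence and uniqueness argument via Lemma~\ref{unique-positive-solution-2} is fine as it stands.
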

	
\begin{proof}
	Consider system \eqref{model-3} with the equilibrium condition
	\[
	b + C(z^*)^{t-1}=0,
	\quad\text{equivalently}\quad
	(-C)(z^*)^{t-1}=b.
	\]
	Under the assumption that $-C$ is an irreducible nonnegative $H^+$-tensor, the tensor equation
	$(-C)x^{t-1}=b$ admits a unique positive solution, hence there exists a unique all-neurons-coexistence equilibrium point
	$z^*$.
	
	We next show its global asymptotic stability.
	The proof follows the same Lyapunov construction as in Theorem~\ref{theorem-3-all-neurons-coexistence-equilibrium-points}.
	Specifically, in the Lyapunov function used in Theorem~\ref{theorem-3-all-neurons-coexistence-equilibrium-points}, we replace every occurrence of the
	quadratic term (power $2$) by the $(t-1)$-th power.
	With this modification, the Lyapunov function remains positive definite with respect to $z^*$ and radially unbounded
	on $\mathbb R_{+}^{n}$.
	
	Along the trajectories of system \eqref{model-3}, the time derivative of the modified Lyapunov function
	has exactly the same sign structure as in Theorem~\ref{theorem-3-all-neurons-coexistence-equilibrium-points}.
	The only difference is that differentiating the $(t-1)$-th power introduces a constant factor $(t-1)$,
	which is positive and therefore does not affect the sign.
\end{proof}

\begin{theorem}\label{theorem-2u-all-neurons-coexistence-equilibrium-points}
	Assuming that system (\ref{model-3}) with at least two species has a all-neurons-coexistence equilibrium point  $z_{e_3}^*=(z_{1}^*, z_{2}^*, \cdots, z_{n}^*)^T$, where $z_{i}^*>0$ for all neuron $i$. The equilibrium point  $z_{e_3}^*$ is locally asymptotically stable if and only if $k<1$.
\end{theorem}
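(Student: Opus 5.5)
Following the proof of Theorem \ref{theorem-2-all-neurons-coexistence-equilibrium-points}, I will write the vector field of \eqref{model-3} in closed form and compute the Jacobian at $z_{e_3}^*$ explicitly. With $p=t-1\ge 2$ and $r=\sum_{\ell}z_\ell$, the all-index sum satisfies $\sum_{p_2,\dots,p_t}z_{p_2}\cdots z_{p_t}=r^{p}$. Hence
\[
\dot z_i=z_i\big(b_i-k r^{p}+(k-1)z_i^{p}\big).
\]
At an all-coexistence equilibrium the bracket vanishes for every $i$. So only the derivative of the bracket survives, and the Jacobian is
\[
J=t\mathbf 1^\top+\hat Z,\qquad t_i=-kp\,r^{p-1}z_i^*,\qquad \hat Z=p(k-1)\,\mathrm{diag}\big((z_i^*)^{p}\big),
\]
where $r=\sum_\ell z_\ell^*>0$. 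This has exactly the rank-one-plus-diagonal structure of the $t=3$ case; the only changes are the exponents and the factor $p$ in place of $2$.

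I then split into three cases.

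For $k=1$, $J=t\mathbf 1^\top$ has rank one. Its eigenvalues are $\mathbf 1^\top t=-p\,r^{p}<0$ and $0$ with multiplicity $n-1\ge1$. The hypothesis of at least two species is used here. So $J$ is not Hurwitz and the equilibrium is not asymptotically stable by linearization.

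For $k>1$, I pick $v\neq0$ with $\mathbf 1^\top v=0$, which again needs $n\ge2$. Then
\[
v^\top Jv=p(k-1)\sum_i (z_i^*)^{p}v_i^2>0,
\]
so the symmetric part has a positive eigenvalue. As in the paper, I will conclude that $J$ has an eigenvalue with positive real part, so the equilibrium is unstable.

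For $k<1$, I use the diagonal Lyapunov weight $P=\mathrm{diag}(1/z_i^*)$. Then $P t\mathbf 1^\top=-kp\,r^{p-1}\mathbf 1\mathbf 1^\top$ is symmetric negative semidefinite, and $P\hat Z=p(k-1)\mathrm{diag}((z_i^*)^{p-1})$ is negative definite. Hence $PJ+J^\top P\prec0$, $J$ is Hurwitz, and local asymptotic stability follows from Lyapunov's indirect method. Together the three cases give the ``if and only if''.

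The main obstacle is the borderline case $k=1$. There the linearization has zero eigenvalues, so showing that $J$ is not Hurwitz does not by itself rule out nonlinear asymptotic stability. To close this gap I would use Proposition \ref{prop:all_ones_tensor_unique_pos}. For $k=1$ the equilibrium equation becomes $r^{p}=b_i$ for all $i$, so a coexistence equilibrium exists only when all $w_i$ are equal. In that case the equilibrium set is the whole simplex slice $\{z\gg0:\sum_j z_j=b_1^{1/p}\}$, which by Proposition \ref{prop:all_ones_tensor_unique_pos}(ii) is a continuum of equilibria for $n\ge2$. Every neighbourhood of $z_{e_3}^*$ then contains other equilibria, so trajectories starting there do not converge to $z_{e_3}^*$, and asymptotic stability genuinely fails. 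A secondary point to check is that the rank-one part really does vanish on $\mathbf 1^\perp$ for all $t$, which holds because the derivative of $r^{p}$ is the same for every $j$.
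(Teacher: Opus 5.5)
Your route is the paper's own: its proof only says ``as in Theorem~\ref{theorem-2-all-neurons-coexistence-equilibrium-points}'' with modified $T_i$ and $\hat Z$. Your closed-form Jacobian $J=t\mathbf 1^\top+\hat Z$, $t_i=-kp\,r^{p-1}z_i^*$, is correct. Your $k=1$ argument via the continuum of equilibria is more careful than the paper's, which essentially stops at ``$J$ is not Hurwitz.''

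The genuine gap is in the case $k>1$. From $v^\top Jv>0$, i.e.\ $\lambda_{\max}\big(\tfrac12(J+J^\top)\big)>0$, you conclude that $J$ has an eigenvalue with positive real part. That implication is false for nonsymmetric matrices. For example, $J=\begin{pmatrix}-1&4\\0&-1\end{pmatrix}$ is Hurwitz, yet $v=(1,1)^\top$ gives $v^\top Jv=2>0$. A positive value of the quadratic form only shows that the numerical range meets the right half-plane, not that the spectrum does. Appealing to the paper does not help, since Theorems~\ref{theorem-1-equilibrium-points} and~\ref{theorem-2-all-neurons-coexistence-equilibrium-points} make the same unjustified step. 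Because the ``only if'' direction rests on instability for $k>1$, this step has to be repaired.

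The repair uses the weight you already introduced for $k<1$. With $P=\mathrm{diag}(1/z_i^*)$, the matrix
$S:=PJ=-kp\,r^{p-1}\mathbf 1\mathbf 1^\top+p(k-1)\,\mathrm{diag}\big((z_i^*)^{p-1}\big)$
is symmetric for every $k$. Hence $P^{1/2}JP^{-1/2}=P^{-1/2}SP^{-1/2}$ is a symmetric matrix similar to $J$.

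For $k>1$, pick $u\in\mathbf 1^\perp\setminus\{0\}$ and set $w=P^{1/2}u$. Then
$w^\top\big(P^{1/2}JP^{-1/2}\big)w=u^\top Su=p(k-1)\sum_i (z_i^*)^{p-1}u_i^2>0$.
Now the Rayleigh-quotient argument is legitimate: the symmetric matrix $P^{1/2}JP^{-1/2}$, and hence $J$, has a positive real eigenvalue. Lyapunov's indirect method then gives instability.

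By Sylvester's law of inertia, $J$ in fact has real spectrum with the same inertia as $S$. This also covers your other two cases: $S\prec 0$ for $0\le k<1$, and $S$ has one negative and $n-1$ zero eigenvalues for $k=1$. Note that your $k<1$ case tacitly uses $k\ge 0$, as the paper does, which is consistent with $k$ being an inhibition ratio. With this fix, together with your $k=1$ argument, the proof is complete.
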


\begin{proof}
	The proof is similar to the proof of Theorem \ref{theorem-2-all-neurons-coexistence-equilibrium-points}. The only difference is that
	\[
	\begin{aligned}
		T_i = -z_i^* \Big(&
		k \sum_{q_4,\cdots, q_t} z_{q_4}^* \cdots z_{q_t}^* 
		+ k \sum_{q_3,q_5,\cdots, q_t} z_{q_3}^* z_{q_5}^* \cdots z_{q_t}^* \\
		&+ \cdots 
		+ k \sum_{q_3,\cdots, q_{t-1}} z_{q_3}^* \cdots z_{q_{t-1}}^*
		\Big)
	\end{aligned}
	\]
	and \[\hat{Z}=(t-1)(k-1)\left(\begin{matrix}z_{1}^{*^{t-1}}& 0 & \cdots & 0\\ 0 &z_{2}^{*^{t-1}}& \cdots & 0\\\vdots&\vdots &\ddots& \vdots\\ 0& 0& \cdots &z_{n}^{*^{t-1}}\end{matrix}\right).\]
	The remaining proof remains similar as in Theorem \ref{theorem-2-all-neurons-coexistence-equilibrium-points}.
\end{proof}

Notice that the Jacobian structure is similar to the previous case and the similar proof strategies directly apply. We then further have the following results.

\begin{theorem}\label{three-result-u}
	When $k=1$, system (\ref{model-3}) evolves through competition into a WTA phenomenon, with the sole winner having the highest external input $W_{max}$. When $k>1$, system (\ref{model-3}) exhibits VWTA behavior after competition. When $k<1$, the system (\ref{model-3})'s competitive outcome is either WSA or WTA.
\end{theorem}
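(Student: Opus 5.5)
The plan is to carry over the proof of Theorem \ref{three-result} almost verbatim, replacing the quadratic interaction by the $(t-1)$-th power. The first step is to rewrite the right-hand side of \eqref{model-3} using $S_t(z)=\big(\sum_{\ell}z_\ell\big)^{t-1}$. Since $\sum_{(p_2,\dots,p_t)\neq(i,\dots,i)}z_{p_2}\cdots z_{p_t}=S_t(z)-z_i^{t-1}$, the dynamics become
\[
\dot z_i=z_i\big(b_i-kS_t(z)+(k-1)z_i^{t-1}\big),\qquad b_i=1+w_i .
\]
Next I will fix an equilibrium $z^*$ with winner set $\mathcal D$, $|\mathcal D|=d<n$, and $r=\sum_{p\in\mathcal D}z_p^*$, and compute the Jacobian. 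It is block upper triangular, $J'=\begin{pmatrix}J_{winner}&\Omega\\0&Res\end{pmatrix}$. The winner block is $J_{winner}=u\mathbf 1^\top+(t-1)(k-1)\operatorname{diag}\big((z_i^*)^{t-1}\big)$ with $u_i=-(t-1)k r^{t-2}z_i^*$. The block $Res$ is diagonal with entries $b_i-kr^{t-1}$ for $i\notin\mathcal D$. Using the winner equilibrium relation $kr^{t-1}=b_j+(k-1)(z_j^*)^{t-1}$, these entries become $w_i-w_j-(k-1)(z_j^*)^{t-1}$.

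The three cases then follow the same pattern as before.

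\emph{Case $k>1$.} Theorem \ref{theorem-2u-equilibrium-points} gives $d=1$. Then $J_{winner}=-(t-1)(z_j^*)^{t-1}<0$, since $z_j^*=(b_j)^{1/(t-1)}$. The loser eigenvalues $w_i-w_j-(k-1)b_j$ are negative as soon as $k$ is large relative to the input gaps, even if $w_j\neq w_{max}$. Several single-winner equilibria can therefore be stable, which is VWTA.

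\emph{Case $k<1$.} The weighted Lyapunov matrix $P=\operatorname{diag}(1/z_i^*)$ from Theorem \ref{theorem-2-all-neurons-coexistence-equilibrium-points}, equivalently Theorem \ref{theorem-2u-all-neurons-coexistence-equilibrium-points} applied to the winner subsystem, gives
\[
PJ_{winner}+J_{winner}^\top P=-2(t-1)kr^{t-2}\mathbf 1\mathbf 1^\top+2(t-1)(k-1)\operatorname{diag}\big((z_i^*)^{t-2}\big)\prec0,
\]
so $J_{winner}$ is Hurwitz. Stability is then equivalent to $b_i<kr^{t-1}$ for all losers. I will exhibit both possibilities. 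For $d=1$ the condition reads $1+w_i<k(1+w_j)$, which holds when the winner's input is sufficiently dominant, giving WTA. For $d\ge2$ one can choose inputs so the condition holds, giving WSA.

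\emph{Case $k=1$.} Here $d=1$ again and $Res$ has entries $w_i-w_j$, so only $j=\arg\max w$ yields a stable equilibrium. For the global WTA claim I will reproduce the proof of Theorem \ref{global-WTA}. Since the per-capita growth rates differ only by $w_i-w_1$, one gets $\frac{d}{dt}\ln(z_i/z_1)=w_i-w_1<0$, so $z_i/z_1\to0$. Comparison with $\dot y=y(b_1-y^{t-1})$ bounds $z_1$. Because $S_t=z_1^{t-1}(1+o(1))$, the same trapping-interval argument shows $z_1\to(1+w_{max})^{1/(t-1)}$.

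The main obstacle is the $k<1$ case. The Hurwitz property there needs the positive weight $P$ (now giving $(z_i^*)^{t-2}$ factors). Asserting that both WSA and WTA genuinely occur only requires existence examples, not a classification. The $k=1$ global argument is routine once $S_t$ is written as a power of the sum.
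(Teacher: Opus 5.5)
Your proposal is correct and follows the paper's route. The paper gives no separate proof of this theorem; it simply says the proofs of Theorem \ref{three-result} and Theorem \ref{global-WTA} carry over. Your four ingredients are exactly that transfer: the rewriting via $(\sum_\ell z_\ell)^{t-1}$, the block upper-triangular Jacobian with diagonal loser entries $b_i-kr^{t-1}$, the weighted Lyapunov matrix $P=\operatorname{diag}(1/z_i^*)$ for $k<1$, and the log-ratio/trapping-interval argument for $k=1$. One small point about your framing: ``either WSA or WTA'' does carry classification content, since it excludes VWTA when $k<1$. You get this for free, because your $d=1$ stability condition $1+w_i<k(1+w_j)$ with $k<1$ already forces $w_j$ to be the strict maximum input.
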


Without loss of generality, we may assume that the external inputs $w_i$ follow the inequality below with respect to their magnitudes: $w_1> w_2\geq w_3\geq\cdots\geq w_{n-1}\geq w_n\geq0$.

\begin{theorem}\label{global-WTA-u}
	Consider system (\ref{model-3}) with interaction order $t\ge 3$ and $k=1$.
	Assume that the external inputs satisfy
	$w_1> w_2\geq w_3\geq\cdots\geq w_{n-1}\geq w_n\geq0$.
	Then, under any initial strictly positive conditions, system (\ref{model-3}) will evolve into a WTA competition,
	and the globally stable equilibrium point is
	$\left((1+w_{max})^{\frac{1}{t-1}},0,\cdots,0\right)^T.$
\end{theorem}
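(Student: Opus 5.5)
The plan is to mirror the proof of Theorem~\ref{global-WTA}, with $(\sum_\ell z_\ell)^2$ replaced by $(\sum_\ell z_\ell)^{t-1}$.

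\textbf{Step 1: reduce the $k=1$ vector field.} For $k=1$ the interaction term in \eqref{model-3} is
\[
z_i^{t-1}+\sum_{(p_2,\dots,p_t)\neq(i,\dots,i)} z_{p_2}\cdots z_{p_t}
=\sum_{p_2,\dots,p_t=1}^n z_{p_2}\cdots z_{p_t}
=\Big(\sum_{\ell=1}^n z_\ell\Big)^{t-1}.
\]
This is the same computation as in Proposition~\ref{prop:all_ones_tensor_unique_pos}. Writing $S(t)=(\sum_\ell z_\ell(t))^{t-1}$ (a slight abuse of notation, since $t$ is both the order and time; I would rename the order in the write-up), the system becomes
\[
\dot z_i=z_i\big(1+w_i-S\big).
\]
The positive orthant is invariant because each $\dot z_i$ carries the factor $z_i$. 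Hence strictly positive initial data stay strictly positive for all times where the solution exists.

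\textbf{Step 2: ratio argument.} For $i\ge 2$,
\[
\frac{d}{dt}\ln\frac{z_i}{z_1}=w_i-w_1<0,
\]
because $S$ cancels. This gives the exponential decay \eqref{eq:ratio_exp} and the limit \eqref{eq:ratio_to_zero} verbatim.

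\textbf{Step 3: boundedness and global existence.} Since $S\ge z_1^{t-1}$, we have $\dot z_1\le z_1(1+w_1-z_1^{t-1})$. Comparison with the scalar logistic-type equation then gives
\[
z_1(t)\le \max\{z_1(0),(1+w_1)^{1/(t-1)}\}.
\]
Together with Step 2, this bounds every $z_i$ as well, so solutions exist globally.

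\textbf{Step 4: convergence of $z_1$ (main obstacle).} This is the delicate step, and I would copy the trapping-interval argument of Theorem~\ref{global-WTA} with the new exponent. Set $L_0=(1+w_1)^{1/(t-1)}$ and fix $\delta>0$.
\begin{itemize}
\item From \eqref{eq:ratio_to_zero}, $S=z_1^{t-1}(1+o(1))$. So for any $\varepsilon$ there is $T$ with $S\le(1+\varepsilon)z_1^{t-1}$ for $t\ge T$.
\item Choosing $\varepsilon$ with $\big((1+w_1)/(1+\varepsilon)\big)^{1/(t-1)}\ge L_0-\delta$ gives $\dot z_1>0$ whenever $z_1\le L_0-\delta$.
\item Also $\dot z_1<0$ whenever $z_1\ge L_0+\delta$, since then $S\ge z_1^{t-1}>1+w_1$.
\end{itemize}
The monotonicity-plus-boundedness argument has one subtle point: ruling out $z_1$ staying below $L_0-\delta$ forever while increasing. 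I would handle it by noting that on $[\eta,L_0-\delta]$, with $\eta=z_1(T)>0$, the derivative $\dot z_1$ is bounded below by a positive constant. Therefore $z_1$ enters the interval $(L_0-\delta,L_0+\delta)$ in finite time and never leaves it. Since $\delta$ is arbitrary, $z_1\to L_0$.

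\textbf{Step 5: conclude.} By Step 2 and boundedness of $z_1$, $z_i\to 0$ for $i\ge2$. Hence the trajectory converges to $((1+w_{max})^{1/(t-1)},0,\dots,0)^T$.

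Local stability of this equilibrium follows from Theorem~\ref{three-result-u}. At it, the Jacobian has the winner eigenvalue $-(t-1)(1+w_1)<0$ and loser eigenvalues $w_i-w_1<0$, which gives asymptotic (indeed exponential) stability. Combined with global attraction from every strictly positive initial condition, this yields the claim.
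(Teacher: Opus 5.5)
Your proposal is correct and takes the route the paper intends. The paper gives no separate proof, only the remark that the argument of Theorem~\ref{global-WTA} carries over with the exponent $2$ replaced by $t-1$, which is exactly what you do; your Jacobian eigenvalues $-(t-1)(1+w_1)$ and $w_i-w_1$ are also right. Your Step~4 is tighter than the paper's version, which claims a contradiction from a bounded, strictly monotone trajectory; that is no contradiction by itself, and your uniform positive lower bound on $\dot z_1$ over $[\eta,L_0-\delta]$ closes the gap. The same holds for the analogous bound $\dot z_1\le (L_0+\delta)\bigl(1+w_1-(L_0+\delta)^{t-1}\bigr)<0$ above $L_0+\delta$, and you should choose $\varepsilon$ so that $\bigl((1+w_1)/(1+\varepsilon)\bigr)^{1/(t-1)}>L_0-\delta$ holds strictly.
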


In light of Proposition~2, we have the following result on the guaranteed existence bound for the case $k<1$.

\begin{corollary}\label{cor:exist_d_bound_klt1}
Consider system (\ref{model-3}).	Let $k\in(0,1)$ and $p=t-1\ge 2$. Define $b_i=1+w_i>0$ and
	\[
	b_{\min}=\min_i b_i,\qquad b_{\max}=\max_i b_i.
	\]
	If
	\begin{equation}\label{eq:d_exist_bound}
		d \;<\;\Bigg(\frac{(1-k)\,b_{\min}}{k\,(b_{\max}-b_{\min})}\Bigg)^{1/p},
	\end{equation}
	then the system admits at least one equilibrium with exactly $d$ winners (i.e., $d$ positive components and $n-d$ zeros).
\end{corollary}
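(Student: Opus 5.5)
The plan is to reduce the existence of a $d$-winner equilibrium to the tensor equation of Proposition~\ref{prop:ones_spos_noJ}, posed on the winner coordinates only, and then check that \eqref{eq:d_exist_bound} implies the sufficient condition \eqref{eq:exist_suff_bminbmax_noJ}.

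\emph{Reduction to Proposition~\ref{prop:ones_spos_noJ}.} Fix any index set $\mathcal D\subseteq\{1,\dots,n\}$ with $|\mathcal D|=d$; we implicitly need $d\le n$. Set $z_i=0$ for $i\notin\mathcal D$. The factor $z_i$ in \eqref{model-3} then makes $\dot z_i=0$ automatically for these $i$. For $i\in\mathcal D$ with $z_i>0$, the equilibrium condition is $b_i-z_i^{p}-k\sum_{(p_2,\dots,p_t)\neq(i,\dots,i)}z_{p_2}\cdots z_{p_t}=0$. As in the proof of Theorem~\ref{theorem-1-equilibrium-points}, the full sum over all index tuples equals $(\sum_{\ell\in\mathcal D}z_\ell)^{p}$. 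Removing the single tuple $(i,\dots,i)$ therefore gives
\[
k\Big(\sum_{\ell\in\mathcal D}z_\ell\Big)^{p}+(1-k)z_i^{p}=b_i,\qquad i\in\mathcal D .
\]
This is exactly \eqref{eq:ones_spos_noJ} in dimension $d$, with $a=k>0$, $s=1-k>0$ (this is where $k\in(0,1)$ is used), and right-hand side $b^{(\mathcal D)}\in\mathbb R^d_{++}$.

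\emph{Applying the sufficient condition.} By Proposition~\ref{prop:ones_spos_noJ}(iii), a positive solution exists, and is then unique by (ii), provided
\[
\frac{k\,d^{p}}{1-k}\big(b^{(\mathcal D)}_{\max}-b^{(\mathcal D)}_{\min}\big)<b^{(\mathcal D)}_{\min}.
\]
Since $\mathcal D$ is a subset of all indices, we have $b^{(\mathcal D)}_{\max}-b^{(\mathcal D)}_{\min}\le b_{\max}-b_{\min}$ and $b^{(\mathcal D)}_{\min}\ge b_{\min}$. It therefore suffices that $k\,d^{p}(b_{\max}-b_{\min})<(1-k)b_{\min}$. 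When $b_{\max}>b_{\min}$, this is exactly \eqref{eq:d_exist_bound} after raising both sides to the power $p$. In the degenerate case $b_{\max}=b_{\min}$, the right-hand side of \eqref{eq:d_exist_bound} is read as $+\infty$, and the needed inequality $0<(1-k)b_{\min}$ holds trivially. The positive solution on $\mathcal D$, padded with zeros off $\mathcal D$, is the required equilibrium with exactly $d$ positive components.

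\emph{Main obstacle.} The step needing the most care is the algebraic identification of the excluded-diagonal sum with $(\sum z)^p-z_i^p$. This rewriting is what produces the all-ones tensor plus a diagonal of the form $(1-k)I$. The correct sign of $s=1-k$ is essential, since for $k\ge1$ the proposition does not apply. Everything after that is monotonicity of the bound with respect to restricting to a subset of inputs.
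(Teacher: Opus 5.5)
Your proposal is correct and follows the paper's proof essentially step for step. You restrict to a winner set $\mathcal D$, rewrite the excluded-diagonal sum as $(\sum_{\ell\in\mathcal D} z_\ell)^p - z_i^p$ to land in Proposition~\ref{prop:ones_spos_noJ} with $a=k$, $s=1-k$ in dimension $d$, and then transfer the sufficient condition \eqref{eq:exist_suff_bminbmax_noJ} via $b^{(\mathcal D)}_{\min}\ge b_{\min}$ and $b^{(\mathcal D)}_{\max}-b^{(\mathcal D)}_{\min}\le b_{\max}-b_{\min}$; your explicit handling of $b_{\max}=b_{\min}$ and of $d\le n$ only makes tacit points of the paper explicit.
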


\begin{proof}
	Fix any winner set $D$ with $|D|=d$ and write $x=z_D\in\mathbb R_{++}^d$.
	The equilibrium equations on $D$ reduce to the all-ones tensor equation
	\[
	k\,\mathcal E\,x^{p} + (1-k)\,I x^{p} = b_D,
	\]
	which is Proposition~2 with $a=k$, $s=1-k>0$, and dimension $n=d$.
	Applying the sufficient existence condition in Proposition~2,
	\[
	\frac{a\,d^{p}}{s}\,(\max_{i\in D}b_i-\min_{i\in D}b_i) < \min_{i\in D}b_i,
	\]
	and using $\min_{i\in D}b_i\ge b_{\min}$ and $\max_{i\in D}b_i-\min_{i\in D}b_i\le b_{\max}-b_{\min}$,
	we see that \eqref{eq:d_exist_bound} implies the above inequality and hence yields a positive solution $x\gg 0$.
	Setting $z_D=x$ and $z_j=0$ for $j\notin D$ gives an equilibrium with exactly $d$ winners.
\end{proof}

\begin{remark}\label{rem:wk_large_d_short}
	The large-$d$ equilibria are favored when (i) $w$ is nearly homogeneous (small $w_{\max}-w_{\min}$) and/or (ii) $k$ is
	small (weaker higher-order coupling). Conversely, larger heterogeneity or $k\to 1$ shrinks the guaranteed range of $d$. In summary, as $k$ increases from $0$ to $1$, the system transitions from favoring full coexistence to favoring
	partial coexistence, and eventually to a WTA-type regime.  Moreover, the role of the interaction order $p=t-1$ is that this shrinkage is \emph{attenuated} when $p$ is large and thus makes this transition more gradual.
\end{remark}

\begin{remark}\label{remark-t2-lv}
	The interaction order $t$ provides a unified viewpoint that includes the classical pairwise model as a special case.
	In particular, when $t=2$, the higher-order term reduces to a pairwise interaction term and system (\ref{model-3})
	degenerates to the standard competitive Lotka--Volterra (LV) dynamics
	\[
	\dot z_i
	= z_i\Big(1+w_i - z_i - k\!\!\sum_{j\neq i} z_j\Big)
	= z_i\Big((1+w_i)-\sum_{j=1}^n a_{ij}z_j\Big),
	\]
	where $a_{ii}=1$ and $a_{ij}=k$ for $i\neq j$.
	Thus, the proposed framework recovers the classical LV competition model when the interaction order is reduced to $t=2$.
\end{remark}

\begin{remark}\label{t}
	These theorems highlight that the type of competition outcome, whether WTA, VWTA, or WSA, is fundamentally determined by the ratio between self-inhibition and lateral-inhibition, denoted by $k$. However, the interaction-order $t$ influences only the equilibrium values of neuronal states or the maximum winner number without altering the qualitative competition outcome. Therefore, the system exhibits a structurally invariant competition mechanism with respect to the interaction-order. Higher-order interactions modulate the system dynamics but do not change the competition outcome determined by $k$.
\end{remark}




\section{Application: A Dynamic WTA Protocol}\label{Application}

In this section, we explore the design of distributed protocols for solving WTA, WSA, and VWTA problems over hypergraphs, by utilizing the states of locally interacting agents. The proposed framework is inspired by biologically realistic competition dynamics and offers a flexible, extensible, and fully decentralized approach.

We consider a system composed of multiple nodes, and each node is modeled as an autonomous agent. The evolution of each agent is governed by the following general dynamics:
\[
\dot{z}_i = f_i\left(z_i, z_{j \in \mathcal{N}(i)}, u_i\right),
\]
where $\mathcal{N}(i)$ denotes the set of neighbors of node $i$ defined by the hypergraph topology. Here, $z_i$ is the state of agent $i$, $z_{j \in \mathcal{N}(i)}$ are the states of neighboring nodes, and $u_i$ denotes an external input or bias term acting on node $i$. Notably, the dynamics of node $i$ depend solely on locally available information, such as its own state, the states of its neighbors, and its input. This feature makes the proposed protocol especially suitable for large-scale, resource-constrained, and privacy-sensitive environments, where minimizing communication spending, efficient use of limited computational resources, and preservation of data confidentiality are essential. 

Based on the theoretical framework developed in this paper, the function $f_i(\cdot)$ can be instantiated as the competition models described in \eqref{model-1} and \eqref{model-3}. These models characterize nonlinear, high-order interactions among neurons (or agents) and are well suited for deployment in distributed systems with complex interaction topologies.

By tuning the parameter $k$, the system can demonstrate a variety of competitive outcomes. This tunability enables the protocol to flexibly accommodate different selection modes, including WTA or VWTA behaviors where only one neuron remains active, and WSA behavior where multiple active agents coexist. Such flexibility makes the protocol suitable for various applications requiring either exclusive selection of a single dominant agent or cooperative activation among multiple agents.

Overall, the proposed distributed protocol unifies mathematical soundness and biological inspiration, opening new possibilities for practical decentralized decision-making, pattern selection, and consensus formation on hypergraph networks.

\section{Simulations}\label{Simulations}

In this section, we use numerical simulations to verify the validity of the theoretical results presented earlier.

Assume there are 10 neurons in system (\ref{model-1}), with initial input values \(w_0=(1.9557, 2.8322, 3.8317, 2.2795, 1.3796,\\ 7.1796, 8, 2.4356, 3.8469, 1.7953)^T\), and initial firing rates \(z_0=(0.1234, 0.1678, 0.1101, 0.1345, 0.1789, 0.6256, 0.1890,\\ 0.1567, 0.1910, 0.1346)^T\). The dynamics of these 10 neurons are described by equations (\ref{model-1}). It is noteworthy that, under these initial conditions, neuron 7 has the largest initial input $8$.

\begin{example}\label{example-3}
	We performed numerical simulations on the ten neurons in system (\ref{model-1}) using initial values $w_0$ and $z_0$. When	$k=0.5$, the state trajectories of the ten neurons are shown in Fig.\ref{ex-3}. The competition result evolved by the system (\ref{model-1}) is WSA, which is consistent with the conclusion of Theorems \ref{theorem-1-equilibrium-points} and \ref{three-result}. When $k=0.01$, the state trajectories of the ten neurons are presented in Fig.\ref{ex-4}. It is easy to see that the system ultimately converges to the all-neurons-coexistence equilibrium point, which is fully consistent with Theorems \ref{theorem-2-all-neurons-coexistence-equilibrium-points} and \ref{theorem-3-all-neurons-coexistence-equilibrium-points}.
\end{example}
\begin{remark}\label{simulation-WSA}
	Through simulations, we observed that when $0<k<1$, system (\ref{model-1}) evolves to have multiple winners, i.e., WSA, and the number of winners increases as the value of $k$ decreases. Thus, if the self-inhibition of neurons is much greater than the lateral-inhibition, it is more favorable for the neurons to coexist in the competition.
\end{remark}

\begin{figure}[!htb]
	\centering
	\includegraphics[width=2.3in]{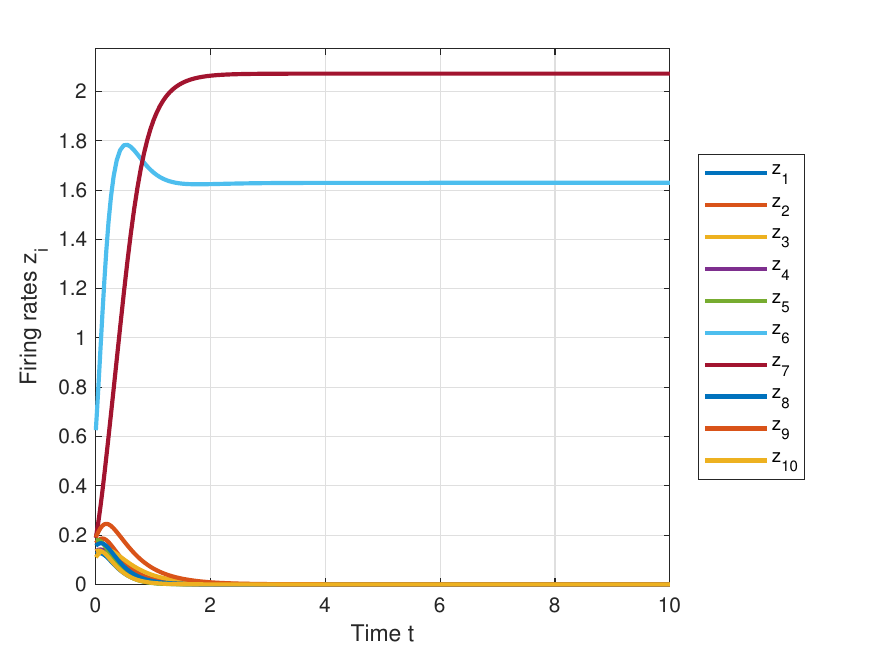}\\
	\caption{The state trajectories of all neurons in system (\ref{model-1}) when $k=0.5$.}\label{ex-3}
\end{figure}

\begin{figure}[!htb]
	\centering
	\includegraphics[width=2.3in]{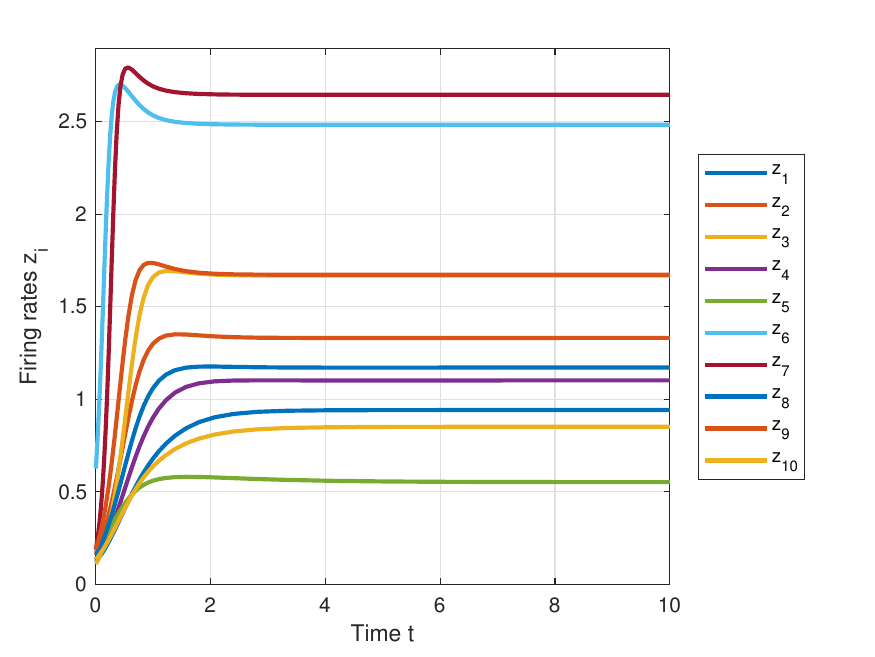}\\
	\caption{The state trajectories of all neurons in system (\ref{model-1}) when $k=0.01$.}\label{ex-4}
\end{figure}

\begin{example}\label{example-2}
	When $k=1.5$, we consider the numerical simulation of system (\ref{model-1}) with initial values $w_0$ and $z_0$. Fig.\ref{ex-2} shows the evolution of the states values along with time for all neurons. It is clearly observed that when $k>1$, there is only one winner, i.e. neuron 6, while all other neurons become inactive. Notably, neuron 6 is not the neuron with the largest initial input $W_{max}$. This simulation result partially validates Theorems \ref{theorem-1-equilibrium-points} and \ref{three-result}, demonstrating that when $k>1$, system (\ref{model-1}) exhibits VWTA behavior.
\end{example}
\begin{figure}[!t]
	\centering
	\includegraphics[width=2.3in]{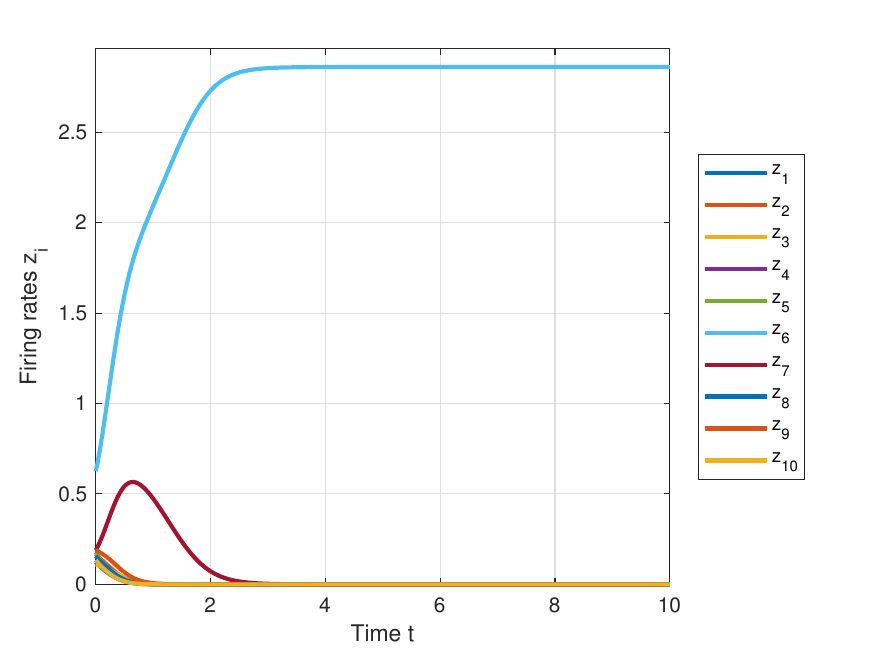}\\
	\caption{\centering The state trajectories of all neurons in system (\ref{model-1}) when $k=1.5$.}\label{ex-2}
\end{figure}

\begin{example}\label{example-1}
	When $k=1$, perform numerical simulations on system (\ref{model-1}) while keeping the initial values unchanged. The evolution of the state values of the 10 neurons over time is shown in Fig.\ref{ex-1}. It is evident that the neuron $7$ with the largest input $8$, emerges as the sole winner, thereby exemplifying the WTA phenomenon. Concurrently, neuron $7$ ultimately converges to a value of $\sqrt{1+w_{7}}=3$. The simulation results are fully consistent with Theorem \ref{global-WTA} and partially validate Theorems \ref{theorem-1-equilibrium-points} and \ref{three-result}.

\end{example}
\begin{figure}[!t]
	\centering
	\includegraphics[width=2.3in]{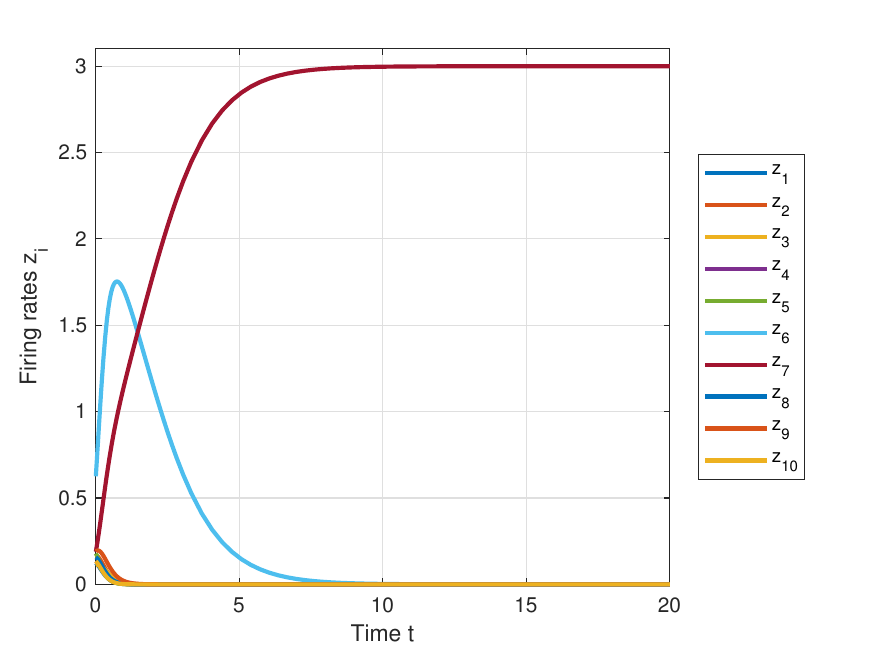}\\
	\caption{The state trajectories of all neurons in system (\ref{model-1}) when $k=1$.}\label{ex-1}
\end{figure}

\begin{example}\label{example-4}
	For system (\ref{model-3}), we set the interaction order $t=5$ and adopt the same initial conditions $w_0$ and $z_0$ as above. Numerical simulations were performed to analyze the state evolution of each neuron under different values of $k$, specifically $k=0.0001$ (Fig.\ref{ex-5-1}), $k=0.5$ (Fig.\ref{ex-5-2}), $k=1$ (Fig.\ref{ex-5-3}), and $k=2$ (Fig.\ref{ex-5-4}). The simulation results confirm the validity of Theorems \ref{theorem-2u-equilibrium-points}, \ref{theorem-7-equilibrium-points}, \ref{theorem-2u-all-neurons-coexistence-equilibrium-points}, and \ref{three-result-u}.
\end{example}
\begin{figure}[!t]
	\centering
	\includegraphics[width=2.3in]{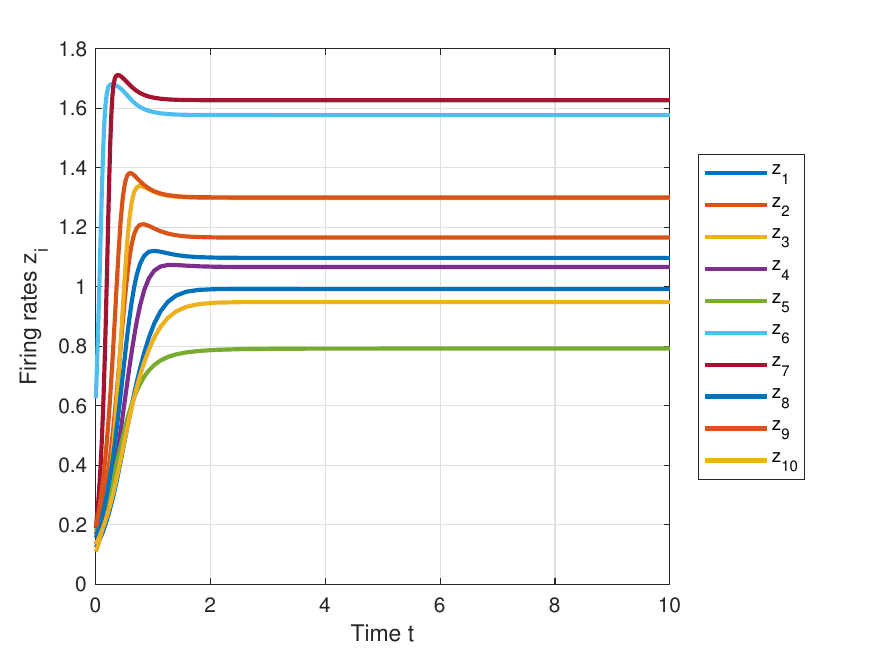}\\
	\caption{The state trajectories of all neurons in system (\ref{model-3}) when $k=0.0001$.}\label{ex-5-1}
\end{figure}
\begin{figure}[!t]
	\centering
	\includegraphics[width=2.3in]{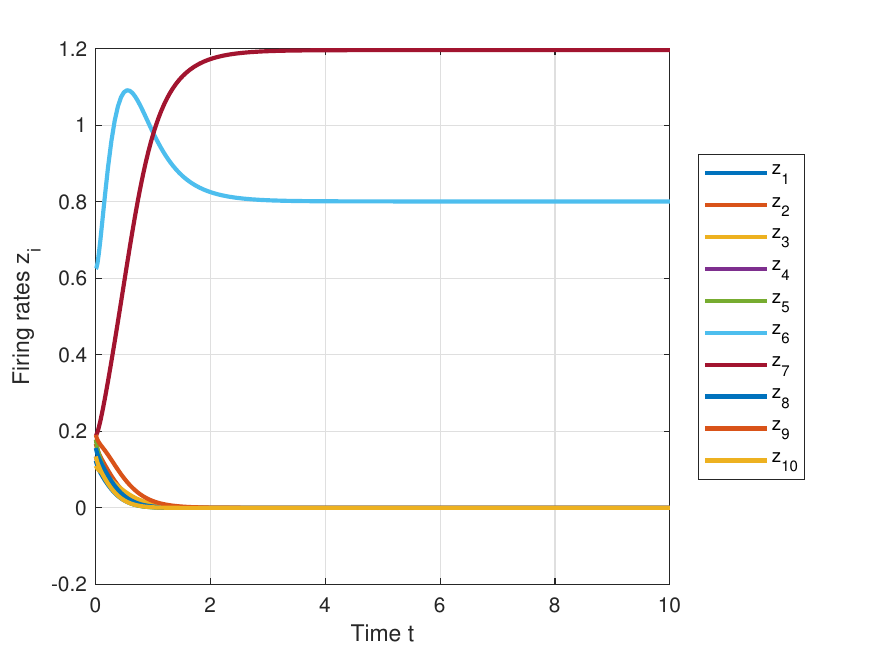}\\
	\caption{The state trajectories of all neurons in system (\ref{model-3}) when $k=0.5$.}\label{ex-5-2}
\end{figure}
\begin{figure}[!t]
	\centering
	\includegraphics[width=2.3in]{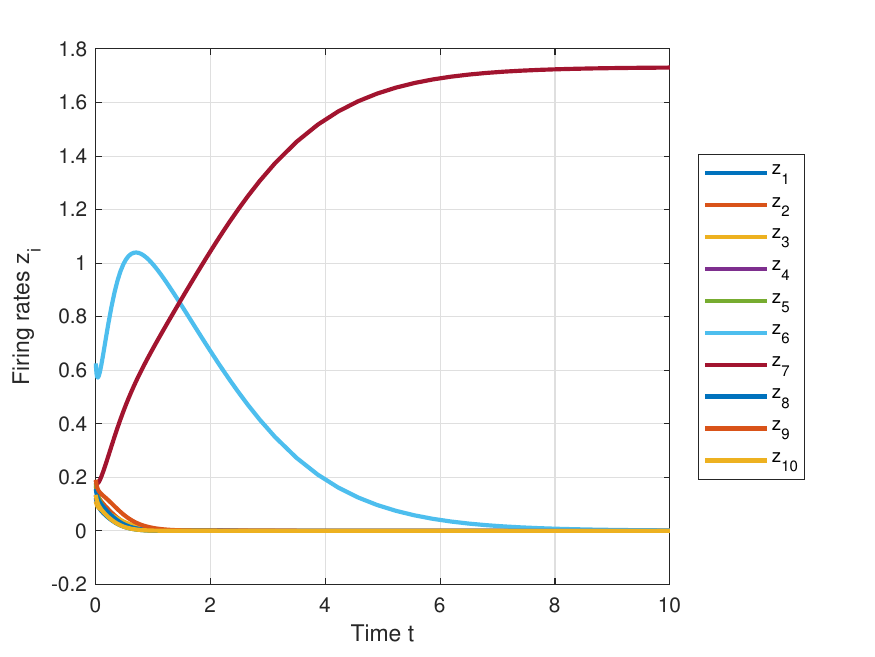}\\
	\caption{The state trajectories of all neurons in system (\ref{model-3}) when $k=1$.}\label{ex-5-3}
\end{figure}
\begin{figure}[!htb]
	\centering
	\includegraphics[width=2.3in]{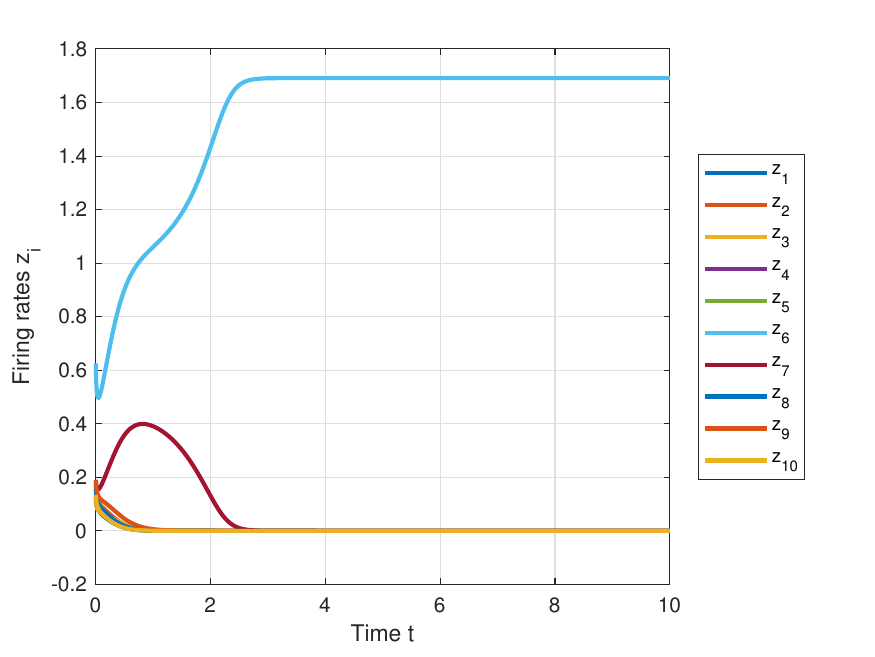}\\
	\caption{The state trajectories of all neurons in system (\ref{model-3}) when $k=2$.}\label{ex-5-4}
\end{figure}

\begin{remark}\label{simulation-t=5}
	For system (\ref{model-3}), we present the simulation results for the case $t=5$ as a representative example to illustrate the evolution of the neuronal state under higher-order network structures. While the simulations continue to support the theoretical results established in this paper, it is evident that achieving all-neurons-coexistence or the VWTA pattern requires a larger or smaller parameter $k$ compared to the case with $t=3$ (system (\ref{model-1})). This suggests that the lower the interaction-order of the network, the more sensitive the neuronal dynamics are to the ratio between self-inhibition and lateral-inhibition. Moreover, the stable states of individual neurons vary in all-neurons-coexistence, WSA and VWTA cases, indicating that while increasing the interaction-order does not alter the competition pattern, it does affect the final stable states of the neurons.
\end{remark}

\begin{example}\label{example-5}
	For traditional pairwise interaction system $\frac{dz_{i}}{dt}=z_{i}(1-z_{i}-k\sum_{j\neq i}z_{j}+w_{i}), i=1,2,\cdots,10$, we reuse the initial conditions $w_0$ and $z_0$ introduced above. The evolution of neuronal states was examined through numerical simulations at varying values of the parameter $k$. Specifically, the cases of $k = 0.0001$, $0.5$, $1$, and $2$ are presented in Fig.\ref{ex-6-1}--\ref{ex-6-4}, respectively. Under identical initial conditions and parameters, the proposed model with higher-order interactions exhibits the same competition patterns as those generated by traditional pairwise interaction networks: WTA for $k=1$, WSA for $k<1$, and VWTA for $k>1$. However, the final steady states of individual neurons show some differences.
\end{example}
\begin{figure}[!htb]
	\centering
	\includegraphics[width=2.3in]{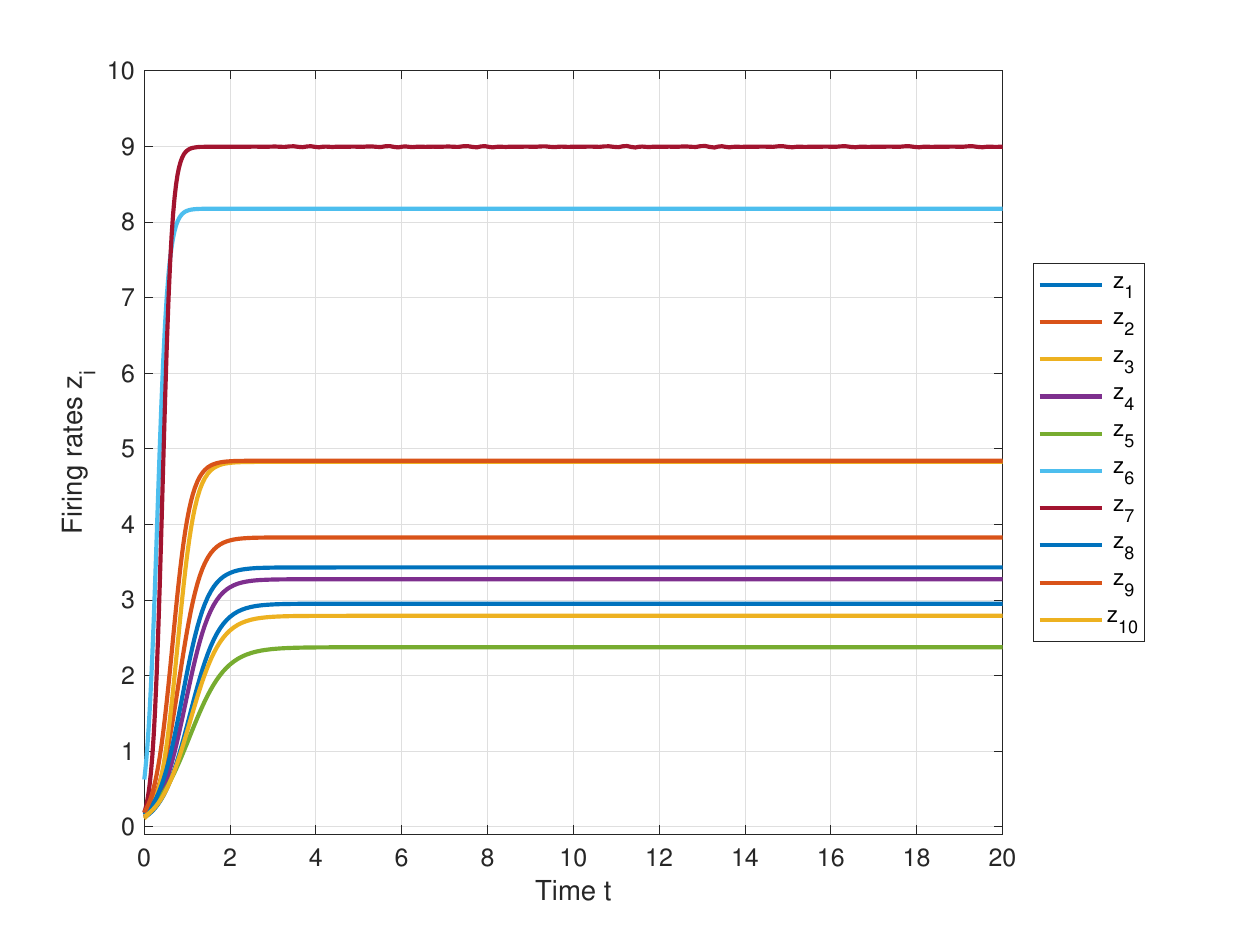}\\
	\caption{The state trajectories of all neurons in pairwise interaction system when $k=0.0001$.}\label{ex-6-1}
\end{figure}
\begin{figure}[!htb]
	\centering
	\includegraphics[width=2.3in]{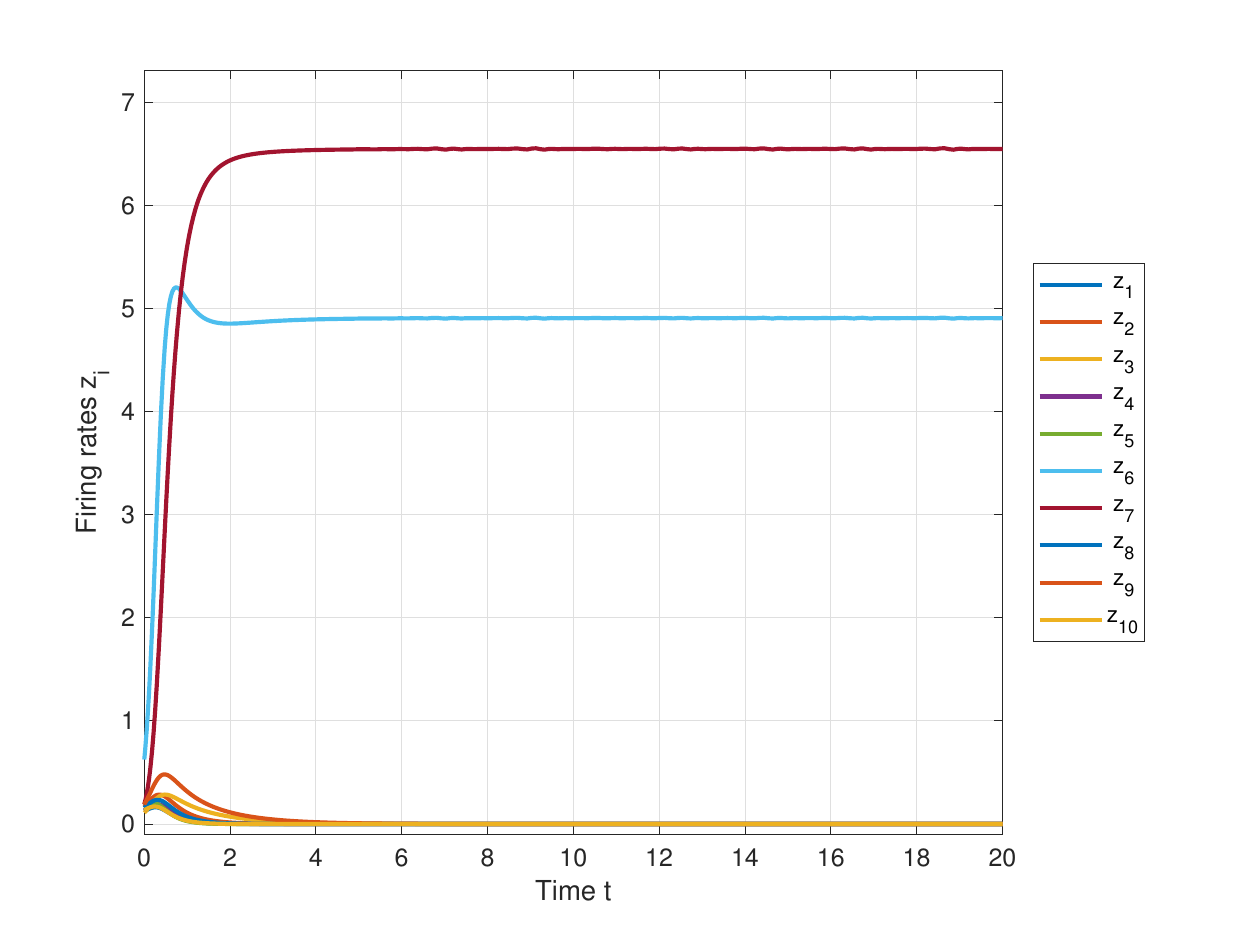}\\
	\caption{The state trajectories of all neurons in pairwise interaction system when $k=0.5$.}\label{ex-6-2}
\end{figure}
\begin{figure}[!htb]
	\centering
	\includegraphics[width=2.3in]{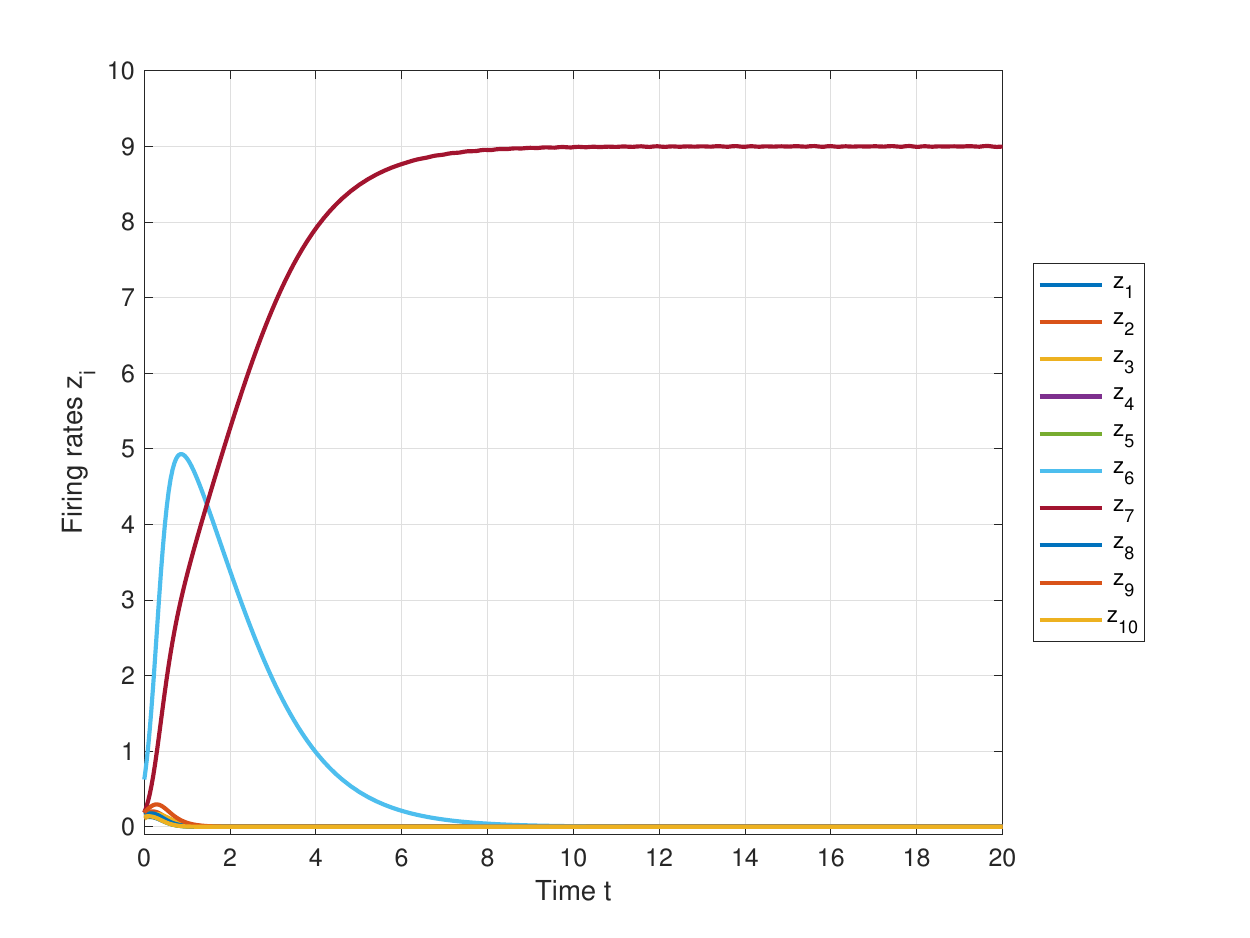}\\
	\caption{The state trajectories of all neurons in pairwise interaction system when $k=1$.}\label{ex-6-3}
\end{figure}
\begin{figure}[!htb]
	\centering
	\includegraphics[width=2.3in]{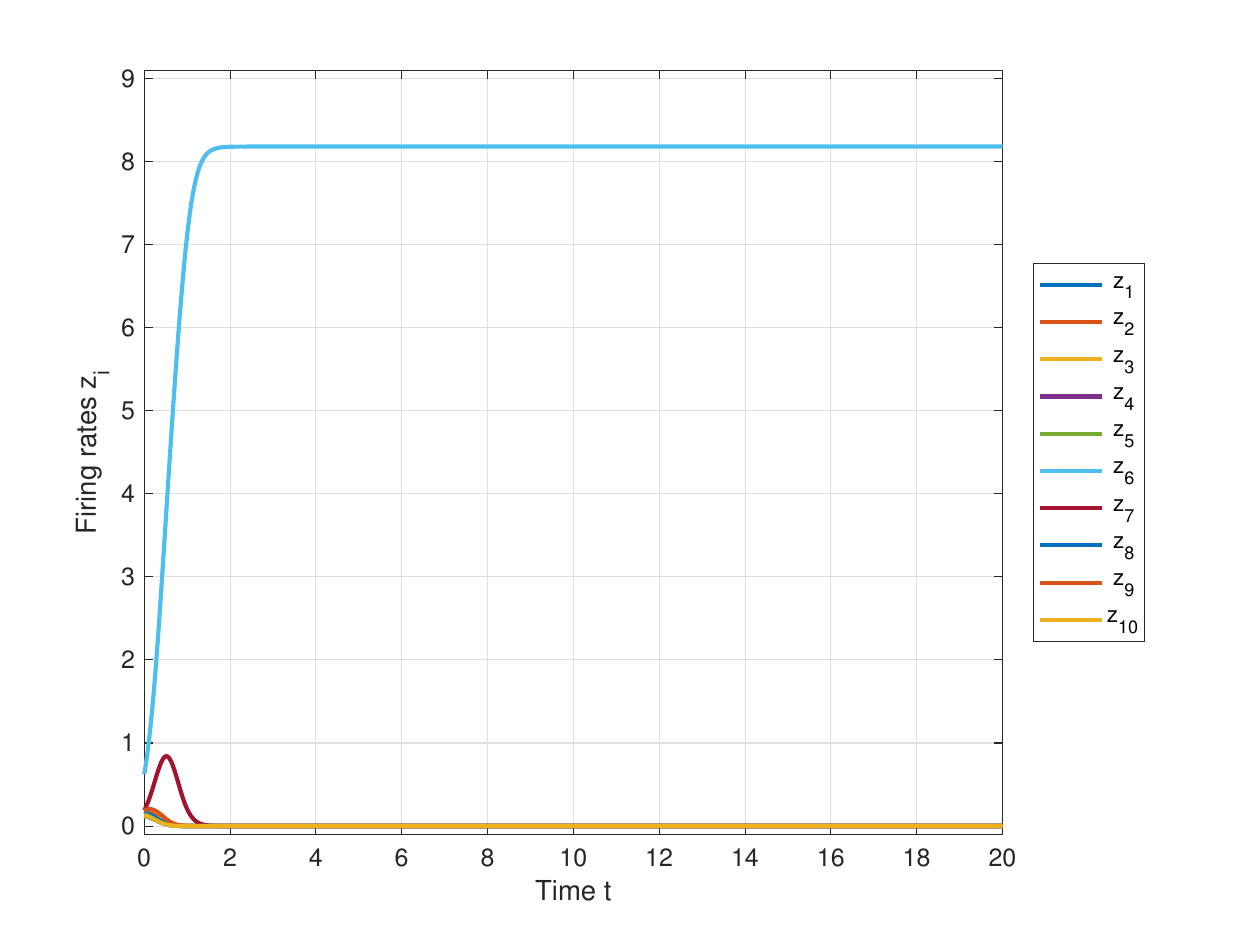}\\
	\caption{The state trajectories of all neurons in pairwise interaction system when $k=2$.}\label{ex-6-4}
\end{figure}

\begin{remark}\label{simulation-pairwise}
	Example \ref{example-5} demonstrates that introducing higher-order interactions does not alter the fundamental competition patterns (WTA, WSA, VWTA) under fixed initial conditions and parameters. This highlights the structural robustness of competition patterns to interaction-order, while revealing that such generalizations enable richer and more faithful representations of multi-neuron coordination beyond pairwise interaction models.
\end{remark}

\section{CONCLUSION}\label{conclusion}
In this paper, we studied WTA-type selection in a Lotka–Volterra competitive dynamics on higher-order networks, where classical pairwise inhibition is augmented by hyperedge-induced multi-way interactions described in a hypergraph/tensor form. We established rigorous results on the existence, uniqueness, and stability of equilibria corresponding to the WTA/WSA/VWTA outcomes using various mathematical tools, including LaSalle’s invariance principle, Lyapunov’s indirect method, properties of structured tensors and related tensor algebra methods. A key finding is an outcome-invariance property: the qualitative steady-state selection pattern is insensitive to the hyperedge order, and is instead governed by a small set of interpretable scalar parameters, notably the ratio between self-inhibition and lateral inhibition and the external inputs. These results confirm the designed model’s capacity to capture competitive neural activity in systems with high-order interaction structures. Future work will focus on extending the model to incorporate non-uniform hypergraphs that better reflect heterogeneous multi-neuron connectivity, exploring adaptive mechanisms that allow dynamic modulation of competitive interactions, and investigating the impact of stochastic fluctuations to capture intrinsic neural variability and noise effects in biologically realistic settings.


\begin{thebibliography}{1}
\bibliographystyle{abbrv}
	\bibitem{Han-2007}
	J. H. Han, S. A. Kushner, A. P. Yiu, et al., ``Neuronal competition and selection during memory formation,'' \textit{Science}, vol. 316, no. 5823, pp. 457--460, 2007.
	
	\bibitem{Tang-2024}
	S. Tang, L. Cui, J. Pan, and N. L. Xu, ``Dynamic ensemble balance in direct-and indirect-pathway striatal projection neurons underlying decision-related action selection,'' \textit{Cell Reports}, vol. 43, no. 9, p. 114726, 2024.
	
	\bibitem{Clark-2004}
	L. Clark, R. Cools, and T. Robbins, ``The neuropsychology of ventral prefrontal cortex: decision-making and reversal learning,'' \textit{Brain and Cognition}, vol. 55, no. 1, pp. 41--53, 2004.
	
	\bibitem{Lotka-1920}
	A. J. Lotka, ``Analytical note on certain rhythmic relations in organic systems,'' \textit{Proc. Natl. Acad. Sci. U.S.A.}, vol. 6, no. 7, pp. 410--415, 1920.
	
	\bibitem{Volterra-1928}
	V. Volterra, ``Variations and fluctuations of the number of individuals in animal species living together,'' \textit{ICES J. Mar. Sci.}, vol. 3, no. 1, pp. 3--51, 1928.
	
	\bibitem{Xiao-2017}
	Y. Xiao, M. T. Angulo, J. Friedman, et al., ``Mapping the ecological networks of microbial communities,'' \textit{Nat. Commun.}, vol. 8, no. 1, p. 2042, 2017.
	
	\bibitem{Pavel-2011}
	L. Pavel and L. Chang, ``Lyapunov-based boundary control for a class of hyperbolic Lotka--Volterra systems,'' \textit{IEEE Trans. Autom. Control}, vol. 57, no. 3, pp. 701--714, 2011.
	
	\bibitem{Asai-1999}
	T. Asai, M. Ohtani, and H. Yonezu, ``Analog integrated circuits for the Lotka-Volterra competitive neural networks,'' \textit{IEEE Trans. Neural Netw.}, vol. 10, no. 5, pp. 1222--1231, 1999.
	
	\bibitem{ZhangYi-2002}
	Z. Yi and K. K. Tan, ``Dynamic stability conditions for Lotka-Volterra recurrent neural networks with delays,'' \textit{Phys. Rev. E}, vol. 66, no. 1, Art. no. 011910, 2002.
	
	\bibitem{Es-saiydy-2022}
	M. Es-saiydy and M. Zitane, ``Oscillating dynamics of Lotka--Volterra neural networks with time-varying delays and distributed delays,'' \textit{Ricerche di Matematica}, vol. 73, no. 5, pp. 2779--2799, 2024.
	
	\bibitem{Bick-2017}
	C. Bick, ``Lotka--Volterra like dynamics in phase oscillator networks,'' in \textit{Advances in Dynamics, Patterns, Cognition: Challenges in Complexity}, vol. 20, pp. 115--125, 2017.
	
	\bibitem{Fukai-97}
	T. Fukai and S. Tanaka, ``A simple neural network exhibiting selective activation of neuronal ensembles: From winner-take-all to winners-share-all,'' \textit{Neural Comput.}, vol. 9, no. 1, pp. 77--97, 1997.
	
	\bibitem{Yi-2013}
	B. Zheng and Z. Yi, ``Using competitive layer model implemented by Lotka--Volterra recurrent neural networks for detecting brain activated regions from fMRI data,'' \textit{Neural Comput. Appl.}, vol. 22, no. 1, pp. 395--404, 2013.
	
	\bibitem{Bochuan-Zheng}
	B. Zheng, ``A winner-take-all Lotka--Volterra recurrent neural network with only one winner in each row and each column,'' \textit{Neural Comput. Appl.}, vol. 24, no. 1, pp. 1749--1757, 2014.
	
	\bibitem{Devgupta-2024}
	R. Devgupta, R. A. Dandekar, R. Dandekar, and S. Panat, ``Scientific machine learning in ecological systems: A study on the predator-prey dynamics,'' arXiv preprint arXiv:2411.06858, 2024.
	
	\bibitem{Li-2021}
	G. A. Maciel and R. Martinez-Garcia, ``Enhanced species coexistence in Lotka-Volterra competition models due to nonlocal interactions,'' \textit{J. Theor. Biol.}, vol. 530, p. 110872, 2021.
	
	\bibitem{Abrams-1983}
	P. A. Abrams, ``Arguments in favor of higher order interactions,'' \textit{Am. Nat.}, vol. 121, no. 6, pp. 887--891, 1983.
	
	\bibitem{Cui-SIS}
	S. Cui, F. Liu, L. Liang, H. Jardón-Kojakhmetov, and M. Cao, ``An SIS diffusion process with direct and indirect spreading on a hypergraph,'' \textit{Automatica}, vol. 158, p. 111035, 2025.
	
	\bibitem{Mayfield2017}
	M. M. Mayfield and D. B. Stouffer, ``Higher-order interactions capture unexplained complexity in diverse communities,'' \textit{Nat. Ecol. Evol.}, vol. 1, no. 3, p. 0062\%, 2017.
	
	\bibitem{Carletti2020}
	T. Carletti, D. Fanelli, and S. Nicoletti, ``Dynamical systems on hypergraphs,'' \textit{J. Phys. Complexity}, vol. 1, no. 3, p. 035006, 2020.
	
	\bibitem{Battiston2021}
	F. Battiston, E. Amico, A. Barrat, et al., ``The physics of higher-order interactions in complex systems,'' \textit{Nat. Phys.}, vol. 17, no. 10, pp. 1093--1098, 2021.
	
	\bibitem{Cui-DT2024}
	S. Cui, G. Zhang, H. Jardón-Kojakhmetov, and M. Cao, ``On discrete-time polynomial dynamical systems on hypergraphs,'' \textit{IEEE Control Syst. Lett.}, vol. 8, pp. 1078--1083, 2024.
	
	\bibitem{Cui2025Metzler}
	S. Cui, G. Zhang, H. Jardón-Kojakhmetov, and M. Cao, ``On Metzler positive systems on hypergraphs,'' \textit{IEEE Trans. Control Netw. Syst.}, vol. 12, no. 2, pp. 345--356, 2025.
	
	\bibitem{Letten-2019}
	A. D. Letten and D. B. Stouffer, ``The mechanistic basis for higher-order interactions and non-additivity in competitive communities,'' \textit{Ecol. Lett.}, vol. 22, no. 3, pp. 423--436, 2019.
	
	\bibitem{Sidhom-2024}
	L. Sidhom and T. Galla, ``Higher-order interactions in random Lotka-Volterra communities,'' arXiv preprint arXiv:2409.10990, 2024.
	
	\bibitem{Sales-Pardo-2023}
	M. Sales-Pardo, A. Mariné-Tena, and R. Guimerà, ``Hyperedge prediction and the statistical mechanisms of higher-order and lower-order interactions in complex networks,'' \textit{Proc. Natl. Acad. Sci. U.S.A.}, vol. 120, no. 50, e2303887120, 2023.
	
	\bibitem{Singh-2021}
	P. Singh and G. Baruah, ``Higher order interactions and species coexistence,'' \textit{Theor. Ecol.}, vol. 14, no. 1, pp. 71--83, 2021.
	
	\bibitem{Gibbs-2022}
	T. Gibbs, S. A. Levin, and J. M. Levine, ``Coexistence in diverse communities with higher-order interactions,'' \textit{Proc. Natl. Acad. Sci. U.S.A.}, vol. 119, no. 43, e2205063119, 2022.
	
	\bibitem{cui2024lotka}
	S. Cui, Q. Zhao, G. Zhang, H. Jardon-Kojakhmetov, and M. Cao, ``On the analysis of a higher-order Lotka-Volterra model: An application of S-tensors and the polynomial complementarity problem,''  IEEE Transactions on Automatic Control, 2025.
	
	\bibitem{gallo1993directed}
	G. Gallo, G. Longo, S. Pallottino, and S. Nguyen, ``Directed hypergraphs and applications,'' \textit{Discrete Appl. Math.}, vol. 42, nos. 2--3, pp. 177--201, 1993.
	
	\bibitem{Ding2013}
	W. Ding, L. Qi, and Y. Wei, ``M-tensors and nonsingular m-tensors,'' \textit{Linear Algebra Appl.}, vol. 439, no. 10, pp. 3264--3278, 2013.
	
	\bibitem{Cui2024}
	S. Cui, G. Zhang, H. Jardon-Kojakhmetov, and M. Cao, ``On metzler positive systems on hypergraphs,'' arXiv preprint arXiv:2401.03652, 2024.
	
	\bibitem{wangxuezhou2019}
	X. Wang, M. Che, and Y. Wei, ``Existence and uniqueness of positive solution for $H^{+}$-tensor equations,'' \textit{Appl. Math. Lett.}, vol. 98, pp. 191--198, 2019.
	
	\bibitem{Qi2005}
	L. Qi, ``Eigenvalues of a real supersymmetric tensor,'' \textit{J. Symb. Comput.}, vol. 40, no. 6, pp. 1302--1324, 2005.
	
	
	
	
	\bibitem{Cowan68}
	J. D. Cowan, ``Statistical mechanics of nervous nets,'' in \textit{Neural Networks: Proceedings of the School on Neural Networks - June}, Berlin, Germany: Springer Berlin Heidelberg, 1968.
	
	\bibitem{Cowan70}
	J. D. Cowan, ``A statistical mechanics of nervous activity,'' in \textit{Lectures on Mathematics in the Life Sciences}, vol. 2, pp. 1--57, 1970.		
	
\end{thebibliography}
\end{document}